\documentclass[a4paper,USenglish,cleveref,autoref,thm-restate]{lipics-v2021}
\bibliographystyle{plainurl}


\usepackage[dvipsnames]{xcolor}
\usepackage[most]{tcolorbox}
\usepackage{lipsum}  
\usepackage[utf8]{inputenc}
\usepackage{tcolorbox}
\usepackage{amssymb,amsmath,amsthm}
\usepackage{xspace}
\usepackage{todonotes}
\usepackage{colortbl}

\usepackage{tcolorbox}
\tcbuselibrary{skins} 
\usepackage{xcolor}

\nolinenumbers

\usepackage{mathtools}
\DeclarePairedDelimiter\ceil{\lceil}{\rceil}

\usepackage[T1]{fontenc}
\definecolor{anti-flashwhite}{rgb}{0.95, 0.95, 0.96}

\newcommand{\sfast}{{\sc{Subset-FAST}}\xspace}
\newcommand{\sfastr}{{\sc{Subset-FAST-Rev}}\xspace}
\newcommand{\csfast}{{\sc{Colored Subset-FAST}}\xspace}

\newcommand{\OO}{{\mathcal{O}}\xspace}
\newcommand{\FF}{{\mathcal{F}}\xspace}

\newcommand{\INN}{{\sf IN}\xspace}
\newcommand{\OUTT}{{\sf OUT}\xspace}
\newcommand{\relevant}{{\texttt {Relevant}}\xspace}
\newcommand{\irrelevant}{{\texttt {Irrelevant}}\xspace}

\DeclareMathOperator{\sfas}{Sfas}

\newcommand{\CC}{{\mathcal{C}}}

\newcommand{\type}{{\mathtt{Type}}\xspace}

\newcommand{\rev}{{\mathsf{rev}}}
\newcommand{\arc}{{\mathsf{arc}}}
\newcommand{\scc}{{\textsf{SCC}}\xspace}
\newcommand{\indeg}{{\textsf{in-deg}}\xspace}
\newcommand{\spclorder}{{\mathsf{S} \text{-}\textsf{topological ordering}}\xspace}
\newcommand{\spclorderi}{{\mathsf{S_i} \text{-}\textsf{topological ordering}}\xspace}

\newcommand{\no}{\textsf{No}\xspace} 
\newcommand{\yes}{\textsf{Yes}\xspace}
\newcommand{\eqv}{\texttt{EqvCls}\xspace}
\newcommand{\surearc}{\texttt{sure arc}\xspace}
\newcommand{\surearcs}{\texttt{sure arcs}\xspace}


\newtheorem{reduction rule}{Reduction Rule}
\newtheorem*{reduction rule*}{Reduction Rule}

\definecolor{tyred}{rgb}{0.8, 0.0, 0.0}
\colorlet{mix}{red!50!black}

\providecommand{\keywords}[1]
{
	\textbf{\textit{Keywords---}} #1
}
\usepackage{hyperref}
\hypersetup{
	pdfnewwindow=true,      
	colorlinks=true,       
	linkcolor=mix,          
	citecolor=blue,        
	urlcolor=blue          
}

\newcommand{\defparprob}[4]{
	\vspace{1mm}
	\noindent\fbox{
		\begin{minipage}{0.96\textwidth}
			\begin{tabular*}{\textwidth}{@{\extracolsep{\fill}}lr} #1  & {\bf{Parameter:}} #3 \\ \end{tabular*}
			{\bf{Input:}} #2  \\
			{\bf{Task:}} #4
		\end{minipage}
	}
	\vspace{1mm}
}



\usepackage{algorithm}
\usepackage{algorithmicx}
\usepackage{algpseudocode}
\usepackage{mdframed}

\algtext*{EndWhile}
\algtext*{EndIf}
\algtext*{EndFor}




\title{A Quadratic Vertex Kernel and a Subexponential Algorithm for {\sc Subset-FAST}} 

\titlerunning{A Quadratic Vertex Kernel and a Subexponential Algorithm for Subset  FAST} 

\author{Satyabrata Jana}{University of Warwick, Coventry, UK}{satyamtma@gmail.com}{https://orcid.org/0000-0002-7046-0091}{Supported by the Engineering and Physical Sciences Research Council (EPSRC)
	via the project MULTIPROCESS (grant no. EP/V044621/1)}

\author{Lawqueen Kanesh}
{Indian Institute of Technology, Jodhpur, India}{lawqueen@iitj.ac.in}{
	https://orcid.org/0000-0001-9274-4119}{}

\author{Madhumita Kundu}{University of Bergen, Norway}{kundumadhumita.134@gmail.com}{https://orcid.org/0000-0002-8562-946X}{}

\author{Daniel Lokshtanov}{University of California, Santa Barbara, CA, USA}{daniello@ucsb.edu}{https://orcid.org/0000-0002-3166-9212}{}

\author{Saket Saurabh}{The Institute of Mathematical Sciences, HBNI, Chennai, India  \and University of Bergen, Norway }{saket@imsc.res.in}{https://orcid.org/0000-0001-7847-6402}{Supported by the European Research Council (ERC) under the European Union's Horizon 2020 research and innovation programme (grant agreement No. 819416); and he also acknowledges the support of Swarnajayanti Fellowship grant DST/SJF/MSA-01/2017-18.}

\authorrunning{S.~Jana, L.~Kanesh, M.~Kundu, D.~Lokshtanov, and S.~Saurabh} 






\Copyright{Jane Open Access and Joan R. Public} 

\ccsdesc[100]{{Theory of computation $\rightarrow$ Fixed parameter tractability}} 

\keywords{Feedback arc set, FPT, Kernelization} 

\category{} 

\relatedversion{} 



\acknowledgements{I want to thank \dots}

\nolinenumbers 

\EventEditors{John Q. Open and Joan R. Access}
\EventNoEds{2}
\EventLongTitle{42nd Conference on Very Important Topics (CVIT 2016)}
\EventShortTitle{CVIT 2016}
\EventAcronym{CVIT}
\EventYear{2016}
\EventDate{December 24--27, 2016}
\EventLocation{Little Whinging, United Kingdom}
\EventLogo{}
\SeriesVolume{42}
\ArticleNo{23}

\begin{document}
	
	\maketitle

	\begin{abstract}
		In the {\sc Subset Feedback Arc Set in Tournaments} (\sfast) problem we are given as input a tournament $T$ with a vertex set $V(T)$ and an arc set $A(T)$, along with a terminal set $S \subseteq V(T)$, and  an integer $ k$. The objective is to determine whether there exists a set $ F \subseteq A(T) $ with $|F| \leq k$ such that the resulting graph $T-F $ contains no cycle that includes any vertex of $S$. When $S=V(T)$ this is the classic {\sc Feedback Arc Set in Tournaments (\sc FAST)} problem.  
		We obtain the first polynomial kernel for this problem parameterized by the solution size. More precisely, we obtain an algorithm that, given an input instance $(T, S, k)$, produces an equivalent instance $(T',S',k')$ with $k'\leq k$ and $V(T')=\OO(k^2)$.
		
		It was known that {\sc FAST} admits a simple quadratic vertex kernel and a non-trivial linear vertex kernel. However, no such kernel was previously known for \sfast. 
		Our kernel employs variants of the most well-known reduction rules for {\sc FAST} and introduces two new reduction rules to identify irrelevant vertices. 
		As a result of our kernelization, we also obtain the first sub-exponential time {\sf FPT} algorithm for \sfast.

	\end{abstract}

	\section{Introduction}\label{sec:intro}

	In this paper we focus on a subset variant of the well-studied {\sc Directed Feedback Arc Set ({\sc DFAS})} problem in tournaments. In {\sc DFAS}, the input consists of a directed graph \(D\) and a positive integer \(k\). The goal is to determine whether there exists a set \(F \subseteq A(D)\) of size at most \(k\) such that \(D - F\) forms a directed acyclic graph (DAG). In the subset variant of this problem, referred to as {\sc Subset-DFAS}, we are additionally given a subset \(S \subseteq V(D)\) (known as a {\em terminal set}) and aim to determine whether there exists a set \(F \subseteq A(D)\) of size at most \(k\) such that \(D - F\) contains no cycles that include any vertex from \(S\). Notably, when \(S = V(D)\), {\sc Subset-DFAS} coincides with {\sc DFAS}.
	
	If we substitute arcs with vertices in the definitions above, we obtain the problems of {\sc Directed Feedback Vertex Set ({\sc DFVS})} and {\sc Subset-DFVS}. All four problems mentioned are known to be NP-complete~\cite{DBLP:books/fm/GareyJ79} and admit a polynomial-time approximation algorithm with a factor of \(\OO(\log n \log \log n)\) \cite{DBLP:conf/ipco/EvenNSS95}. Additionally, they have fixed-parameter tractable (FPT) algorithms with running times of \(2^{\OO(k^3)} n^{\OO(1)}\) \cite{DBLP:journals/talg/ChitnisCHM15,DBLP:journals/jacm/ChenLLOR08}. However, it remains unknown whether any of {\sc DFAS}, {\sc Subset-DFAS}, {\sc DFVS}, or {\sc Subset-DFVS} admit a constant factor approximation, a \(c^k n^{\OO(1)}\) time FPT algorithm, or a polynomial kernel. 
	The lack of progress on these problems has motivated research into their special instances.

	The algorithmic complexity of \textsc{DFAS} and \textsc{DFVS} changes significantly when the input is limited to tournaments—directed graphs where there is precisely one directed arc between each pair of vertices, or equivalently, orientations of complete undirected graphs. In fact, the problems \textsc{Feedback Arc Set in Tournament ({\sc FAST})} and \textsc{Feedback Vertex Set in Tournament ({\sc FVST})} are well-established problems in their own right. Both problems are known to be NP-complete \cite{DBLP:journals/siamdm/Alon06,DBLP:journals/cpc/CharbitTY07,DBLP:conf/wg/Stamm90}. While \textsc{FAST} has a polynomial-time approximation scheme \cite{DBLP:conf/stoc/Kenyon-MathieuS07}, \textsc{FVST} has a polynomial-time 2-approximation algorithm, which is optimal \cite{DBLP:journals/talg/LokshtanovMMPPS21}. In the context of parameterized complexity, \textsc{FAST} offers a subexponential time algorithm with a runtime of \(2^{\mathcal{O}(\sqrt{k})} n^{\mathcal{O}(1)}\) \cite{DBLP:conf/icalp/AlonLS09,DBLP:journals/corr/abs-0911-5094,DBLP:conf/esa/FominP13}, whereas \textsc{FVST} has an algorithm that runs in \(1.618^k n^{\mathcal{O}(1)}\) time \cite{DBLP:conf/stacs/KumarL16}. Concerning kernelization, \textsc{FAST} admits a linear vertex kernel \cite{DBLP:journals/jcss/BessyFGPPST11}, whereas \textsc{FVST}   possess  an almost linear  vertex kernel \cite{DBLP:conf/soda/BessyBTW23,DBLP:journals/jcss/PaulPT16}. Furthermore, it's crucial to note that solving \textsc{FAST} has significant implications across various applications, including ranking systems, sports competitions, voting systems, and preference aggregation \cite{DBLP:journals/4or/CharonH07,DBLP:journals/jair/CohenSS99,DBLP:conf/www/DworkKNS01,dwork2001rank}. In these scenarios, addressing cyclic inconsistencies is vital for generating a meaningful linear ordering.
	
	A natural question that has been explored recently is what happens when we consider the generalization of \textsc{FAST} and \textsc{FVST} to \textsc{Subset-FAST} and \textsc{Subset-FVST}, respectively. Can we extend the results of the non-subset variants to their subset variants? It is not difficult to show that both of these problems admit algorithms with a running time of \(3^k n^{\mathcal{O}(1)}\) by leveraging their connection to the \textsc{3-Hitting Set} problem. Bai and Xiao \cite{DBLP:journals/tcs/BaiX23} made significant improvements by designing a \(2^{k + o(k)}  n^{\mathcal{O}(1)}\) time algorithm for \textsc{Subset-FVST}. More recently, Jana et al. \cite{DBLP:conf/iwpec/JanaKK024} developed an algorithm for \textsc{Subset-FVST} with the same running time as that of \textsc{FVST}. Specifically, the algorithm runs in \(1.618^k n^{\mathcal{O}(1)}\) time. Furthermore, it is possible to construct a kernel with \(\mathcal{O}(k^2)\) vertices for \textsc{Subset-FVST} by utilizing its relationship with the \textsc{3-Hitting Set} problem.
	
	
	In this paper, we  consider \sfast and derive results that go beyond what can be attained through its connection to the {\sc $3$-Hitting Set} problem. Formally, we will consider the following problem.

		


	
	\defparprob{{\sc Subset Feedback Arc Set in Tournaments} (\sfast)}{A tournament $T=(V, A)$, a terminal set $S \subseteq V(T)$, and  an    integer $ k$.}{$k$}{Find  $ F \subseteq A $ with $|F| \leq k$ such that   $ T-F $ has no cycle containing a vertex of  $S$. }
	
	\subsection{Our Results and Methods}
	In this paper we obtain the first polynomial kernel for \sfast  parameterized by the solution size. More precisely, we obtain a polynomial time algorithm that, given an input instance $(T, S, k)$, produces an equivalent instance $(T',S',k')$ with $k'\leq k$ and $V(T')=\OO(k^2)$. In particular, we obtain the following result.

	\begin{restatable}{theorem}{faskernel} \label{theo:fastkernel}
		\sfast admits  a kernel with  $ \OO(k^2) $ vertices.     
	\end{restatable}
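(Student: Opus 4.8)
The plan is to recast \sfast as the problem of destroying all \emph{bad triangles}, where a bad triangle is a directed triangle containing at least one vertex of $S$, and then to bound both the number of such triangles and the number of vertices meeting them. The starting point is the classical fact that in a tournament a vertex lies on a directed cycle if and only if it lies on a directed triangle: taking a shortest cycle through the vertex, the chord guaranteed by the tournament property either shortcuts the cycle or produces a triangle. Consequently, in \emph{any} tournament obtained from $T$ by reversing a set of arcs, a cycle through $S$ exists if and only if a bad triangle does. The catch is that the problem is phrased via arc \emph{deletion}, and $T-F$ is not a tournament, so the cycle/triangle equivalence fails there. I would therefore first show that the deletion and reversal versions share the same optimum: one direction is immediate, and for the converse, given a deletion solution $F$, I would topologically order the strong components of $T-F$ (each $S$-vertex forming a singleton component) and reverse only the cross-component backward arcs of $T$, which form a subset of $F$; this leaves every $S$-vertex with all its in-neighbours before it and all its out-neighbours after it, killing every cycle through $S$. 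After this, the task becomes exactly: reverse at most $k$ arcs of the tournament $T$ so that no bad triangle survives.

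Next I would set up the counting via a high-multiplicity rule. If an arc $e=(u,v)$ lies in at least $k+1$ bad triangles, then $e$ must be reversed by every solution: each such triangle is $u\to v\to w\to u$ with a distinct apex $w$, hence contributes a distinct pair of other arcs $vw,wu$, so not reversing $e$ would force more than $k$ reversals. We reverse $e$ and decrease $k$, preserving the tournament structure. Once this rule is exhausted, every arc lies in at most $k$ bad triangles, and since in a yes-instance a solution of size at most $k$ must contain an arc of each bad triangle, the number of bad triangles is at most $k\cdot k=k^2$. As each bad triangle spans three vertices, at most $3k^2$ vertices meet a bad triangle; if, after the irrelevant-vertex rules below, more vertices remain, we may safely output a trivial no-instance.

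The heart of the kernel, and the step I expect to be the main obstacle, is designing reduction rules that delete every vertex meeting no bad triangle, so that the $3k^2$ bound covers all of $V(T')$. The naive rule ``delete any non-terminal vertex in no bad triangle'' is unfortunately \emph{not} safe: although such a vertex $v$ is in no bad triangle of $T$, a solution for $T-v$ may reverse an arc between two neighbours of $v$ and thereby \emph{create} a bad triangle through $v$ when lifted back to $T$ (precisely when $v$ has an in-neighbour and an out-neighbour $a,b$ with $a\in S$, and the solution reverses $b\to a$). The correct notion of an irrelevant vertex must therefore be robust to such reversals. I expect to obtain it through an ordering/block argument: fix a reference ordering coming from an approximate solution, group the non-terminal vertices that are pinned by no bad triangle into blocks of consecutive vertices lying between terminal-carrying positions, and argue that all but a bounded number of vertices of each block can be removed because some optimal solution never reverses an arc inside such a block and hence never drags these vertices into a bad triangle. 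Proving that these two deletion rules are safe -- that an optimal solution can always be chosen to leave the deleted vertices untouched -- is the most delicate point, and it is where the tournament structure (every missing chord being present with a fixed orientation) must be exploited most carefully. Combining the safe deletion rules with the $3k^2$ bound then yields a kernel with $\OO(k^2)$ vertices.
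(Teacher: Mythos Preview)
Your proposal correctly handles the deletion/reversal equivalence and the high-multiplicity rule, and the count ``at most $k^2$ bad triangles, hence at most $3k^2$ vertices meeting one'' is valid in a \yes instance. The gap is exactly where you locate it yourself: you do not prove that non-terminal vertices outside every bad triangle can be removed, and your ``ordering/block'' sketch is not an argument. The difficulty is real. A non-$S$ vertex $v$ can lie in \emph{no} $S$-triangle of $T$ and yet be indispensable: there can be arbitrarily many such vertices sitting between two terminals, and a solution for $T-v$ may reverse an arc $(s,u)$ with $s\in S$ (or an arc between $S_1$ and $S_2$) that creates an $S$-triangle through $v$ in $T$. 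Nothing in your block description prevents this, because ``no optimal solution reverses an arc inside such a block'' is precisely what you would need to show, and it is false in general without further structure (the paper needs the notion of \surearcs\ and a quite involved analysis to make anything like it true). As written, the proposal is a plan for the easy half of the kernel plus an acknowledgement that the hard half is missing.

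The paper's route is genuinely different from yours. It does \emph{not} attempt to delete every vertex avoiding all $S$-triangles. Instead it (i) bounds $|S|$ by $\OO(k)$ via the same sunflower rule you use, then a conflict-packing/safe-partition argument; (ii) partitions $V(T)\setminus S$ into equivalence classes $\eqv_T^S[X]$ according to in-neighbourhood in $S$, and shows that a \yes instance has at most $|S|+k+1=\OO(k)$ non-empty classes; and (iii) inside each large class $Z$ identifies a set $Z_{\relevant}$ of size $\OO(k)$ (via a ``\surearc'' lemma and an arc-swapping trick) such that any vertex of $Z\setminus Z_{\relevant}$ is provably irrelevant. Steps (ii)--(iii) replace your block heuristic with a concrete structural decomposition, and the irrelevant-vertex proof hinges on showing that \emph{all} arcs from such a vertex to $S\cup R$ are \surearcs, which is what blocks the counterexample above. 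If you want to salvage your approach, the missing idea is precisely this: you must first bound $|S|$, then exhibit, for the vertex you delete, $k{+}1$ arc-disjoint $S$-paths certifying each incident $S$-arc, so that no solution of size $\le k$ can create a new $S$-triangle through it.
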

	
	\begin{figure}[t]
		\begin{center}
			\includegraphics[scale=0.65]{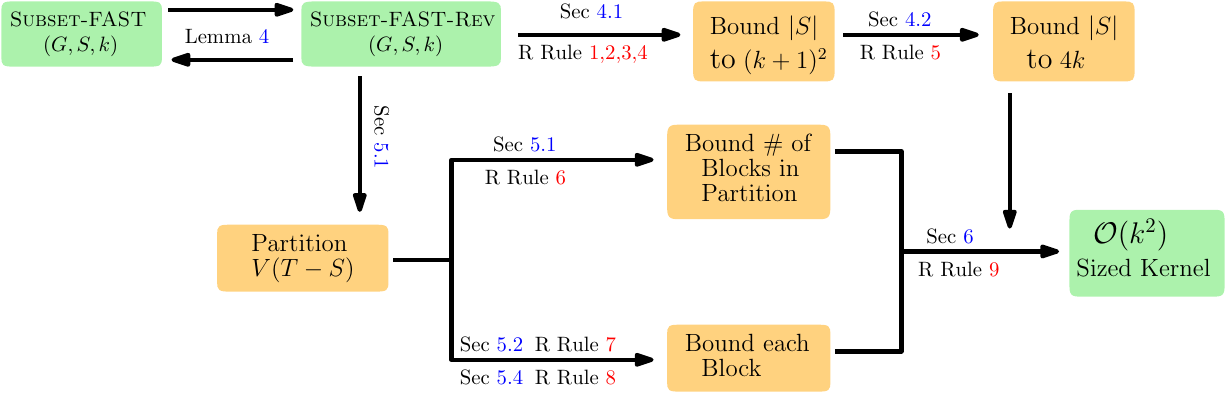}
		\end{center}
		\caption{A summary of the steps of our kernelization algorithm.} 
		\label{fig:flowchart}
	\end{figure}
	
	
	A schematic diagram showing the main steps and reduction rules of our algorithm is shown in Figure~\ref{fig:flowchart}.  We first show that the problem of deleting $k$ arcs to eliminate all directed $S$-cycles (cycle containing some vertices from $S$) is equivalent to the problem of reversing $k$ arcs to achieve the same outcome (\Cref{lem:minimal_equivalence}). In particular our problem is equivalent to the following reversal problem. Given a directed graph $G$ and a subset $F \subseteq A(G)$ of arcs, we denote $G \circledast F $ to be the directed graph obtained from $G$ by reversing all the arcs of $F$. 
	

	\defparprob{{\sc Subset Feedback Arc Set Reversal }(\sfastr)}{A tournament $T=(V(T),A(T))$, a vertex set $S \subseteq V(T)$,  and  an    integer $ k$.}{$k$}{Find  $ F \subseteq A(T)$ with $|F| \leq k$ such that   $ T\circledast F $ has no $S$-cycle.}

	\Cref{lem:minimal_equivalence} allows us to focus on the problem \sfastr. Next, we establish a bound on the number of terminal vertices \(S\) (\Cref{subsec:bound_terminals}). A straightforward quadratic bound can be derived by applying reduction rules similar to those used in obtaining a quadratic kernel for \textsc{FAST} \cite{DBLP:books/sp/CyganFKLMPPS15}. Specifically, we implement two simple reduction rules: (i) remove a vertex \(v\) from \(S\) that does not participate in any triangles, and (ii) if an arc appears in more than \(k\) distinct \(S\)-triangles, we reverse the arc and decrease the budget \(k\) by \(1\) (classic {\em sunflower reduction rule}). To achieve a linear bound on the size of \(S\), we closely follow the kernelization algorithm for \textsc{FAST} as outlined by Paul et al. \cite{DBLP:journals/jcss/PaulPT16}, and also by Bessy et al. \cite{DBLP:journals/jcss/BessyFGPPST11}. In particular, we utilize the \emph{conflict packing} technique and adapt the notion of a \emph{safe partition} to secure a linear bound on the terminals.
	
	After bounding the vertices in \(S\) by \(\OO(k)\), the algorithm shifts its focus to bounding the non-\(S\) vertices in \cref{subsec:partition_nont_eqvcls}. To accomplish this, we first partition the non-\(S\) vertices into equivalence classes based on their neighborhoods in \(S\). Specifically, all non-\(S\) vertices sharing the same set of in-neighbors and out-neighbors in \(S\) belong to the same equivalence class. A loose upper bound on the number of such equivalence classes is \(2^{|S|}\). However, if the input instance is a \yes instance, the number of equivalence classes is bounded by \(\OO(k)\).
	
	To support this observation, we introduce the concept of a \(\spclorder\) (\cref{def:ordered_part}). Intuitively, a \(\spclorder\) \(\Pi\) with respect to \(S\) consists of an ordered partition \(V_1 \uplus V_2 \uplus \ldots \uplus V_{\ell}\) of the vertices of \(T\), where parts containing \(S\)-vertices are singletons, and the parts that do not contain \(S\)-vertices form a strongly connected component in \(T\). Furthermore, for every pair \(i < j \in [\ell]\), all vertices of \(V_i\) precede all vertices of \(V_j\) in \(\Pi\), and every arc \((u,v)\) exists in \(T\), where \(u \in V_i\) and \(v \in V_j\). 
	
	We demonstrate that a tournament \(T\) is \(S\)-acyclic if and only if it admits such a \(\spclorder\). It follows that if \(T\) is an \(S\)-acyclic tournament, then each part \(V_i\) in the \(\spclorder\) that contains only non-\(S\) vertices represents a neighborhood equivalence class defined based on neighbors in \(S\). This further implies that the number of equivalence classes in an \(S\)-acyclic tournament is bounded by \(|S| + 1\). Consequently, in the case of a \yes instance, we can have at most \(k\) additional equivalence classes in \(T\). Therefore, we obtain a bound of \(|S| + k + 1 = \OO(k)\) on the number of equivalence classes in any tournament.
	
	Finally, we concentrate on each equivalence class, denoted by \(Z\). In Sections \ref{subsec:quadratic_bound} and \ref{subsec:linear_bound_eqclasses}, we demonstrate that if the size of the equivalence class is large, we can identify a significant subset of vertices, referred to as \(Z_{\irrelevant}\), such that ``most'' of the arcs incident to it can never be part of the solution (we call such arcs \surearcs). Furthermore, the only arcs that could potentially belong to \(F\) are those adjacent to at most \(\OO(k^2)\) vertices, which we designate as \(Z_{\relevant}\).
	In a sense, the vertices in \(Z_{\relevant}\) form a rigid skeleton (as discussed in \cref{subsec:quadratic_bound}). We can establish that if we refrain from removing the vertices in \(Z_{\relevant}\), then a vertex in \(Z_{\irrelevant}\) can be safely removed.
	
	
	One reason why the size of \( Z_{\relevant} = \mathcal{O}(k^2) \) is that it contains in-neighbors and out-neighbors of all vertices of $Z$ that have either an in-degree at most $k$ or an out-degree at most $k$ in $T[Z]$. It is known that the number of such vertices are upper bounded by $\OO(k)$ and hence the size of \( Z_{\relevant} = \mathcal{O}(k^2) \). We keep all these in-neighbors and out-neighbors in \( Z_{\relevant} \) because they validate the redundancy of the vertices in \( Z_{\irrelevant} \). However, the redundancy of a vertex in \( Z_{\irrelevant} \) depends on the \emph{number of in-neighbors} and \emph{number of out-neighbors} in \( Z_{\relevant} \), not on the actual neighbors themselves. We leverage this observation and apply an ``arc swapping argument'' (see \Cref{fig:swap} for an illustration) to obtain an equivalent instance in which, for any vertex in \( Z_{\irrelevant} \), the number of in-neighbors and number of out-neighbors in \( Z_{\relevant} \) remains the same, but the size of \( Z_{\relevant} \) is reduced to \( \mathcal{O}(k) \). This, together with our irrelevant vertex rule (\Cref{redrule:bigtype}), gives the desired kernel. 
	
	The following is the second result of our paper.
	
	\begin{restatable}{theorem}{sfastheorem} \label{theo:fast}
		\sfast can be solved in  time $ 2^{\OO(\sqrt{k} \log k)} +   n^{\OO(1)} $.
	\end{restatable}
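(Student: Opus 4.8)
The plan is to kernelize first and then run a subexponential-time dynamic program on the kernel; the kernelization is exactly what converts a $\log n$ factor in the exponent into the desired $\log k$. Concretely, I would first invoke \Cref{theo:fastkernel} to compute, in time $n^{\OO(1)}$, an equivalent instance $(T', S', k')$ with $k' \le k$ and $N := |V(T')| = \OO(k^2)$; this polynomial-time preprocessing accounts for the additive $n^{\OO(1)}$ term. All remaining work is performed on the kernel, so every subsequent running time is measured in $N = \OO(k^2)$, whence $\log N = \OO(\log k)$.

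On the kernel I would recast the problem as an ordering problem. By \Cref{lem:minimal_equivalence}, deleting at most $k$ arcs to destroy all $S$-cycles is equivalent to the reversal problem \sfastr: find $F \subseteq A(T')$ with $|F| \le k$ such that $T' \circledast F$ is $S$-acyclic. Using the characterization that a tournament is $S$-acyclic if and only if it admits an \(\spclorder\) (\Cref{def:ordered_part}), the task becomes: decide whether there is an ordered partition $V_1 \uplus \cdots \uplus V_\ell$ of $V(T')$ in which the $S$-parts are singletons and the non-$S$-parts are strongly connected, such that the number of arcs violating the left-to-right orientation between distinct parts (the arcs we must reverse) is at most $k$. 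Since cycles internal to a non-$S$ part are harmless, the cost to minimise is precisely the number of backward inter-part arcs, and we accept iff this minimum is at most $k$.

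To solve this minimisation subexponentially I would adapt the banded dynamic program of Alon, Lokshtanov and Saurabh. The engine is a structural lemma asserting that a \yes-instance admits an optimal \(\spclorder\) in which every reversed (backward) arc is short, i.e.\ its two endpoints lie within $\OO(\sqrt{k})$ positions of one another; intuitively, a long backward arc forces many reversals among the vertices sitting between its endpoints, so too many long backward arcs would violate the budget $k$. Granting this, one builds the ordering left to right while maintaining only a sliding window of the last $\OO(\sqrt{k})$ placed vertices together with the current part boundary; because no backward arc reaches outside this window, the cost increment at each step depends only on the window contents. The number of window states is at most $\binom{N}{\OO(\sqrt{k})} \cdot \OO(\sqrt{k})! = N^{\OO(\sqrt{k})}$, so the dynamic program runs in $N^{\OO(\sqrt{k})} = 2^{\OO(\sqrt{k}\,\log N)} = 2^{\OO(\sqrt{k}\,\log k)}$ time, using $N = \OO(k^2)$. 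Combined with the kernelization this yields the claimed $2^{\OO(\sqrt{k}\log k)} + n^{\OO(1)}$ bound.

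The main obstacle is transporting the ``short backward arcs'' lemma and the windowed DP from plain \textsc{FAST} to the \(\spclorder\) setting, where two features must be reconciled with the window: the non-$S$ parts are unordered strongly connected blocks rather than single vertices, and only $S$-cycles (not all cycles) incur cost. I would handle this by letting the DP state additionally record whether the window currently sits inside an open non-$S$ block, and by charging reversals only to inter-part arcs so that arcs internal to a non-$S$ block are ignored; the key technical point to verify is that treating each non-$S$ block as a single unit does not lengthen backward arcs beyond $\OO(\sqrt{k})$, which should follow because such a block is paid for only through its boundary arcs and can therefore be processed without expanding the window. Establishing the bounded-length lemma in this weighted, block-structured form is the crux; once it is in place, the remaining DP bookkeeping is routine.
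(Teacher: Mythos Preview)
Your proposal diverges from the paper's proof and leaves its central step unproved. First, a misattribution: the Alon--Lokshtanov--Saurabh algorithm is \emph{not} a banded/short-backward-arc DP but \emph{chromatic coding}, and that is exactly what the paper uses. After kernelizing to $N=\OO(k^2)$ vertices, the paper colours $V(T')$ with $q=\Theta(\sqrt{k})$ colours so that (with good probability, later derandomised via universal colouring families, \Cref{prop:uni}) some optimal $S$-fas is colourful, i.e.\ has no monochromatic arc. It then contracts the maximal strongly connected components \emph{inside each colour class}; since a colourful solution leaves each $T[V_i]$ $S$-acyclic, every colour class now carries a fixed $\spclorderi$ $\sigma_i$. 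The DP state is simply a vector $\hat p=(a_1,\dots,a_q)$ recording a prefix of each $\sigma_i$, giving $\prod_i n_i \le k^{\OO(q)}=2^{\OO(\sqrt{k}\log k)}$ states, and the recurrence (\Cref{lem:DP}) peels off either the last $S$-vertex or the last maximal non-$S$ segment. The colour classes supply, \emph{in advance}, $q$ total orders whose merge is the sought ordering; the DP never has to discover block boundaries on the fly.

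The gap in your route is precisely what you call ``the crux'', and it is not a detail. The short-backward-arc lemma for \textsc{FAST} rests on a fixed reference ordering (e.g.\ by in-degree) from which no vertex strays by more than $\OO(\sqrt{k})$ in an optimal solution; you give no such reference for the \(\spclorder\), and none is apparent. Concretely, a single non-$S$ block in an optimal \(\spclorder\) can contain $\Omega(k)$ vertices (an entire equivalence class of size up to $6k+6$, whose internal arcs cost nothing), so a sliding window of $\OO(\sqrt{k})$ vertices cannot contain it, and the DP cannot decide which non-$S$ vertices enter the current block without the state exploding. ``Treat each block as a single unit'' does not help, because the blocks are part of the unknown solution and are not available up front. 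Until you produce a reference ordering and a displacement bound that tolerates large non-$S$ blocks, the windowed DP is not an algorithm; the paper's colour-class DP sidesteps this obstacle entirely by fixing the intra-class orderings before the DP starts.
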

	
	
	We follow the approach of Alon, Lokshtanov, and Saurabh \cite{DBLP:conf/icalp/AlonLS09}, who developed a sub-exponential time {\sf FPT} algorithm for {\sc FAST}. However, due to the generality of our problem, we must  deviate from their method while doing the dynamic programming step. Our algorithm builds upon the chromatic coding technique introduced in \cite{DBLP:conf/icalp/AlonLS09}, integrating it with a dynamic programming approach applied to the reduced instance obtained through \cref{theo:fastkernel}.

	\section{Preliminaries}\label{sec:preli}

	Let $[n]$ represent the set of integers $\{1,\ldots, n\}$. For a directed graph $G$, we denote the vertex set by $V(G)$ and the arc set by $A(G)$.  For a subset of vertices $X \subseteq V(G)$, we use the notation $G-X$ to denote the graph $G'=G[V \setminus X]$. For a subset $F$ of arcs $F \subseteq A(G)$, we use the notation $G-F$ to denote the graph $G'=(V(G), A(G) \setminus F)$. Further for an ease of notation, we use $G -v$ (resp. $G -e$) when $X = \{v\}$ (resp. $F = \{e\}$) is singleton. Similarly for a subset $F \subseteq V(G) \times V(G)$, we use the notion $G + F$ to denote the graph $G'=(V(G), A(G) \cup F)$. For a directed graph $G$, a vertex set $X \subseteq V(G)$ forms a strongly connected component (SCC) in $G$ if $X$ is a maximal set of vertices such that for any pair of vertices $ u,v \in X$, there is a directed path from $u$ to $v$ and from $v$ to $u$ in the induced subgraph $G[X]$. For a cycle $C$ in a directed graph, we use $A(C)$ to denote set of the arcs in the cycle $C$.

	A directed graph is called a tournament if there is an arc between every pair of vertices. For a tournament $T$, an arc set $F \subseteq A(T)$ is called a feedback arc set of $T$ if there is no directed cycle in the graph $T-F$. For a directed graph $G$ and $S \subseteq V(G)$, we call a vertex $v \in V(G)$, a $S$-vertex if $v \in S$. A cycle (resp. path) in $G$ is called $S$-cycle (resp. $S$-path) if the cycle (resp. path) contains some $S$-vertex. A $S$-triangle is a $S$-cycle with exactly three vertices. For  a directed graph $D$, and vertex subset $S \subseteq V(D)$, the graph $D$ is said to be $S$-acyclic if $D$ has no $S$-cycle. For a tournament $T$ and $S \subseteq V(T)$, an arc set $F\subseteq A(T)$ is called a $S$-feedback arc set (in short, $ S $-fas) of  $T$ if there is no directed $S$-cycle in $T-F$, in other words, $T-F$ is $S$-acyclic. 
	
	Given a directed graph $G$ and a subset $F \subseteq A(G)$ of arcs, we denote $G \circledast F $ to be the directed graph obtained from $G$ by reversing all the edges of $F$. That is,
	if $\rev(F) =  \{ (u,v) |  (v,u) \in F\}$, then for $G \circledast F $ the vertex set is $V (G)$ and the arc set is  $A(G \circledast F) =  (A(G)  \cup \rev(F )) \setminus F$. For  ease of notation, when the set $F = \{e\}$ is singleton, we use $\rev(e)$ and  $G \circledast e$.


	For a vertex $v \in V(G)$, we denote $\arc(v)$ to be the set of arcs incident to the vertex $v$ in $G$. For a pair of disjoint vertex sets $P, Q \subseteq V(G) $, let $\arc(P, Q)$ denote the set of all the arcs between $P$ and $Q$ in $A(G)$, more formally $\arc(P, Q) = A(G) \cap \{(u,v)~|~ u \in P, v \in Q~\text{or}~u \in Q, v \in P \}$. Whenever we say that "every arc in $\arc(P,Q)$ is from $P$ to $Q$", we mean that there in no arc $(u,v) \in \arc(P,Q)$ with $u \in Q, v \in P$. 
	
	

	\subparagraph{Kernelization.} A parameterized problem $ \Pi $ is a subset of $ \Gamma^* \times \mathbb{N}$ for some finite alphabet $ \Gamma $. An instance of a parameterized problem consists of $(X, k)$, where $k$ is called the parameter. The notion of kernelization is formally defined as follows. A kernelization algorithm, or in short, a kernelization, for a parameterized problem $ \Pi \subseteq \Gamma^* \times \mathbb{N}$ is an algorithm that, given $(X, k) \in  \Gamma^* \times \mathbb{N}$, outputs in time polynomial in $|X| + k$ a pair $ (X', k')\in  \Gamma^* \times \mathbb{N} $ such that (a) $ (X, k) \in \Pi$ if and only if $(X', k') \in \Pi$ and  (b)  $|x'|,|k| \leq g(k)$, where $ g $ is some computable function depending only on $ k $. The output of kernelization $  (X', k') $ is referred to as the kernel and the function $ g $ is referred to as the size of the kernel. If $ g(k) \in k^{\OO(1)}  $ , then we say that $ \Pi $ admits a polynomial kernel.  We refer to the monographs \cite{DBLP:series/mcs/DowneyF99,DBLP:series/txtcs/FlumG06,DBLP:books/ox/Niedermeier06} for a detailed study  of the area of kernelization.
	
	Our algorithms make use of reduction rules which transform one instance of a
	problem to another instance of the same problem. We use $(T, S, k)$ to represent
	the instance given as input to each reduction rule, and $(T',S',k')$ to represent
	the (modified) instance output by the rule. We say that a reduction rule is
	{\em sound} if for every input instance $(T, S, k)$ the rule outputs an equivalent instance $(T,S',k')$. We update $T \rightarrow T', S \rightarrow S', k \rightarrow k'$
	to get the input instance $(T',S',k')$
	for further processing. We say that an instance $(T, S, k)$ is reduced with respect to a reduction rule  if none of the conditions of the rule is applicable    to $(T, S, k)$.

	\section{Characterization of \sfast as Arcs Reversal Problem}
	
	
	In this section, we demonstrate that the problem of deleting $k$ arcs to eliminate all directed $S$-cycles is equivalent to the problem of reversing $k$ arcs to achieve the same outcome. In particular our problem is equivalent to the following reversal problem. 
	
	
	\medskip
	
	\defparprob{{\sc Subset Feedback Arc Set Reversal }(\sfastr)}{A tournament $T=(V(T),A(T))$, a vertex set $S \subseteq V(T)$,  and  an    integer $ k$.}{$k$}{Find  $ F \subseteq A(T)$ with $|F| \leq k$ such that   $ T\circledast F $ has no $S$-cycle.}
	
	\medskip
	
	For our proof, we also need the following characterization relating $S$-cycle and $S$-triangles.  It is well known that a tournament is acyclic if and only if it does not contain any triangle~\cite{DBLP:journals/jda/DomGHNT10}. The following lemma, proved by Bai in \cite{DBLP:journals/tcs/BaiX23}  states a similar statement for the subset variant. 
	\begin{lemma}{\rm \cite[Lemma~2]{DBLP:journals/tcs/BaiX23}}
		\label{lem:Scycle}
		A tournament is $ S $-acyclic if and only if it does not contain an $S$-triangle. 
	\end{lemma}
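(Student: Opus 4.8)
The plan is to reproduce the classical argument that a tournament is acyclic if and only if it is triangle-free, while tracking a terminal vertex through the cycle-shortening step so that the shortened object remains an $S$-cycle (and ultimately an $S$-triangle). The statement is a biconditional, and one direction is immediate: every $S$-triangle is in particular an $S$-cycle, so if $T$ is $S$-acyclic then it contains no $S$-triangle.

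For the nontrivial direction I would prove the contrapositive, namely that if $T$ contains an $S$-cycle then it contains an $S$-triangle. First I would fix a shortest $S$-cycle $C = v_1 v_2 \cdots v_\ell v_1$ and assume for contradiction that $\ell \geq 4$. Since $C$ is an $S$-cycle it passes through a terminal, and by cyclically relabeling the vertices I may assume $v_1 \in S$. Because $T$ is a tournament, exactly one of the arcs $(v_1,v_3)$ or $(v_3,v_1)$ is present, and I would split into these two cases.

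In the first case, $(v_1,v_3) \in A(T)$, the closed walk $v_1 v_3 v_4 \cdots v_\ell v_1$ is a directed cycle of length $\ell - 1$ that still contains $v_1 \in S$, hence a strictly shorter $S$-cycle, contradicting minimality. In the second case, $(v_3,v_1) \in A(T)$, the vertices $v_1, v_2, v_3$ together with the arcs $(v_1,v_2),(v_2,v_3),(v_3,v_1)$ form a directed triangle containing $v_1 \in S$; since $\ell \geq 4$ this triangle is again a strictly shorter $S$-cycle, contradicting minimality. Either branch therefore forces $\ell = 3$, so the shortest $S$-cycle is itself the desired $S$-triangle.

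The single point that needs care — and the only way this subset variant departs from the plain \textsc{FAST} statement — is that the vertex retained after shortening must lie in $S$; anchoring the chord at the distinguished terminal $v_1$ guarantees this in both cases, so I do not anticipate any genuine obstacle beyond this bookkeeping.
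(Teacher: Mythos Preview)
Your proof is correct. The paper itself does not prove this lemma---it simply cites it from Bai and Xiao---but immediately afterwards the paper proves the per-vertex refinement (\Cref{obs:striangle}: for each $s\in S$, there is an $S$-cycle through $s$ iff there is an $S$-triangle through $s$) using exactly the same shortest-cycle-plus-chord argument you give: take a shortest $S$-cycle with $c_1=s$, look at the arc between $c_1$ and $c_3$, and in one case get a triangle through $s$ and in the other a shorter $S$-cycle through $s$. So your approach matches the paper's.
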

	
	We present an observation regarding the presence of an $S$-triangle for every $S$-vertex that lies within some $S$-cycles. 
	
	\begin{observation}\label{obs:striangle}
		Let $s\in S$. Then, there is an $S$-cycle containing $s$ if and only if there is  an $S$-triangle containing $s$. 
	\end{observation}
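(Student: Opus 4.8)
The plan is to prove \Cref{obs:striangle} by reducing the existence of an arbitrary $S$-cycle through $s$ to the existence of a short one, and then invoking \Cref{lem:Scycle}. The backward direction is immediate: an $S$-triangle containing $s$ is in particular an $S$-cycle containing $s$, so nothing needs to be done there. The content is the forward direction, where I am given an $S$-cycle $C$ through $s$ and must produce an $S$-triangle through $s$.

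First I would observe that since $C$ passes through $s$, and $s \in S$, every cycle on the vertex set $V(C)$ is automatically an $S$-cycle as long as it contains $s$. So it suffices to find a triangle through $s$ inside the sub-tournament $T[V(C)]$. The natural approach is an induction on the length of $C$ (the number of arcs). Write the cycle as $s = v_0 \to v_1 \to v_2 \to \cdots \to v_{t} = s$ with $t \geq 3$ arcs. If $t = 3$ we already have an $S$-triangle through $s$ and are done. Otherwise consider the ``chord'' arc of the tournament between $v_0 = s$ and $v_2$: since $T$ is a tournament, exactly one of $(s,v_2)$ or $(v_2,s)$ is present. If $(v_2,s) \in A(T)$, then $s \to v_1 \to v_2 \to s$ is an $S$-triangle through $s$ and we are finished. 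If instead $(s,v_2) \in A(T)$, then we can short-cut the cycle, replacing the path $s \to v_1 \to v_2$ by the single arc $s \to v_2$, yielding a strictly shorter $S$-cycle $s \to v_2 \to v_3 \to \cdots \to s$ that still contains $s$. By the induction hypothesis this shorter $S$-cycle contains an $S$-triangle through $s$, completing the step.

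I expect the only subtlety, rather than a genuine obstacle, to be bookkeeping: I should make sure the short-cut cycle is a genuine cycle (its vertices remain distinct, which holds because we only deleted the intermediate vertex $v_1$ and kept $s$ together with a proper suffix $v_2,\dots,v_t$) and that it still has length at least three before re-applying the argument, so the base case is reached cleanly. An alternative, perhaps cleaner, framing avoids induction entirely: take a \emph{shortest} cycle $C^\ast$ through $s$ among all cycles through $s$ in $T$; I would then argue that $C^\ast$ must have length exactly three, since otherwise the chord-based short-cut above produces an even shorter cycle through $s$, contradicting minimality. This shortest-cycle version packages the same idea into a single contradiction and is the one I would ultimately write down, as it sidesteps the need to phrase an induction hypothesis carefully.

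Finally, I would note that one could alternatively derive the statement directly from \Cref{lem:Scycle} by passing to $T[V(C)]$: that sub-tournament contains an $S$-cycle (namely $C$, since $s \in V(C) \cap S$), hence is not $S$-acyclic, hence by \Cref{lem:Scycle} contains \emph{some} $S$-triangle; the extra work is only to force that $S$-triangle to pass through $s$ specifically, which is exactly what the chord argument supplies. Either way the proof is short, and no step requires more than the tournament property plus the already-established triangle characterization.
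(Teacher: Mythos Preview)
Your proposal is correct and essentially identical to the paper's proof: the paper also takes a shortest cycle through $s$, looks at the chord between $s$ and the vertex two steps along the cycle, and derives either a triangle through $s$ or a shorter cycle contradicting minimality. The induction and \Cref{lem:Scycle} variants you sketch are also valid, but the shortest-cycle framing you settle on is exactly what the paper writes.
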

	
	\begin{proof}
		We only prove the forward direction as the reverse direction holds vacuously. Let $C$ be a shortest $S$-cycle in a tournament $T$ that includes $s$. 
		Let $c_1 c_2 \ldots c_{\ell} c_1$ be the cycle where $c_1 = s$. If $\ell = 3$, then we are done. Otherwise, assume that $\ell \geq 4$. Consider the vertex $c_3$. If $(c_3,s) \in A(T)$, then  $s c_2 c_3 s$ is a triangle.  If $(s,c_3) \in A(T)$, then we get a shorter cycle $s c_3 c_4 \ldots s$ which is a contradiction to our assumption that $C$ is  a shortest $S$-cycle in a tournament $T$ that includes $s$. 
	\end{proof}
	
	Now we prove that  \sfast and \sfastr are equivalent. This is a  known result for {\sc FAST}.
	
	\begin{figure}[ht!]
		\begin{center}
			\includegraphics[scale=0.55]{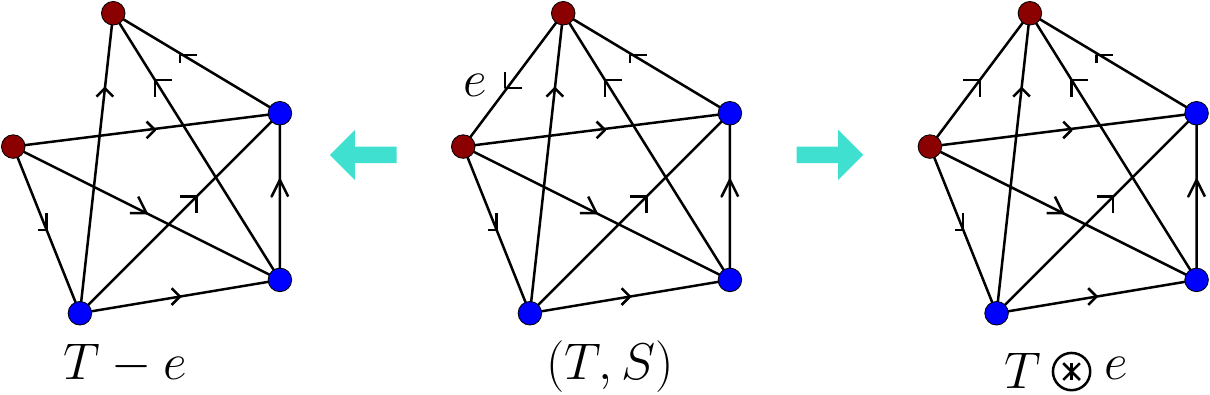}
		\end{center}
		\caption{Illustration of  \Cref{lem:minimal_equivalence}:  equivalence between \sfast  and \sfastr. Red vertices indicates $S$-vertices.}
		\label{fig:reversedeletion}
	\end{figure}
	
	\begin{lemma} \label{lem:minimal_equivalence}
		Let $T$ be a tournament, $S \subseteq V(T)$ and $ F \subseteq A(T)$. Then $F$ is an inclusion-wise minimal solution for \sfast if and only if $F$ is an inclusion-wise minimal solution for \sfastr. 
	\end{lemma}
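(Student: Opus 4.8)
The claim is a biconditional between inclusion-wise minimal solutions of the two problems, so the plan is to prove two directions, each establishing that a minimal solution for one problem is a solution (and then minimal) for the other. The key conceptual bridge is the following elementary fact about cycles in tournaments: for a single arc $e=(u,v)$ lying on a directed cycle $C$, deleting $e$ and reversing $e$ have the ``same local effect'' in the sense that $C$ is destroyed in both $T-e$ and $T\circledast e$; the difference is that reversal inserts the arc $(v,u)$, which can only create \emph{new} cycles that use $(v,u)$. The whole argument rests on controlling these newly created cycles, and minimality is exactly the hypothesis that lets us do so.

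\medskip
\noindent\textbf{Step 1 (Deletion solution $\Rightarrow$ reversal solution).}
Suppose $F$ is an inclusion-wise minimal $S$-fas in the deletion sense, i.e. $T-F$ has no $S$-cycle. I would argue that $T\circledast F$ also has no $S$-cycle. Assume for contradiction that $T\circledast F$ contains an $S$-cycle $C$. Since $T-F$ is $S$-acyclic, $C$ must use at least one reversed arc, i.e.\ some arc of $\rev(F)$. By minimality of $F$, every arc $e=(u,v)\in F$ is \emph{necessary}: $T-(F\setminus\{e\})$ still contains an $S$-cycle through $e$, giving a directed path $P_e$ from $v$ back to $u$ in $T-F$ that together with $e$ forms an $S$-cycle. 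The plan is to use these ``certificate paths'' $P_e$ to reroute $C$: whenever $C$ traverses a reversed arc $(v,u)=\rev(e)$, I can splice in reasoning based on $P_e$ to exhibit an $S$-cycle already present in $T-F$, contradicting $S$-acyclicity. The cleanest way to organise this is to take a shortest offending $S$-cycle $C$ in $T\circledast F$ and show any reversed arc on it can be ``shortcut,'' yielding either a shorter $S$-cycle or one living entirely in $T-F$.

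\medskip
\noindent\textbf{Step 2 (Reversal solution $\Rightarrow$ deletion solution).}
Conversely, suppose $F$ is a minimal solution for \sfastr, so $T\circledast F$ is $S$-acyclic. Every $S$-cycle of $T$ must use at least one arc of $F$ (otherwise it survives unchanged in $T\circledast F$), since an $S$-cycle avoiding $F$ is present identically in $T\circledast F$. Hence $T-F$ is $S$-acyclic, i.e.\ $F$ is a solution for \sfast. This direction is the easier one: deletion removes a superset of the ``destructive power'' of reversal in the sense that it does not reintroduce any arcs, so it can only have fewer cycles. Formally, every cycle in $T-F$ is also a cycle in $T\circledast F$ (its arcs are a subset of $A(T\circledast F)$), so $S$-acyclicity of the latter immediately gives $S$-acyclicity of the former.

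\medskip
\noindent\textbf{Step 3 (Transfer of minimality).}
Having shown both problems have the same \emph{solution sets} restricted to the relevant arc sets, minimality transfers: if $F$ is a solution for both and is minimal for one, then removing any $e\in F$ fails in one model, and by the solution-set equality it fails in the other too, so $F$ is minimal for both. I would state this as a short closing paragraph once Steps 1 and 2 establish that ``$F$ solves \sfast iff $F$ solves \sfastr'' for the arc sets in question.

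\medskip
\noindent\textbf{Main obstacle.}
The hard part is Step 1, specifically controlling the new cycles created by the reversed arcs. A reversed arc $(v,u)$ can participate in many new cycles, and a naive argument might only destroy \emph{some} $S$-cycles while reversal accidentally manufactures a fresh one. The role of inclusion-wise minimality is precisely to furnish, for each $e\in F$, a return path witnessing that $e$ was on an $S$-cycle; the delicate point is to combine the witness paths with the hypothetical bad cycle $C$ in $T\circledast F$ so as to extract a genuine $S$-cycle in $T-F$, ensuring the $S$-vertex is preserved through the rerouting. Using \Cref{lem:Scycle} and \Cref{obs:striangle} to reduce to $S$-triangles may simplify this bookkeeping, since a shortest $S$-cycle can be taken to be a triangle, and analysing how a single reversed arc interacts with a triangle is far more tractable than with an arbitrarily long cycle.
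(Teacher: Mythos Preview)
Your plan matches the paper's proof: the hard direction (Step~1) uses minimality to produce, for each $e\in F$, a certificate $S$-cycle $C_e$ with $A(C_e)\cap F=\{e\}$, and then splices the paths $C_e-e$ into a hypothetical $S$-cycle of $T\circledast F$ to obtain a closed $S$-walk in $T-F$; the paper replaces \emph{all} reversed arcs at once rather than shortcutting a shortest cycle, which is cleaner since it avoids tracking whether the shortened cycle still carries reversed arcs.

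One correction to Step~3: Steps~1 and~2 do \emph{not} give you solution-set equality, because Step~1 needs minimality as a hypothesis (a non-minimal deletion solution need not be a reversal solution). The transfer still works, just asymmetrically: for $F$ minimal in \sfast, any $F'\subsetneq F$ solving \sfastr would solve \sfast by Step~2, a contradiction; for $F$ minimal in \sfastr, any $F'\subsetneq F$ solving \sfast contains a \emph{minimal} $F''\subseteq F'$ solving \sfast, and then Step~1 makes $F''\subsetneq F$ a \sfastr solution. The paper does exactly this pass-to-a-minimal-subset trick in the reverse direction. Your triangle reduction is unnecessary.
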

	
	\begin{proof}

		In the forward direction, let $(T,S,k)$ be a \yes instance of \sfast and $F$ be a inclusion-wise minimal solution. We first claim that $F$ is also a solution for \sfastr. Assume towards contradiction that $T\circledast F$ has an $S$-cycle, say $C$. Let $f_1, f_2, \ldots, f_{\ell}$ be the arcs of $A(C) \cap \rev(F)$ in the order of their appearance on the cycle $C$ and $e_i = \rev(f_i) \in F$. Since $F$ is a inclusion-wise minimal solution of \sfast for $(T,S,k)$, this implies that for every $e_i \in F$, there exists an $S$-cycle $C_i$ in $T$ such that $ A(C_i) \cap F = \{e_i\}$.  Now consider a following closed walk $W$ in $T$. We start at some vertex of $C$ and follow the cycle $C$ except when an arc $f_i \in \rev(F)$ comes. When $f_i$ comes in the walk, then instead of following the arc $f_i$, we traverse the  path $C_i - e_i$.  By definition, $W$ is a closed walk in $T$ that does not contain any arc of $\rev(F)$ from which we can retrieve an $S$-cycle in $T - F$, which contradicts that $F$ is a solution for \sfast. $F$ is also an inclusion-wise minimal solution for \sfastr because if there is a proper subset $F' \subset F$  of arcs such that $F'$ is a solution for \sfastr, this will also be a solution for \sfast of smaller size which contradicts minimality of $F$.
		
		In the reverse direction, let $\widetilde{F}$ be an inclusion-wise minimal solution for \sfastr. It is easy to see that $\widetilde{F}$ is also a solution for \sfast. Moreover, $\widetilde{F}$ is also an inclusion-wise minimal solution for \sfast, Suppose not, consider a proper subset $\widetilde{F}' \subset \widetilde{F}$ which is a inclusion-wise minimal solution for \sfast, then by the argument of forward direction, this will become a inclusion-wise minimal solution for \sfastr, that contradicts the minimality of $\widetilde{F}$.
	\end{proof}
	
	
	By Lemma~\ref{lem:minimal_equivalence} we know that a minimal solution to \sfast is the same as a minimal solution to \sfastr. For an illustration of \Cref{lem:minimal_equivalence}, see \Cref{fig:reversedeletion}. Thus, from {\bf \em now onward we focus on the problem \sfastr}.

	\subsection{$\boldsymbol{S}$-Topological Ordering}
	Next we define $\spclorder$ which is useful for our kernel. This enriches the notion of canonical sequence defined by Bai and Xiao~\cite[Definition $1$] {DBLP:journals/tcs/BaiX23}. 
	
	\begin{definition}[$\spclorder$]\label{def:ordered_part}
		{\em   For a tournament $T,$ and $S \subseteq V(T)$, $T$ is said to have a $\spclorder$ $\Pi$ with respect to $S$ if $T$ has an ordered partition $V_1 \uplus V_2 \uplus \ldots \uplus V_{\ell}$ of $V(T)$ that satisfies the following:
			\begin{description}
				\item[{(i)}\label{p1}] For every $s \in S$, there exists $i \in [\ell]$ such that the part $V_i = \{s\}$ is singleton.  
				\item[{(ii)}\label{p2}] For every $i \in [\ell]$, the part $V_i$ forms a strongly connected component in $T$.
				\item[{(iii)}\label{p4}] For every pair $i < j \in [\ell]$, and every $u \in V_i$ and $v \in V_j$,  $(u,v)$ is an arc in $T$.
		\end{description}}
	\end{definition}
	
	We refer to a part consisting of a single $S$ vertex as a {\em $S$-singleton part}. Now we have the following lemma.

	\begin{lemma}\label{obs:ordering}
		Let $T$ be a tournament and $S \subseteq V(T)$. $T$ admits a $\spclorder$ $\Pi$ with respect to $S$ if and only if $T$ does not contain any $S$-triangle. Furthermore, if $T$ does not have an $S$-triangle, we can compute $\spclorder$  in polynomial time. 
	\end{lemma}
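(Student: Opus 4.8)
The plan is to prove both directions of the equivalence, with the nontrivial content lying in constructing the ordering from the absence of an $S$-triangle. By \Cref{lem:Scycle}, ``no $S$-triangle'' is the same as ``$S$-acyclic'', so throughout I will freely replace the triangle condition by $S$-acyclicity.

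For the easy direction, suppose $T$ admits a \spclorder{} $\Pi = V_1 \uplus \cdots \uplus V_\ell$. I would argue that $T$ has no $S$-cycle (hence no $S$-triangle). Take any $S$-cycle $C$ and let $s \in S \cap V(C)$; by property~(i), $s$ sits in its own singleton part $V_i = \{s\}$. The cycle must leave $V_i$ and later return to it. But property~(iii) says every arc between distinct parts points from the lower-indexed part to the higher-indexed one, so once the cycle moves from $V_i$ to a strictly higher part it can never come back to $V_i$ — and since $V_i$ is a singleton, the cycle cannot ``stay inside'' $V_i$ either. This contradicts the existence of $C$, giving the forward implication.

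For the hard (and algorithmic) direction, assume $T$ is $S$-acyclic and build the partition. First I would contract each maximal $S$-free strongly connected component: consider the condensation of $T$ restricted so that the $S$-vertices are forced to be their own SCCs. Concretely, I claim that in an $S$-acyclic tournament, every nontrivial SCC of $T$ is $S$-free (it contains no $S$-vertex), because a nontrivial SCC containing an $S$-vertex would yield an $S$-cycle through that vertex. This shows every $S$-vertex is a trivial (singleton) SCC, securing property~(i). The SCCs of a tournament are linearly ordered by the standard condensation: contracting each SCC of $T$ yields an acyclic tournament, and an acyclic tournament has a unique topological (transitive) ordering. Taking $V_1, \ldots, V_\ell$ to be the SCCs of $T$ listed in this topological order gives property~(ii) (each part is an SCC, hence strongly connected) and property~(iii) (all arcs between two SCCs in a tournament go the same way, namely from the earlier SCC to the later one in the condensation order, since otherwise two SCCs would merge). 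Polynomial-time computability then follows because SCCs can be found by Tarjan's/Kosaraju's algorithm and the condensation of a DAG tournament is topologically sortable in linear time.

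The main obstacle I anticipate is the clean verification of the key structural claim that \emph{every nontrivial SCC of an $S$-acyclic tournament avoids $S$}, since this is exactly what makes the SCCs of $T$ serve double duty as both the ``strongly connected parts'' of~(ii) and the ``$S$-singletons'' of~(i). The proof is short — a nontrivial SCC containing $s \in S$ has, by strong connectivity, a directed cycle through $s$, which is an $S$-cycle, contradicting $S$-acyclicity — but it is the crux on which the whole construction hinges. A secondary point requiring care is confirming that property~(iii) is automatically inherited from the tournament structure: between any two distinct SCCs $V_i, V_j$ all arcs of the tournament must point the same direction (else $V_i \cup V_j$ would itself be strongly connected, contradicting maximality), and this common direction agrees with the condensation's topological order. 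Once these two observations are in place, assembling the partition and arguing the polynomial running time is routine.
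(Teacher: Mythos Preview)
Your proposal is correct and follows essentially the same route as the paper: both directions are handled identically (contradiction via the ordering's forward-arc property for one direction; SCC decomposition plus condensation into an acyclic tournament for the other), and the two structural facts you flag as the crux---that every nontrivial SCC avoids $S$, and that all arcs between two distinct SCCs of a tournament point the same way---are precisely the points the paper's proof isolates. The only cosmetic difference is that you argue the forward implication for arbitrary $S$-cycles (invoking \Cref{lem:Scycle}) while the paper works directly with an $S$-triangle, which is immaterial.
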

	
	\begin{proof}
		In the forward direction, let $\Pi$ be a $\spclorder$ of $T$ with respect to $S$. Let $V_1 \uplus V_2 \uplus \ldots \uplus V_{\ell}$ be the corresponding ordered partition of $V(T)$.  Assume towards contradiction that $T$ has an $S$-triangle with arcs $(a,s), (s,b)$ and $(b,a) \in A(T)$ where $s \in S$. Let $V_q$ be the $S$-singleton part that contains $s$. Since $(a,s) \in A(T)$, this implies that the vertices of the part $V_p$ that contain $a$, precede the vertices of $V_q$ in $\Pi$. Similarly, since $(s,b) \in A(T)$, this implies that the vertices of the part $V_q$ precede the vertices of the part $V_r$ that contains $b$ in $\Pi$. Hence, it follows that all the vertices of the part $V_p$ precede all the vertices of $V_r$ in $\Pi$ which is a contradiction that $(b,a) \in A(T)$ with $b \in V_r$ and $a \in V_p$.
		
		In the reverse direction, suppose that $T$ does not contain an $S$-triangle. We will now construct a $\spclorder$ $\Pi$ of $T$ as follows. First, compute the maximal strongly connected components $V_1, \ldots, V_\ell$ of $T$. Since these components are maximal, they are pairwise disjoint. These components will form the parts of $\spclorder$ $\Pi$. Observe that, since there are no $S$-triangles (or $S$-cycles), if $V_i$ contains a vertex $s\in S$, then $V_i=\{s\}$. This implies that for every $s \in S$, there exists $i \in [\ell]$ such that the component $V_i = \{s\}$ is a  singleton. We construct an auxiliary graph $T'$ where $V(T') = \{ v_1, \ldots,v_\ell\}$. Here, the vertex $v_i\in V(T')$ corresponds to  the strongly connected component $V_i$.  There is an arc from vertex $v_i$ to $v_j$ in $A(T')$  if and only if there is an arc from a vertex in $V_i$ to a vertex in $V_j$. Note that if an arc exists from a vertex in $V_i$ to a vertex in $V_j$, then every vertex in $V_i$ has an arc to every vertex in $V_j$; otherwise, $V_i\cup V_j$ would form a larger strongly connected component. Therefore, the arc set $A(T')$ is well-defined. Furthermore, we can observe that $T'$ is also a tournament. 
		
		Finally, note that $T'$ is acyclic; otherwise, the union of the maximal strongly connected components corresponding to a cycle in $T'$ would create a larger strongly connected component in $T$. Since $T'$ is an acyclic tournament, there is a unique topological ordering of it, which can be found in $\OO(m+n)=\OO(n^2)$ time~\cite{DBLP:books/daglib/0015106}. The corresponding order of the maximal strongly connected components is the $\spclorder$ $\Pi$ of $T$.  This concludes the proof. 
	\end{proof}

	\section{Bounding the number of Terminals} \label{subsec:bound_terminals}
	

	In this section, we design reduction rules to obtain an equivalent instance of \sfastr with at most $\OO(k)$ terminal vertices. That is, $|S|\leq \OO(k)$. For a gentle introduction, we first provide a slightly worse quadratic bound on the number of terminals, and then refine it to a linear bound.

	\subsection{Quadratic Upper Bound on the Number of Terminals}
	
	The first reduction rule handles sanity checks; its correctness is self-evident. 
	
	\begin{reduction rule}[Sanity Check]
		\label{redrule:sanity_check}
		Let $(T,S,k)$ be an instance of \sfastr where $k \leq 0$. If $k=0$ and $T$ has no $S$-triangle, return a trivial \yes instance; otherwise, return a trivial \no instance.
	\end{reduction rule}
	
	To reduce the number of terminal vertices, we further remove irrelevant $S$-vertices from the graph.  The following reduction rule is designed to accomplish this task.
	
	\begin{reduction rule}[Irrelevant Vertex Rule 1] \label{redrule:triangle_free_vertex}
		Let $(T,S,k)$ be an instance of \sfastr. If there is a vertex $s \in S$ such that $s$ does not participate in any triangle in $T$, then delete $s$ from $T$. The resulting instance is $(T'= T - s, S \setminus \{s\}, k).$
	\end{reduction rule}
	
	It is easy to verify that \cref{redrule:triangle_free_vertex} can be applied in polynomial time.  In the following, we prove the correctness of it.
	
	\begin{lemma}
		\cref{redrule:triangle_free_vertex} is sound.
	\end{lemma}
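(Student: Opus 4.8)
The plan is to prove the equivalence of $(T,S,k)$ and $(T',S',k)=(T-s,\,S\setminus\{s\},\,k)$ by exploiting the rigid structure that $s$ imposes once it lies in no triangle. First I would record the key structural fact. Write $A=N^-_T(s)$ and $B=N^+_T(s)$; since $T$ is a tournament, every vertex other than $s$ is an in- or out-neighbour of $s$, so $V(T')=A\uplus B$. Moreover there can be no arc of $T$ directed from $B$ to $A$: an arc $(w,v)$ with $w\in B$ (so $s\to w$) and $v\in A$ (so $v\to s$) would close the triangle $s\to w\to v\to s$ through $s$, contradicting the hypothesis. Hence \emph{every} arc of $T$ between $A$ and $B$ is directed from $A$ to $B$. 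Throughout I would argue with triangles rather than cycles, invoking \Cref{lem:Scycle} to move between ``$S$-acyclic'' and ``$S$-triangle-free'' as needed.

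For the forward direction I would take a solution $F$ for $(T,S,k)$ and set $F_0=F\setminus\arc(s)$. The point is that $T'\circledast F_0=(T\circledast F)-s$ (reversing only arcs not incident to $s$ and then deleting $s$ gives the same arc set as reversing $F_0$ in $T-s$). Since $T\circledast F$ has no $S$-cycle and deleting the vertex $s$ cannot create a cycle, $(T\circledast F)-s$ has no $S$-cycle, in particular no $S'$-cycle; as $|F_0|\le|F|\le k$, the reduced instance is a \yes instance. This direction does not use the triangle-freeness of $s$.

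The reverse direction is where I expect the real difficulty, and the structural fact does the work. Given a solution $\hat F$ for $(T',S',k)$, the naive attempt $F=\hat F$ can fail: if $\hat F$ reverses an arc $(y,x)$ with $y\in A$, $x\in B$, then in $T\circledast\hat F$ the arc becomes $x\to y$, and together with $s\to x$ and $y\to s$ this creates a brand-new $S$-triangle through $s$. The fix is to delete from $\hat F$ all reversed arcs lying between $A$ and $B$, obtaining $F'$ with $|F'|\le|\hat F|\le k$ and with every $A$–$B$ arc kept in its original $A\to B$ orientation in $T'\circledast F'$ (and hence in $T\circledast F'$). The main obstacle is to check that discarding these reversals does not reintroduce an $S'$-triangle in $T'\circledast F'$; I would do this by a short case analysis on the third vertex $z$ of a hypothetical new $S'$-triangle $y\to x\to z\to y$ through a restored arc $(y,x)$: if $z\in A$ then the $A$–$B$ arc between $z$ and $x$ is forward $z\to x$ in $T'\circledast F'$, contradicting $x\to z$, and if $z\in B$ then the $A$–$B$ arc between $y$ and $z$ is forward $y\to z$, contradicting $z\to y$. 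Thus $T'\circledast F'$ stays $S'$-acyclic.

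It remains to conclude that $T\circledast F'$ is $S$-acyclic. By \Cref{lem:Scycle} it suffices to rule out $S$-triangles. A triangle avoiding $s$ lies in $(T\circledast F')-s=T'\circledast F'$ and, containing an $S$-vertex different from $s$, is an $S'$-triangle, which cannot exist by the previous paragraph. A triangle through $s$ must read $s\to x\to y\to s$ with $x\in B$, $y\in A$, forcing the arc $x\to y$; but $F'$ reverses no $A$–$B$ arc and no arc incident to $s$, so in $T\circledast F'$ all $A$–$B$ arcs remain $A\to B$, giving $y\to x$ and a contradiction. Hence $F'$ is a solution of size at most $k$ for $(T,S,k)$, completing the equivalence and establishing soundness of \cref{redrule:triangle_free_vertex}.
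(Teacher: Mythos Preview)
Your proof is correct, but it follows a genuinely different route from the paper's in the reverse direction.

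The paper takes a \emph{minimal} solution $\widetilde{F}$ for $(T-s,S\setminus\{s\},k)$ and argues that $\widetilde{F}$ itself is a solution for $(T,S,k)$. If an $S$-triangle $\{s,p,q\}$ appears in $T\circledast\widetilde{F}$, the only non-trivial case is that the arc $(p,q)$ came from reversing $(q,p)\in\widetilde{F}$. Minimality (together with \Cref{lem:minimal_equivalence}) then supplies an $S$-cycle $C$ in $T-s$ with $A(C)\cap\widetilde{F}=\{(q,p)\}$; splicing the path $C-(q,p)$ between the original arcs $(s,p)$ and $(q,s)$ yields a cycle through $s$ in $T$ itself, and \Cref{obs:striangle} gives a triangle through $s$, contradicting the hypothesis.

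You instead extract the structural consequence of $s$ being triangle-free up front---every arc between $A=N^-(s)$ and $B=N^+(s)$ is oriented $A\to B$---and then \emph{modify} the solution by discarding all $A$--$B$ arcs from $\hat{F}$. Your two-case check that this cannot create an $S'$-triangle is clean (any triangle through a restored $A\to B$ arc would force a $B\to A$ arc elsewhere in $T'\circledast F'$, impossible since $F'$ reverses no $A$--$B$ arc), and the same observation kills any triangle through $s$ in $T\circledast F'$.

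What each approach buys: the paper's argument lets the solution transfer unchanged but leans on minimality and the deletion/reversal equivalence of \Cref{lem:minimal_equivalence}. Your argument is more self-contained---it needs only \Cref{lem:Scycle}---and makes explicit the structural reason the rule is sound: the cut $(A,\{s\},B)$ is already ``safe'' for $s$, so reversing an $A$--$B$ arc can never help and may only create a triangle through $s$. In effect you show something slightly stronger than soundness, namely that any solution for the reduced instance can be assumed to avoid all $A$--$B$ arcs.
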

	
	\begin{proof}
		The key aspect of the proof is that the sets of $S$-cycles in $T$ and $T'$ are identical.
		
		In the forward direction, let $(T,S,k)$ be a \yes instance of \sfastr and $F \subseteq A(T)$ be a solution of size at most $k$. Clearly, $F'= F \setminus \arc(s)$ is a solution for $(T'= T - s, S \setminus \{s\}, k)$ of size at most $k$, where $\arc(s) \subseteq A(T)$ is the  set of arcs incident to the vertex $s$ in $T$.
		
		
		In the reverse direction, let $(T - s, S \setminus \{s\}, k)$ be a \yes instance of \sfastr and $\widetilde{F} \subseteq A(T)$ be a minimal solution of size at most $k$. We claim that, $\widetilde{F}$ is also a solution of size at most $k$ for $(T,S,k)$. Assume towards contradiction that there exists a $S$-triangle  in $T\circledast \widetilde{F}$. Since $\widetilde{F}$ was a solution for $(T', S \setminus \{s\},k)$, the triangle must contain the vertex $s$. Let $\triangle$ be such a triangle with the vertices $\{s, p, q\}$ and arcs $\{(s,p), (p,q), (q,s)\}$. Now if $(p,q) \notin \widetilde{F}$ we obtain a $S$-triangle in $T$ with vertices $\{s,p,q\}$, a contradiction to the fact that  $s$ does not participate in any triangle in $T$. Else $(q,p) \in  \widetilde{F}$. As $\widetilde{F}$ is minimal solution there is a $S$-cycle $C$ in $T-s$,  such that  $A(C) \cap \widetilde{F} = \{(q,p)\}$. That means there is a path from $p$ to $q$ in $T$ containing the arcs $A(C) \setminus (q,p)$. As $(s,p), (q,s) \in A(T)$, we obtain  a $S$-cycle $C_1$ in $T - \widetilde{F}$ where $s \in V(C_1)$. By \cref{obs:striangle},  $T$ has a $S$-triangle containing the vertex $s$, a contradiction.
	\end{proof}

	The next reduction rule is inspired by a similar rule for {\sc FAST}, which states that if an arc is contained in at least $k+1$ triangles, we must reverse this arc and reduce the budget by one~\cite[Theorem $2.8$]{DBLP:books/sp/CyganFKLMPPS15}. 
	
	\begin{reduction rule}[Many $S$-triangles Rule] 
		\label{redrule:many_triangle}
		Let $(T,S,k)$ be an instance of \sfastr. Let $e \in A(T)$ be an arc in $T$ such that $e$ participates in $k+1$ $S$-triangles, then reverse the arc $e$. The resulting instance is $(T \circledast e, S, k-1)$ 
	\end{reduction rule}
	See Figure~\ref{fig:manytriangle} for an illustration of Reduction Rule~\ref{redrule:many_triangle}. 
	\begin{figure}[ht!]
		\begin{center}
			\includegraphics[scale=0.55]{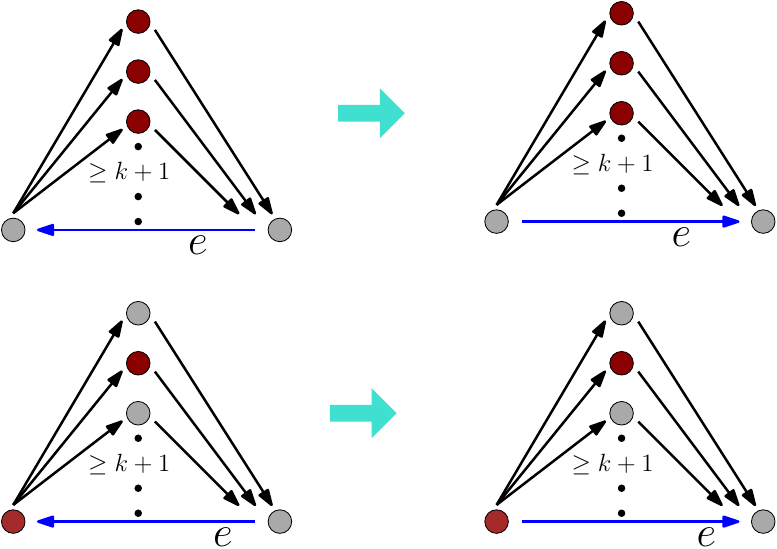}
		\end{center}
		\caption{Illustration of  \Cref{redrule:many_triangle} (Many $S$-triangle). Vertices with color red are from   $S$.}
		\label{fig:manytriangle}
	\end{figure}
	
	It is easy to verify that \cref{redrule:many_triangle} can be applied in polynomial time.  In the following, we prove the correctness of it.
	
	\begin{lemma}
		\cref{redrule:many_triangle} is sound.
	\end{lemma}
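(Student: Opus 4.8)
The plan is to first establish the following key claim, from which soundness will follow with only routine bookkeeping: \emph{every} solution $F \subseteq A(T)$ of $(T,S,k)$ with $|F| \le k$ must reverse the arc $e$, i.e.\ $e \in F$. The intuition is exactly the one behind the classical sunflower rule for {\sc FAST}: the $k+1$ $S$-triangles through $e$ share only the arc $e$, so without reversing $e$ one cannot destroy all of them within budget $k$.

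To prove the claim I would write $e=(u,v)$ and let $\triangle_1,\dots,\triangle_{k+1}$ be the $k+1$ distinct $S$-triangles through $e$. Since $T$ is a tournament, each $\triangle_i$ is determined by its third vertex $w_i$, with arc set $\{(u,v),(v,w_i),(w_i,u)\}$, and distinctness of the triangles forces $w_1,\dots,w_{k+1}$ to be pairwise distinct. A short case check then shows the non-$e$ pairs $\{(v,w_i),(w_i,u)\}$ are pairwise disjoint and never equal to $e$. Now suppose a solution $F$ with $|F|\le k$ omits $e$. For each $i$, reversing arcs outside $A(\triangle_i)$ leaves the three arcs of $\triangle_i$ intact, so if $F\cap A(\triangle_i)=\emptyset$ then $\triangle_i$ survives as a directed $S$-triangle in $T\circledast F$, contradicting $S$-acyclicity via \Cref{lem:Scycle}. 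Hence $F$ reverses at least one arc of each $\triangle_i$; as $e\notin F$, this arc lies in $A(\triangle_i)\setminus\{e\}$, and since these $k+1$ sets are pairwise disjoint we get $|F|\ge k+1$, a contradiction. Therefore $e\in F$.

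Granting the claim, I would finish both directions by tracking the $\circledast$ operation. For the forward direction, take a solution $F$ of $(T,S,k)$ with $|F|\le k$; by the claim $e\in F$, so $F_0:=F\setminus\{e\}$ has size at most $k-1$ and lies in $A(T\circledast e)$. Using that reversing disjoint arc sets composes, $(T\circledast e)\circledast F_0 = T\circledast(\{e\}\cup F_0)=T\circledast F$ is $S$-acyclic, so $F_0$ witnesses that $(T\circledast e,S,k-1)$ is a \yes instance. For the reverse direction, let $F'$ solve $(T\circledast e,S,k-1)$ with $|F'|\le k-1$, and split on whether $\rev(e)\in F'$. If $\rev(e)\notin F'$, then $F:=F'\cup\{e\}\subseteq A(T)$ has size at most $k$ and $T\circledast F=(T\circledast e)\circledast F'$ is $S$-acyclic. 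If $\rev(e)\in F'$, then $F'_0:=F'\setminus\{\rev(e)\}\subseteq A(T)$ has size at most $k-2$, and since $(T\circledast e)\circledast\rev(e)=T$ we get $(T\circledast e)\circledast F'=T\circledast F'_0$, again $S$-acyclic; so $F'_0$ solves $(T,S,k)$. In either case $(T,S,k)$ is a \yes instance, completing the equivalence.

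The main obstacle is the key claim: the crux is verifying that the non-$e$ arcs of the $k+1$ triangles are genuinely pairwise disjoint, so that a budget of $k$ provably cannot hit every triangle without using $e$; this relies on the tournament structure (third vertices determine the triangle) and on the $S$-triangle versus $S$-cycle characterization of \Cref{lem:Scycle}. The subsequent $\circledast$/$\rev$ bookkeeping in the two directions is routine, the only point of care being to keep every arc set inside $A(T)$ or $A(T\circledast e)$ as appropriate, which is handled by the case split on $\rev(e)\in F'$.
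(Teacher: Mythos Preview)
Your proof is correct and follows essentially the same approach as the paper: establish that every size-$\le k$ solution must contain $e$ (the sunflower argument), then use composition of reversals to pass between the two instances. Your treatment is in fact more careful than the paper's in the reverse direction, since you explicitly handle the case $\rev(e)\in F'$ (the paper simply asserts that $T'\circledast \widetilde{F}$ and $T\circledast(\widetilde{F}\cup\{e\})$ coincide without addressing this edge case).
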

	
	\begin{proof}
		In the forward direction, let $(T,S,k)$ be a \yes instance of \sfastr and $F \subseteq A(T)$ be a solution of size at most $k$. So $T\circledast F$ does not contain any $S$-triangle. Observe that $F$ must include the arc $e$. If it does not, then to hit all the $(k+1)$ $S$-triangles that contain the arc $e$, we need to reverse at least $k+1$ arcs, which is not feasible since we are allowed to reverse at most $k$ arcs. We claim that $F'= F \setminus \{e\}$ is a solution of size at most $k-1$ for $(T'=T \circledast e, S, k-1)$.  Notice that $T\circledast F$ and $T'\circledast F'$ are the same tournaments. So, if $T'\circledast F'$ has a $S$-triangle, then that will imply that $T\circledast F$ also has a $S$-triangle which contradicts that $F$ is a solution for $(T,S,k)$. 
		
		In the reverse direction, let $(T'=T \circledast e, S, k-1)$ be a \yes instance of \sfastr and  $\widetilde{F}$ be a solution of size at most $k-1$. Consider $\widetilde{F}'= \widetilde{F} \cup \{e\}$. We claim that $\widetilde{F}'$ is a solution of size at most $k$ for $(T,S,k)$. 
		Now note that $T'\circledast \widetilde{F}$ and $T\circledast \widetilde{F}'$ are the same tournaments. So, if $T\circledast \widetilde{F}'$ has a $S$-triangle, then that will imply that $T'\circledast \widetilde{F}$ also has a $S$-triangle which contradicts that $\widetilde{F}$ is a solution for $(T',S,k-1)$. 
	\end{proof}
	
	\begin{reduction rule}[Bounded Terminal Rule] \label{redrule:no_instance}
		Let $(T,S,k)$ be an instance where none of the  Reduction Rules \ref{redrule:sanity_check}-\ref{redrule:many_triangle} are applicable. If $|S| \geq (k+1)^2$, then return a trivial \no instance.
	\end{reduction rule}
	
	It is easy to verify that \cref{redrule:no_instance} can be applied in polynomial time.  In the following, we prove the correctness of it.
	
	\begin{lemma}
		\cref{redrule:no_instance} is sound.
	\end{lemma}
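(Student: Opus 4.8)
The plan is to prove the contrapositive of what soundness requires here: since \cref{redrule:no_instance} outputs a trivial \no instance, I must show that under its hypotheses $(T,S,k)$ is genuinely a \no instance of \sfastr. So assume $(T,S,k)$ is reduced with respect to \cref{redrule:sanity_check,redrule:triangle_free_vertex,redrule:many_triangle} and that $|S| \ge (k+1)^2$ (in particular $k \ge 1$, as otherwise \cref{redrule:sanity_check} would apply). Suppose toward a contradiction that there is a solution $F \subseteq A(T)$ with $|F| \le k$, i.e.\ $T \circledast F$ has no $S$-cycle; equivalently, by \Cref{lem:Scycle}, $T \circledast F$ has no $S$-triangle. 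The goal is to derive a charging argument mapping each terminal to a reversed arc and contradict the size of $S$.

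The key structural facts come from the reduction rules. Because \cref{redrule:triangle_free_vertex} is not applicable, every $s \in S$ participates in a triangle of $T$; as $s \in S$, this triangle is an $S$-triangle, which I denote $\triangle_s$. Next I observe that $F$ must reverse at least one arc of $\triangle_s$: if $F \cap A(\triangle_s) = \emptyset$, then all three arcs of $\triangle_s$ are untouched, so $\triangle_s$ survives as an $S$-triangle in $T \circledast F$, contradicting that $F$ is a solution. Hence I may fix some arc $e_s \in F \cap A(\triangle_s)$ for each $s$, and define a map $\phi \colon S \to F$ by $\phi(s) = e_s$.

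It remains to bound each fiber $\phi^{-1}(e)$ for $e \in F$. If $\phi(s) = e$, then $e$ is an arc of the $S$-triangle $\triangle_s$ and $s$ is one of its three vertices, so $s$ occurs among the vertices of $S$-triangles containing $e$. Since \cref{redrule:many_triangle} is not applicable, $e$ lies in at most $k$ distinct $S$-triangles, and each such triangle supplies only the two (fixed) endpoints of $e$ together with one third vertex; thus at most $k+2$ vertices occur in $S$-triangles through $e$, giving $|\phi^{-1}(e)| \le k+2$. Summing over $F$ yields
\[
|S| \;=\; \sum_{e \in F} |\phi^{-1}(e)| \;\le\; |F|\,(k+2) \;\le\; k(k+2) \;=\; (k+1)^2 - 1 \;<\; (k+1)^2 \;\le\; |S|,
\]
a contradiction. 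Therefore no solution of size at most $k$ exists, so $(T,S,k)$ is a \no instance and the rule is sound.

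I expect the main obstacle to be pinning down the charging step cleanly: one must confirm that the definition of a solution forces $F \cap A(\triangle_s) \neq \emptyset$ (via the survival of an untouched triangle under $\circledast$), so that every terminal can indeed be charged to a reversed arc, and one must extract exactly the fiber bound $k+2$ from \cref{redrule:many_triangle} so that the product $k(k+2)$ strictly undercuts the threshold $(k+1)^2$. The remaining verifications—that reducedness with respect to \cref{redrule:triangle_free_vertex} guarantees each $s$ lies in an $S$-triangle, and that reducedness with respect to \cref{redrule:many_triangle} caps the $S$-triangles per arc at $k$—are routine.
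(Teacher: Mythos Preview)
Your proof is correct and follows essentially the same approach as the paper: both charge each terminal to an arc of $F$ via an $S$-triangle (using that \cref{redrule:triangle_free_vertex} is inapplicable and that $F$ must hit every $S$-triangle), and then bound the load on each arc using the inapplicability of \cref{redrule:many_triangle}. The only cosmetic difference is that the paper splits off the at most $2k$ terminals in $V(F)$ and bounds the remaining ones by $k^2$, whereas you keep a single charging map with fiber bound $k+2$; both yield the identical bound $k(k+2)=(k+1)^2-1$.
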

	
	\begin{proof}
		Let $(T,S,k)$ be a \yes instance for \sfastr where none of the Reduction Rules \ref{redrule:sanity_check}-\ref{redrule:many_triangle} are applicable. Let $F \subseteq A(T)$ be a solution of size at most $k$. Let $V(F)$ be the set of vertices that are incident to at least one of the arcs of $F$. Clearly $|V(F)| \leq 2k$. Now, consider $V(T) \setminus V(F)$. Since \cref{redrule:triangle_free_vertex} is not applicable to $(T,S,k)$, every $S$-vertex in $V(T) \setminus V(F)$  participates in some $S$-triangle. Moreover, since $F$ is a solution, one of the arcs of that $S$-triangle must be in $F$. Since \cref{redrule:many_triangle} is not applicable to $(T,S,k)$, it follows that for every arc $e \in F$, there are at most $k$ $S$-vertices that are involved in $S$-triangles where $e$ serves as a common arc and the other arcs are disjoint. So, the number of such $S$-vertices in $V(T) \setminus V(F)$ is upper bounded by $k^2$. So, the overall number of $S$-vertices is bounded by $k^2 + 2k =  (k+1)^2 -1$. 
	\end{proof}
	
	Let $(T,S,k)$ be an instance where none of the Reduction Rules \ref{redrule:sanity_check}-\ref{redrule:no_instance} are applicable. This immediately bounds the number of $S$-vertices in $(T,S,k)$. We formalize it in the following Lemma.
	
	\begin{lemma}\label{lem:ter}
		Let $(T,S,k)$ be an instance where none of the Reduction Rules \ref{redrule:sanity_check}-\ref{redrule:no_instance} are applicable. Then the number of terminals, that is, $|S|$ is upper bounded by $(k+1)^2$.
	\end{lemma}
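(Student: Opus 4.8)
The plan is to observe that this lemma is an immediate consequence of the fact that Reduction Rule~\ref{redrule:no_instance} (the Bounded Terminal Rule) is not applicable to $(T,S,k)$. Recall that this rule, by its very statement, fires precisely when $|S| \geq (k+1)^2$, under the standing assumption---already part of the hypothesis here---that Reduction Rules~\ref{redrule:sanity_check}--\ref{redrule:many_triangle} are inapplicable. So the only thing to do is read off the contrapositive of the trigger condition.

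Concretely, first I would invoke the hypothesis that the instance is reduced with respect to Reduction Rule~\ref{redrule:no_instance}. By definition of ``reduced with respect to a reduction rule'', none of the conditions of the rule apply, so in particular the guard $|S| \geq (k+1)^2$ must fail. Hence $|S| < (k+1)^2$, which in particular gives $|S| \leq (k+1)^2$, as claimed. No case analysis or further construction is required.

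It is worth noting where the genuine content of this quadratic bound actually resides: not in the present lemma, but in the soundness proof of Reduction Rule~\ref{redrule:no_instance}. There, the combinatorial estimate---charging at most $2k$ terminals to the endpoints $V(F)$ of a size-$\le k$ solution, and each remaining terminal in $V(T)\setminus V(F)$ to a distinct solution arc via an $S$-triangle (using that \cref{redrule:triangle_free_vertex} forces every such terminal into a triangle and \cref{redrule:many_triangle} caps at $k$ the terminals chargeable to any single arc), for a further $k^2$ terminals---is what justifies returning \no once $|S|\geq (k+1)^2$. Given that soundness, the conclusion here is purely formal. Accordingly, I expect no real obstacle; the single point to confirm is simply that all earlier rules are inapplicable, which is exactly the stated assumption.
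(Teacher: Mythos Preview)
Your proposal is correct and matches the paper's treatment: the paper states this lemma without proof, noting just before it that the non-applicability of Reduction Rule~\ref{redrule:no_instance} ``immediately bounds the number of $S$-vertices,'' which is exactly the contrapositive argument you give. Your remark that the real combinatorial content lives in the soundness proof of Reduction Rule~\ref{redrule:no_instance} is also on point.
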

	
	
	\subsection{Linear Upper Bound on the Number of Terminals}
	In this section, we aim to improve the bound of $|S|$ from quadratic to linear, i.e., $\OO(k^2)$ to $\OO(k)$. Our procedure for establishing a linear upper bound on the number of terminals is inspired by the $4k+1$ vertex kernel for {\sc FAST} obtained by Paul, Perez and Thomassé~\cite{DBLP:journals/jcss/PaulPT16}, as well as the earlier linear kernel of size $(2+\epsilon)k$ developed by Bessy et. al.~\cite{DBLP:journals/jcss/BessyFGPPST11}. For consistency with the literature, we adopt the notation used in~\cite{DBLP:journals/jcss/PaulPT16}. 
	
	Consider an instance $(T,S,k)$ of \sfastr such that none of the Reduction Rules \ref{redrule:sanity_check} - \ref{redrule:no_instance} are applicable to the instance. By \cref{lem:ter}, we know that $|S|$ is upper bounded by $(k+1)^2$. If $ |S| \leq 4k $, then we are already done.  So from now onwards we assume that $ |S| \geq 4k+1 $. 
	
	We use $ T_{\sigma} = (V , A, \sigma) $ to denote a tournament whose vertices are arranged according to a fixed ordering $ \sigma = v_1, \ldots, v_n$ on $ V(T) $. We say that an arc $ (v_i, v_j) $ of $ T_{\sigma} $ is a {\em backward} arc if $ i >j $, otherwise {\em forward}. We say that an arc $ (v_j, v_i) $ of $ T_{\sigma} $ is a {\em $S$-backward arc} if $ j >i $ and either of (i), (ii) holds: (i)  one of $v_i, v_j$ is a $S$-vertex, (ii) there exists a $q$  such that $i < q < j$ such that $v_{q} \in S$.  
	
	A partition $ P = \{V_1, \ldots , V_\ell\} $ of $ V $ is an ordered partition of an ordered tournament $T_{\sigma} $ if for every $ i \in [\ell] $, $ V_i $ is a set of consecutive vertices (interval) in $ \sigma $. Hereafter, an ordered partition of $T_{\sigma} $ will be denoted by $ P_{\sigma} = (V_1, \ldots, V_{\ell}) $, with the assumption that for every $ x \in V_i $ and $ y \in V_j $ where $ 1 \leq  i <j \leq  \ell $, it holds that $x <_{\sigma} y$. We call  all arcs between the intervals (having their
	endpoints in different intervals) as external arcs of $ P_{\sigma} $ and denote them by   $ A_B(T_{\sigma}, P_{\sigma})$, i.e., $ A_B(T_{\sigma}, P_{\sigma}) = \{(u,v)~|~u \in V_i, v \in V_j,i \neq j\} $. Let $ A_I(T_{\sigma}, P_{\sigma}) = A(T) \setminus A_B(T_{\sigma}, P_{\sigma})$. See \Cref{fig:partition3} for an illustration of an ordered partition $P_{\sigma}$. 
	
	\begin{figure}[ht!]
		\begin{center}
			\includegraphics[scale=0.5]{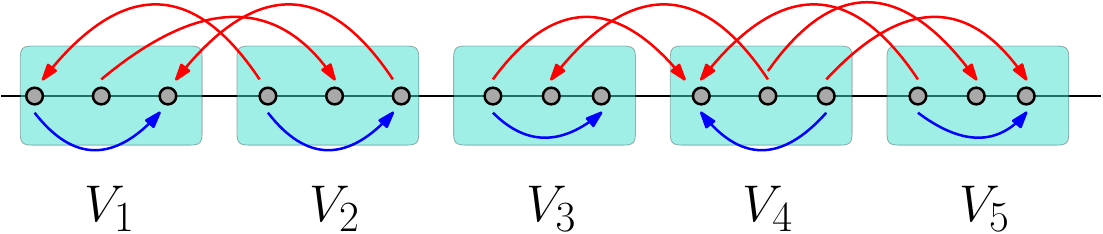}
		\end{center}
		\caption{Illustration of an ordered partition $ P_{\sigma} = (V_1, \ldots, V_{5}) $. Red and blue arcs signifies the arc sets $A_B(T_{\sigma}, P_{\sigma})$ and $A_I(T_{\sigma}, P_{\sigma})$, respectively.}
		\label{fig:partition3}
	\end{figure}
	
	\begin{definition}[certificate of an arc]	
		{\em Consider an ordered partition $ P_{\sigma} = (V_1, \ldots, V_{\ell}) $ of an ordered tournament  $T_{\sigma}$ and let $ f = (v,u) $ be a backward arc of $T_{\sigma} $. We refer to a {\em certificate} of $f$, denoted by $ c ( f ) $, as 
			any directed    $S$-path from $ u $ to $ v $ that utilizes only forward arcs from $ A_B(T_{\sigma}, P_{\sigma}) $.}
	\end{definition}
	
	\begin{definition}[certification of an arc set]	
		{\em Consider an ordered partition $ P_{\sigma} = (V_1, \ldots, V_{\ell}) $ of an ordered tournament  $T_{\sigma}$  and let $ F $ be a set of some backward arcs of $T_{\sigma} $. 
			We say that	{\em we can certify} $ F $ whenever it is possible to find a set $ \FF= \{c(f) : f \in F\} $ of arc-disjoint  certificates for the arcs in $ F $.}
	\end{definition}

	\begin{definition}[safe partition]
		
		{\em Consider an ordered partition $ P_{\sigma} = (V_1, \ldots, V_{\ell}) $ of an ordered tournament  $T_{\sigma}$. Let $B_E$ denote the set of all backward arcs in $ A_B(T_{\sigma}, P_{\sigma}) $. For a vertex subset $S \subseteq V(T)$, we say $ P_{\sigma}$ is  a
			{\em $ S $-safe partition}  if   we can certify $B_E$. 
				
		}
		
	\end{definition}
	
	Let $D_{\sigma} = (V_{\sigma} , A)$ be an ordered directed graph. We use $\textsf{Sfas}(D_{\sigma},S) $  to denote the size of a minimum subset feedback arc set in the directed graph $D_{\sigma}$. Specifically, given a directed graph  $D_{\sigma}$, and a vertex subset $S \subseteq V_{\sigma}$, it refers to the cardinality of the smallest
	set $ F \subseteq A$ of arcs whose removal (or reversal) renders $ D_{\sigma} $ $ S $-acyclic. Given a tournament, for an arc subset $A' \subseteq A[T]$, the directed graph $T[A']$ is defined by the subgraph $T$ containing the arc set $A'$. More specifically, $T[A']= (V(T), A')$.  For a ease of notation, when the subset is clear to the context, we use $\textsf{Sfas}(D_{\sigma}) $ instead of $\textsf{Sfas}(D_{\sigma}) $.
	
	\begin{lemma}\label{lem:safe}
		Consider an ordered partition $ P_{\sigma} = (V_1, \ldots, V_{\ell}) $ of an ordered tournament  $T_{\sigma}$ and a terminal set $S \subseteq V(T)$. If $ P_{\sigma}$ is  a $ S $-safe partition then, 
		
		$$ \textsf{Sfas}(T_{\sigma}) = \textsf{Sfas}(T_{\sigma} [A_I(T_{\sigma}, P_{\sigma})])+\textsf{Sfas}(T_{\sigma} [A_B(T_{\sigma}, P_{\sigma})])$$
		Moreover, there
		exists a minimum sized $S$-feedback arc set of $ T_{\sigma} $ containing $ B_E $, where $B_E$ denote the set of all backward arcs in $ A_B(T_{\sigma}, P_{\sigma}) $.
		
	\end{lemma}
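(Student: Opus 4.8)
The plan is to prove the two inequalities separately and read off the ``moreover'' clause from the argument for the upper bound. Throughout, write $T_I := T_{\sigma}[A_I(T_{\sigma},P_{\sigma})]$ and $T_B := T_{\sigma}[A_B(T_{\sigma},P_{\sigma})]$, so that $A_I$ and $A_B$ partition $A(T)$, every arc of $T_I$ lies inside a single part, and every arc of $T_B$ joins two distinct parts. The one combinatorial fact I would use repeatedly is that, relative to $P_{\sigma}$, each arc is internal (keeps the part index fixed), forward external (strictly increases it), or backward external (strictly decreases it); hence along any directed cycle the total change of part index is $0$. For the easy direction $\textsf{Sfas}(T_{\sigma}) \ge \textsf{Sfas}(T_I)+\textsf{Sfas}(T_B)$, take any minimum $S$-fas $F^{\star}$ of $T_{\sigma}$ and split it as $F_I^{\star}=F^{\star}\cap A_I$ and $F_B^{\star}=F^{\star}\cap A_B$. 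Any $S$-cycle $C$ of $T_I$ uses only internal arcs, hence stays inside one part and is also an $S$-cycle of $T_{\sigma}$; since $C$ touches only $A_I$, the sets $F^{\star}$ and $F_I^{\star}$ reverse exactly the same arcs of $C$, so $C$ is destroyed already in $T_I\circledast F_I^{\star}$, and $F_I^{\star}$ is an $S$-fas of $T_I$. The identical pull-back applied to $T_B$ (whose $S$-cycles use only external arcs) shows $F_B^{\star}$ is an $S$-fas of $T_B$. As $A_I\cap A_B=\emptyset$, we get $|F^{\star}|=|F_I^{\star}|+|F_B^{\star}|\ge \textsf{Sfas}(T_I)+\textsf{Sfas}(T_B)$. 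Note that this direction does not use safeness.

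The safe hypothesis enters only to pin down the external part, via the claim $\textsf{Sfas}(T_B)=|B_E|$. The upper bound $\textsf{Sfas}(T_B)\le|B_E|$ is immediate: after reversing $B_E$ every external arc points forward, so $T_B\circledast B_E$ has all arcs forward in $\sigma$ and is acyclic. For the matching lower bound I would use that $P_{\sigma}$ is $S$-safe: each backward arc $f=(v,u)\in B_E$ has a certificate $c(f)$, an $S$-path from $u$ to $v$ through forward external arcs, so $c(f)$ together with $f$ forms an $S$-cycle $C_f$ lying entirely in $T_B$. Because the certificates are arc-disjoint and a forward arc can never equal the backward arc $f$, the cycles $\{C_f : f\in B_E\}$ are pairwise arc-disjoint $S$-cycles of $T_B$; any $S$-fas of $T_B$ must reverse at least one arc of each, so $\textsf{Sfas}(T_B)\ge|B_E|$, giving equality.

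For $\textsf{Sfas}(T_{\sigma})\le \textsf{Sfas}(T_I)+\textsf{Sfas}(T_B)$ together with the ``moreover'' clause I would exhibit an explicit solution. Let $F_I$ be a minimum $S$-fas of $T_I$ and set $F := F_I \cup B_E$; since $F_I\subseteq A_I$ and $B_E\subseteq A_B$ are disjoint, $|F| = \textsf{Sfas}(T_I)+|B_E| = \textsf{Sfas}(T_I)+\textsf{Sfas}(T_B)$. It remains to check that $F$ is an $S$-fas of $T_{\sigma}$. In $T_{\sigma}\circledast F$ every external arc now points forward, because the backward external arcs were exactly $B_E$; hence by the part-index observation any directed cycle has non-decreasing part index, is forced to be constant, and therefore stays inside a single part using only internal arcs. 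Such a cycle sees only the reversals in $F\cap A_I = F_I$, so it is a cycle of $T_I\circledast F_I$, which is $S$-acyclic; thus no $S$-cycle survives. Consequently $F$ is a minimum $S$-fas of $T_{\sigma}$ containing $B_E$, which simultaneously yields the upper bound, the claimed equality, and the ``moreover'' statement.

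The main obstacle is exactly this upper-bound direction. The tempting move of gluing together arbitrary optima of $T_I$ and $T_B$ fails, since their union need not hit cycles of $T_{\sigma}$ that mix internal and external arcs; one must instead commit to the specific choice $B_E$ for the external part, whose optimality is precisely what safeness delivers through the arc-disjoint certificate packing. The decisive verification is that reversing $B_E$ ``flattens'' all external arcs into forward arcs and thereby confines every remaining cycle to a single part, reducing the global problem to the already-solved internal instances.
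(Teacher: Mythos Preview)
Your proof is correct and follows essentially the same approach as the paper's: the lower bound via splitting an optimal $S$-fas along the arc bipartition $A_I\uplus A_B$, and the upper bound by taking $F_I\cup B_E$ where $F_I$ is optimal for $T_I$, after observing that reversing $B_E$ forces every surviving cycle into a single part. The paper states the lower bound as a one-line general fact and leaves $\textsf{Sfas}(T_B)=|B_E|$ mostly implicit, whereas you spell out both directions of that equality (including the arc-disjoint $S$-cycle packing from the certificates) and the part-index monotonicity argument; but the skeleton is the same.
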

	
	\begin{proof}
		Given any bipartition of the arc set \( A \) into \( A_1 \) and \( A_2 \), we have 
		$$
		\textsf{Sfas}(T_{\sigma}) \geq \textsf{Sfas}(T_{\sigma}[A_1]) + \textsf{Sfas}(T_{\sigma}[A_2]).$$
		Specifically, for the partition of \( A \) into \( A_I(T_{\sigma}, P_{\sigma}) \) and \( A_B(T_{\sigma}, P_{\sigma}) \), it follows that 
		$$\textsf{Sfas}(T_{\sigma}) \geq \textsf{Sfas}\big(T_{\sigma}[A_I(T_{\sigma}, P_{\sigma})]\big) + \textsf{Sfas}\big(T_{\sigma}[A_B(T_{\sigma}, P_{\sigma})]\big).$$ 
		Next, we need to show that 
		$$
		\textsf{Sfas}(T_{\sigma}) \leq \textsf{Sfas}\big(T_{\sigma}[A_I(T_{\sigma}, P_{\sigma})]\big) + \textsf{Sfas}\big(T_{\sigma}[A_B(T_{\sigma}, P_{\sigma})]\big).
		$$
		
		This assertion holds because, after reversing all arcs in \( B_E \), each remaining directed \( S \)-cycle is contained within \( T_{\sigma}[V_i] \) for some \( i \in [\ell] \). In other words, once all arcs in \( B_E \) are reversed, every \( S \)-cycle is entirely contained within \( T_{\sigma}[A_I(T_{\sigma}, P_{\sigma})] \). Observe that as $ P_{\sigma}$ is   $ S $-safe,    the set of all backward arcs of  $A_B(T_{\sigma}, P_{\sigma})$, i.e., \( B_E \) can be certified using only arcs from $A_B(T_{\sigma}, P_{\sigma})$. This concludes the proof of the first part of the lemma. Essentially, we have shown the existence of a minimum-sized \( S \)-feedback arc set for \( T_{\sigma} \) that includes \( B_E \).  This completes the proof of the lemma.
	\end{proof}
	
	\subparagraph{Construction of  a $\boldsymbol{S}$-safe partition.}
	Recall that an \( S \)-triangle is a directed cycle of length three that includes at least one vertex from \( S \). It is clear that the number of arc-disjoint \( S \)-triangles provides a lower bound for the size of the smallest \( S \)-feedback arc set in a tournament. We illustrate how this set can be utilized to identify a safe partition in polynomial time.
	
	
	\begin{definition}[Conflict $S$-packing]
		{\em  A {\em conflict $S$-packing} of a tournament $ T $ is a maximal collection of arc-disjoint $S$-triangles.}
	\end{definition}
	Observe that a conflict \( S \)-packing can be computed in polynomial time, for instance, by using the following \emph{greedy} algorithm: find an \( S \)-triangle, delete its arcs, and continue. Given an \( S \)-conflict packing \( \mathcal{C} \) of an instance \( (T, S, k) \) of \sfastr, let \( V(\mathcal{C}) \) denote the set of vertices covered by \( \mathcal{C} \), and \( A(\mathcal{C}) \) denote the set of all arcs contained in \( \mathcal{C} \). See \Cref{fig:partition2} for an illustration of conflict \( S \)-packing. The following observation is immediate.
	
	\begin{observation}\label{obs:cfno}
		Let $ \CC $ be a conflict $ S $-packing of a tournament $ T $. If $ (T,S,k) $ is an \yes instance of  \sfastr, then $ |\CC| \leq   k $ and $ |V (\CC)| \leq   3k $.
		
	\end{observation}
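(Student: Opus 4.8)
The plan is to turn the \yes promise directly into a hitting-set style bound on the packing. First I would invoke the definition of a \yes instance to fix a reversal solution $F \subseteq A(T)$ with $|F| \leq k$ such that $T \circledast F$ has no $S$-cycle; by \Cref{lem:Scycle} this is equivalent to $T \circledast F$ containing no $S$-triangle. This $S$-triangle-free property of $T \circledast F$ is the single fact I will exploit.

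The key claim is that every $S$-triangle in $\CC$ must contain at least one arc of $F$. To see this, suppose some $S$-triangle $\triangle \in \CC$ had all three of its arcs outside $F$. Then none of these arcs is reversed, so all three remain present (with the same orientation) in $T \circledast F$, and $\triangle$ would survive as an $S$-triangle there, contradicting that $T \circledast F$ is $S$-triangle-free. Hence each member of $\CC$ uses at least one arc of $F$.

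With this in hand the counting is routine. Since the triangles in $\CC$ are pairwise arc-disjoint, I would assign to each $\triangle \in \CC$ one of its arcs lying in $F$; this assignment is injective, because if two distinct triangles were charged to the same arc of $F$ they would share that arc, violating arc-disjointness. The injection into $F$ gives $|\CC| \leq |F| \leq k$. For the vertex bound, each $S$-triangle spans exactly three vertices, so $V(\CC)$ is covered by $|\CC|$ triples and $|V(\CC)| \leq 3|\CC| \leq 3k$. I do not expect a genuine obstacle here; the only step that warrants care is verifying that the triangle-to-arc charging is well defined and injective, which is exactly where arc-disjointness of the packing is used, and the rest is elementary counting.
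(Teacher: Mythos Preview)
Your proposal is correct and is exactly the intended argument; the paper in fact states this observation as immediate without writing out a proof, and your hitting-set charging (each arc-disjoint $S$-triangle must contain an arc of the reversal solution $F$, giving an injection $\CC \hookrightarrow F$) is precisely what makes it immediate.
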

	
	\begin{figure}[ht!]
		\begin{center}
			\includegraphics[scale=0.5]{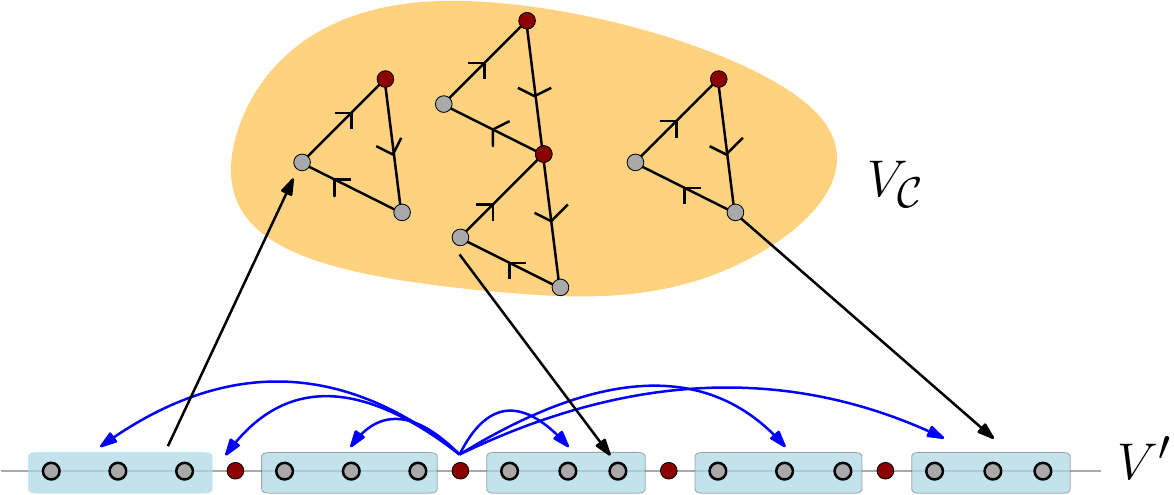}
		\end{center}
		\caption{Illustration of Conflict $S$-packing. Vertices with color red are $S$-vertices.}
		\label{fig:partition2}
	\end{figure}
	
	\begin{lemma}[Conflict $S$-packing Lemma]\label{lem:conpack}
		
		Let $ \CC $ be a conflict $ S $-packing of a tournament $ T $ and let $S_1 \coloneq S \setminus V(\mathcal{C})$. There exists a vertex ordering $ \sigma $ of $ T $ such that
		every $ S_1 $-backward arc $ (u,v) $ of $ T_{\sigma} $ satisfies $ (u,v) \in A(\CC) $.
	\end{lemma}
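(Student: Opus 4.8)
The plan is to construct the ordering $\sigma$ in two stages: first order the \emph{conflict-free} vertices $I = V(T)\setminus V(\CC)$, which carry exactly the terminals of $S_1 = S\cap I$, and then insert the at most $3k$ vertices of $V(\CC)$ into the gaps of the resulting order. First I would observe that $T[I]$ contains no $S$-triangle: every arc of such a triangle would have both endpoints in $I$ and hence lie outside $A(\CC)$, so the triangle would be arc-disjoint from $\CC$ and could be added to it, contradicting maximality. Thus $T[I]$ is $S$-acyclic by \Cref{lem:Scycle}, and by \Cref{obs:ordering} it admits a $\spclorder$; linearizing its non-singleton parts arbitrarily yields a total order $\sigma_I$ of $I$ in which every $s\in S_1$ is a \emph{cut point}, i.e.\ all in-neighbours of $s$ inside $I$ precede $s$ and all out-neighbours of $s$ inside $I$ follow it. The terminals of $S_1$ therefore split $\sigma_I$ into gaps.

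Next I would place each $v\in V(\CC)$ into a single gap, and for this I first establish a \emph{monotonicity} property. Listing $S_1=\{s_1,\dots,s_r\}$ along $\sigma_I$, the set $\{s\in S_1 : (s,v)\in A(T)\}$ of terminals that ought to precede $v$ should form a prefix. Indeed, if some $s_i$ with $(v,s_i)\in A(T)$ preceded some $s_j$ with $(s_j,v)\in A(T)$, then $(s_i,s_j)\in A(T)$ (since $s_i$ precedes $s_j$ in $\sigma_I$ and both are cut points), so $v\to s_i\to s_j\to v$ would be an $S$-triangle; all three of its arcs meet $s_i$ or $s_j\in I$ and hence none lies in $A(\CC)$, again contradicting maximality. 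Consequently there is a well-defined gap — after the last terminal pointing to $v$ and before the first terminal that $v$ points to — into which $v$ is placed; inside each gap the order may be arbitrary. This completes the definition of $\sigma$.

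It then remains to verify that every $S_1$-backward arc of $T_\sigma$ belongs to $A(\CC)$, and I would split this according to the two clauses of the definition. For clause (i), no backward arc can be incident to a terminal $s\in S_1$: by construction every vertex, whether in $I$ or in $V(\CC)$, is placed before $s$ precisely when it points to $s$ and after $s$ precisely when $s$ points to it, so all arcs at $s$ are forward. For clause (ii), suppose a backward arc $(u,v)$ jumps over some $s\in S_1$, i.e.\ $v<_\sigma s<_\sigma u$; since no backward arc touches $s$ we get $(v,s),(s,u)\in A(T)$, so $v\to s\to u\to v$ is an $S$-triangle. By maximality one of its arcs lies in $A(\CC)$, and as $(v,s)$ and $(s,u)$ meet $s\in I$ they cannot be in $A(\CC)$, forcing $(u,v)\in A(\CC)$.

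The hard part will be stage one together with the monotone-insertion step: one must extract from maximality of the $S$-packing the \emph{simultaneous} cut-point structure for all of $S_1$ (which is exactly what the $\spclorder$ of $T[I]$ supplies) and then show that any forbidden ``crossing'' placement of a $V(\CC)$-vertex manufactures an $S$-triangle whose two terminal-incident arcs automatically lie outside $A(\CC)$, thereby forcing the third arc into $A(\CC)$. Once this triangle-forcing mechanism is isolated, verifying both clauses of the $S_1$-backward-arc definition becomes routine.
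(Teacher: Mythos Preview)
Your proposal is correct and follows essentially the same approach as the paper: both arguments observe that $T[V\setminus V(\CC)]$ is $S$-acyclic by maximality of $\CC$, use this to obtain an ordering in which every $s\in S_1$ is a cut point, insert the remaining vertices into the gaps determined by $S_1$, and then verify the conclusion via the identical triangle argument (any $S_1$-backward arc $(u,v)$ straddles some $w\in S_1$, the triangle $v\to w\to u\to v$ has two arcs incident to $w\notin V(\CC)$, so $(u,v)\in A(\CC)$). The only cosmetic difference is that the paper inserts \emph{all} non-$S_1$ vertices---those of $V'$ and those of $V(\CC)$---individually into the gaps of the topological order on $S_1$ (using that $T[S_1\cup\{u\}]$ is acyclic for each $u$), whereas you first linearize the whole $\spclorder$ of $T[I]$ and only afterwards insert $V(\CC)$; your explicit ``prefix'' monotonicity argument is exactly the paper's observation that $T[S_1\cup\{v\}]$ is acyclic, unpacked.
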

	
	\begin{proof}
		Consider the vertex set \( \{v_1, v_2, \ldots, v_n\} \) in \( T \). Let \( \mathcal{C} \) be a conflict \( S \)-packing of a tournament \( T \). We aim to construct a unique vertex ordering \( \sigma \) of \( T \). Define the vertex set \( S_1 \coloneq S \setminus V(\mathcal{C}) \). Since \( |S| \geq 4k + 1 \), it follows that \( S_1 \neq \emptyset \). Let \( V' = V \setminus V(\mathcal{C}) \).
		
		By the maximality of \( \mathcal{C} \), the induced subgraph \( T[V'] \) is \( S \)-acyclic. Therefore, there exists a unique topological ordering \( \sigma_{S_1} \) for the vertices of \( S_1 \). For all other vertices in \( V' \), specifically for each non-\( S \)-vertex \( u \) in \( V' \), we know that \( T[S_1 \cup \{u\}] \) is acyclic. Thus, for each non-\( S \)-vertex \( u \in V' \), there is a unique pair \( p, q \) of consecutive \( S_1 \) vertices in \( \sigma_{S_1} \) such that inserting \( u \) between \( p \) and \( q \) yields the topological ordering of \( T[S_1 \cup \{u\}] \).  In this way we obtain an ordering \( \sigma_{V'} \) of \(V'\).

		Our next aim is to insert the vertices of \( V(\mathcal{C}) \) into the ordering  \( \sigma_{V'} \). Observe that for each \( v \in V(\mathcal{C}) \), \( T[S_1 \cup \{v\}] \) is acyclic; in fact, \( T[V' \cup \{v\}] \) is \( S \)-acyclic. Therefore, for every vertex \( v \in V(\mathcal{C}) \), there exists a unique pair \( p, q \) of consecutive \( S_1 \) vertices in \( \sigma_{S_1} \) such that inserting \( v \) between \( p \) and \( q \) yields the topological ordering of \( T[S_1 \cup \{v\}] \). 
		

		However, in this manner, it is possible to insert more than one non \( S \)-vertex from \( V' \) as well as  vertices from \( V(\mathcal{C}) \) between \( p \) and \( q \). Let \( {\sf Bag}_{pq} \) denote the set of non-\( S \) vertices from \( V' \) and vertices from \( V(\mathcal{C}) \) that are inserted between \( p \) and \( q \). 
		To ensure the formation of a unique vertex ordering, we adopt the following convention: for any pair of  vertices \( v_i, v_j \) in \( {\sf Bag}_{pq} \), we define \( v_i <_{\sigma} v_j \) if and only if \( i < j \). This approach allows us to obtain a unique topological ordering \( \sigma \) for $T$.

		It remains to show that every $ S_1 $-backward arc $ (u,v) $ of $ T_{\sigma} $ satisfies $ (u,v) \in A(\CC) $. Assume towards contradiction that there is a $ S_1 $-backward arc $ (u,v) $ of $ T_{\sigma} $ such that $ (u,v) \notin A(\CC) $. Observe that according to our construction of $\sigma$, no $S_1$-backward arc is incident to a vertex in $S_1$. This indeed imply that   there is a $S_1$-vertex $w$ such that $v <_{\sigma} w <_{\sigma} u$. Moreover, $\{(v,w), (w,u)\} \subseteq  A(T)$. Now $w \notin V(\CC)$ and $ (u,v) \notin A(\CC) $ together imply that   none of the arcs in $\{(u,v), (v,w), (w,u)\}$ are contained in $A(\CC)$, a contradiction to the maximality of $\CC$.  \end{proof}
		

Let $ \CC $ be a conflict $S$-packing of $T$ and $\sigma$ be the vertex ordering obtained by \cref{lem:conpack}. 

\medskip 

\noindent Consider the auxiliary bipartite graph $H= (X \uplus Y, E)$ being constructed as follows:
\begin{itemize}
	\item For every arc $ (u,v) $ in $\CC$, we add a vertex $x_{u,v}$ in $ X $,
	\item $ Y= S \setminus V(\CC) $,
	\item For a pair of vertices $x_{u,v} \in X$ and $y \in Y$ we add an edge $ x_{u,v}y$ in $ E(H) $ if and only if   $\{u,v,y\}$ forms a directed triangle  in $ T $, i.e., $(y,u), (v,y) \in A(T)$. 
\end{itemize}

\begin{observation}
	\label{obs:isolated}
	No vertex of $ Y $ is isolated in $ H $.
\end{observation}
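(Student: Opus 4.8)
The plan is to fix an arbitrary $y \in Y = S \setminus V(\CC)$ and exhibit a neighbor of $y$ in $X$; that is, to find an arc of $A(\CC)$ which together with $y$ forms a directed triangle of the type demanded by the edge relation of $H$.

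First I would use that the instance is reduced with respect to \cref{redrule:triangle_free_vertex}. Since $y \in S$, the vertex $y$ cannot be triangle-free, so $y$ lies on some triangle $\triangle$ of $T$; and because $y \in S$, this $\triangle$ is in fact an $S$-triangle, say on vertex set $\{y,u,v\}$.

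The key step is to combine $y \notin V(\CC)$ with the maximality of the conflict $S$-packing $\CC$. Every arc of $A(\CC)$ has both endpoints in $V(\CC)$, so neither of the two arcs of $\triangle$ incident to $y$ can lie in $A(\CC)$; the only candidate among the three arcs of $\triangle$ is the arc between $u$ and $v$. Moreover, since $\triangle$ contains $y \notin V(\CC)$ we have $\triangle \notin \CC$, so if $\triangle$ were arc-disjoint from every triangle of $\CC$ then $\CC \cup \{\triangle\}$ would be a strictly larger arc-disjoint family of $S$-triangles, contradicting the maximality of $\CC$. Hence $\triangle$ must share an arc with $\CC$, and by the preceding observation that shared arc is precisely the $u$--$v$ arc. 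Orienting it as $(u,v) \in A(\CC)$, the cyclic triangle then reads $y \to u \to v \to y$, giving $(y,u),(v,y) \in A(T)$.

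Finally I would simply read off the edge relation of $H$: the condition $(y,u),(v,y) \in A(T)$ is exactly the defining condition for $x_{u,v}y \in E(H)$, so $y$ is adjacent to $x_{u,v}$ and is therefore not isolated. I do not anticipate a genuine obstacle here; the one point requiring care is the bookkeeping of orientations, namely checking that whichever of the two cyclic orientations $\triangle$ carries on $\{y,u,v\}$, the arc not incident to $y$ is the one forced into $A(\CC)$ and its orientation matches the edge condition by construction. Both orientations are handled symmetrically, so this is routine rather than a real difficulty.
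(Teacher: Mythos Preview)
Your proposal is correct and follows essentially the same approach as the paper's proof: both use that the instance is reduced with respect to \cref{redrule:triangle_free_vertex} to place $y$ on some triangle, then invoke maximality of $\CC$ to force an arc of that triangle into $A(\CC)$, yielding a neighbor of $y$ in $H$. Your version is simply more explicit about why the shared arc must be the one not incident to $y$ and about the orientation bookkeeping, but the underlying argument is identical.
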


\begin{proof}
	Assume towards contradiction that the graph $H$ has an isolated vertex $y \in Y$. This implies that $y$ does not belong to any triangle that contains an arc from $A(\CC)$. 
	But since $y$ is a $S$-vertex and our current instance is reduced with respect to  \cref{redrule:triangle_free_vertex}, the vertex $y$ must belongs to some triangle, say $\triangle$.  Now $A(\CC)$ contains no arc of $\triangle$, a contradiction to the maximality of $\CC$.
\end{proof}

The following fact is known on matching and vertex cover in bipartite graphs.

\begin{proposition}[\cite{DBLP:journals/jcss/PaulPT16}, Lemma 2.1.]\label{prop:vc}
	Let $ B= (V_1 \uplus V_2, E)$ be a  bipartite graph, $ M $ be a minimum vertex cover of $ B $, $ M_1 = V_1 \cap M, M_2 = V_2 \cap M $. Any subset $I \subseteq M_1$ can be matched into $V_2 \setminus M_2$.
\end{proposition}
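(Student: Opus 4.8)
The plan is to deduce the desired matching from Hall's theorem, using the minimality of $M$ as the source of the Hall condition via a contradiction argument. First I would reformulate the goal: a subset $I \subseteq M_1$ can be matched into $V_2 \setminus M_2$ exactly when, in the bipartite graph obtained by restricting $B$ to $I$ on one side and $V_2 \setminus M_2$ on the other, Hall's condition holds. Concretely, it suffices to show that for every $I' \subseteq I$ the set $N(I') \cap (V_2 \setminus M_2)$ of neighbours of $I'$ that lie outside the cover on the $V_2$-side has cardinality at least $|I'|$. So the whole statement reduces to verifying this defect-free Hall condition, and I would argue by contradiction.

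Suppose instead that some $I \subseteq M_1$ has no such matching. Then Hall's theorem supplies a witness set $I' \subseteq I$ with $|J| < |I'|$, where $J = N(I') \cap (V_2 \setminus M_2)$. The key step is then an \emph{exchange} on the cover: remove the vertices $I'$ from $M$ and insert the strictly smaller set $J$ in their place, i.e.\ consider $M^{\star} = (M \setminus I') \cup J$. Because $I' \subseteq M_1 \subseteq V_1$ while $J \subseteq V_2 \setminus M_2$ is disjoint from $M$, the replacement changes the size by exactly $-|I'| + |J| < 0$, so $|M^{\star}| < |M|$.

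The only real obstacle—and the step I would carry out most carefully—is checking that $M^{\star}$ is still a vertex cover, since that immediately contradicts the minimality of $M$ and finishes the proof. I would verify it edge by edge: take any edge $uw$ with $u \in V_1$ and $w \in V_2$, which $M$ covers. If $w \in M_2$, then $w$ is untouched by the exchange (as $I' \cap V_2 = \emptyset$), so $w \in M^{\star}$. Otherwise $u \in M_1$ and $w \in V_2 \setminus M_2$; here, if $u \notin I'$ then $u \in M \setminus I' \subseteq M^{\star}$, and if $u \in I'$ then $w$ is a neighbour of $I'$ lying in $V_2 \setminus M_2$, hence $w \in J \subseteq M^{\star}$. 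In every case the edge is covered, so $M^{\star}$ is a vertex cover smaller than $M$, the contradiction we sought; therefore the Hall condition holds and $I$ can be matched into $V_2 \setminus M_2$.
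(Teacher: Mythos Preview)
Your argument is correct: the Hall-condition verification via the exchange $M^\star = (M \setminus I') \cup J$ is exactly the standard proof, and your case analysis showing $M^\star$ remains a vertex cover is complete. Note that the paper does not supply its own proof of this proposition; it is quoted as a known fact from~\cite{DBLP:journals/jcss/PaulPT16}, so there is nothing further to compare against.
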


Observe that $ |X| \leq 3k $ and $ |Y| > k+1 $ (since $ |S| \geq 4k+1 $). It is easy to follow  that any pair of edges in $H$ with no common end-point corresponds to a pair of arc-disjoint $S$-triangle. This indeed imply, any matching in $ H $ of size at least $ k+1 $ corresponds to a conflict $ S $-packing of size at least $ k+1 $, in which case the given instance $ (T,S,k) $ is a \no instance (\cref{obs:cfno}). So the maximum matching has size at most $ k $, then by  König-Egervary’s theorem \cite{DBLP:books/others/BondyM76}, a minimum vertex cover $ M $ of $ H $ has size at most $ k $. We denote $ M_X = M \cap X $ and $ M_Y= M \cap Y $. As $ |Y| > k+1 $ and $|M_Y| \leq k$, we have  that  $ Y \setminus M_Y \neq \emptyset$. 

Let $\sigma$ be the vertex ordering obtained by \cref{lem:conpack}. Let $ P_{\sigma} $ be the ordered partition of $ T_{\sigma} $ such that every part $ V_i $ consists of either a $S$-vertex corresponding to a vertex of $ Y \setminus M_Y $ or a maximal subset of consecutive vertices   in $ \sigma $ without containing  any vertex from the vertex set  corresponding to  $ Y \setminus M_Y $. See \Cref{fig:orderparition} for an illustration of the construction of the ordered partition $ P_{\sigma} $ based on the bipartite graph $H$.

\begin{figure}[t!]
	\begin{center}
		\includegraphics[scale=0.55]{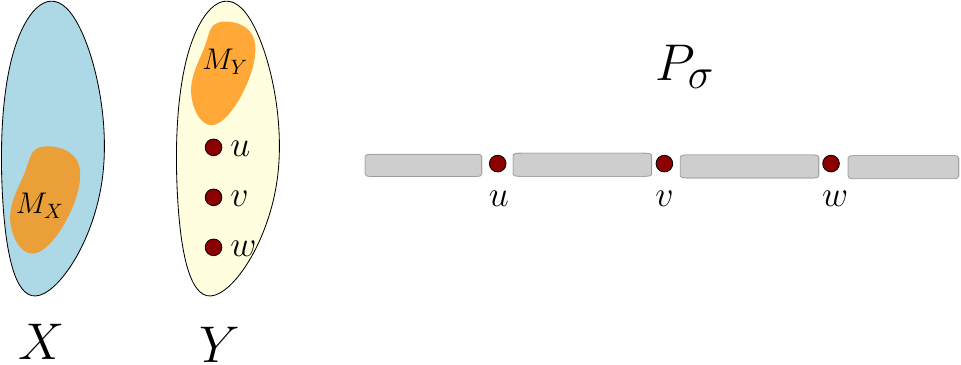}
	\end{center}
	\caption{Illustration of the construction of the ordered partition $ P_{\sigma} $ based on the bipartite graph $H= (X \uplus Y, E)$ The set $M_X \cup M_Y$ is a minimum sized vertex cover in $H$.}
	\label{fig:orderparition}
\end{figure}

\begin{claim}
	$ P_{\sigma} $ is a $S$-safe partition of $ T_{\sigma} $.
\end{claim}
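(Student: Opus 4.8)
The plan is to certify the set $B_E$ of all external backward arcs of $P_\sigma$ by arc-disjoint certificates, which is exactly what being $S$-safe requires. The backbone of the whole argument is one structural feature of the ordering $\sigma$ produced by \Cref{lem:conpack}: every arc incident to a vertex of $S_1$ is forward (this is the observation made inside the proof of \Cref{lem:conpack} that no $S_1$-backward arc touches an $S_1$-vertex). First I would show that $B_E \subseteq A(\CC)$. Take any external backward arc $f=(a,b)$ with $b <_\sigma a$. Since the only part boundaries of $P_\sigma$ sit at the singletons formed by $Y\setminus M_Y$, the fact that $a$ and $b$ lie in distinct parts forces a vertex $y\in Y\setminus M_Y$ with $b<_\sigma y<_\sigma a$ (otherwise the $\sigma$-interval from $b$ to $a$ would be free of $Y\setminus M_Y$ vertices and hence contained in a single block). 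As $y\in S_1$, the arc $f$ is an $S_1$-backward arc, so \Cref{lem:conpack} gives $f\in A(\CC)$; in particular $a,b\in V(\CC)$ and $x_{a,b}$ is a genuine vertex of $X$.

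Next I would use the vertex cover $M$ to pin down these arcs. Fix $f=(a,b)\in B_E$ and the separating vertex $y\in Y\setminus M_Y$ found above. Because $y\in S_1$ and all arcs at $S_1$-vertices are forward, $b<_\sigma y$ yields $(b,y)\in A(T)$ and $y<_\sigma a$ yields $(y,a)\in A(T)$; thus $\{a,b,y\}$ is a directed triangle and $x_{a,b}y\in E(H)$. Since $y\notin M_Y$, the only way for $M$ to cover this edge is $x_{a,b}\in M_X$. Hence the whole set $I=\{x_{a,b}: (a,b)\in B_E\}$ is contained in $M_X$.

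Finally I would invoke \Cref{prop:vc} to match $I$ into $Y\setminus M_Y$: this produces, for every $f=(a,b)\in B_E$, a distinct vertex $y_f\in Y\setminus M_Y$ with $x_{a,b}y_f\in E(H)$. The edge relation $(y_f,a),(b,y_f)\in A(T)$ together with the forward property again forces $b<_\sigma y_f<_\sigma a$, so the two-arc path $b\to y_f\to a$ is a directed $S$-path (it passes through $y_f\in S$) using only forward arcs, and these arcs are external because $y_f$ is a singleton part lying strictly between $b$ and $a$. Taking $c(f)$ to be this path certifies $f$. Arc-disjointness is automatic: each certificate arc has its unique $Y$-endpoint equal to $y_f$, and the $y_f$ are pairwise distinct as they arise from a matching, so no arc is shared between two certificates. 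This certifies $B_E$ and shows $P_\sigma$ is an $S$-safe partition.

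The main obstacle is the interplay between the combinatorial object $H$ and the geometry of $\sigma$: a matching in $H$ only guarantees a triangle, not that the matched terminal sits in the correct positional window between the endpoints of the backward arc. The forward property of $S_1$-incident arcs is precisely what reconciles the two, simultaneously forcing every $H$-neighbour $y$ of $x_{a,b}$ to satisfy $b<_\sigma y<_\sigma a$ and forcing $x_{a,b}$ into the cover $M_X$; establishing and using this property correctly is the crux of the claim.
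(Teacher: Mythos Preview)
Your proposal is correct and follows essentially the same approach as the paper: show each arc of $B_E$ is $S_1$-backward (hence lies in $A(\CC)$), use a separating vertex of $Y\setminus M_Y$ to force the corresponding $X$-vertex into $M_X$, apply \Cref{prop:vc} to match these $X$-vertices into $Y\setminus M_Y$, and read off two-arc certificates through the matched terminals. You are in fact more careful than the paper in one place---you explicitly verify that the matched terminal $y_f$ lies strictly between $b$ and $a$ in $\sigma$ (so that the certificate uses only forward external arcs) and you spell out arc-disjointness via the distinctness of the $y_f$; the paper leaves both points implicit.
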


\begin{proof}
	{Let $B_E$ denote the set of all backward arcs in $ A_B(T_{\sigma}, P_{\sigma}) $.
		Toward proving the claim we need to show that we can certify $B_E$. Consider an arc $e= (u,v)$ in $B_E$. Since $e$ is a $(Y \setminus M_Y)$-backward, $e$ is indeed a $S_1$-backward arc. Now by \cref{lem:conpack}, $(u,v) \in A(\CC)$, there is a vertex $x_{u,v}$ in $X$ (as per the construction of $H$). So each  arc in $B_E$ corresponds to a vertex in $X$. As $(u,v)$ is $(Y \setminus M_Y)$-backward, there exists a 
		a vertex $w \in (Y \setminus M_Y)$ such that $\{u,v,w\}$ forms a  triangle, moreover $S$-triangle in $T$. So $x_{uv}w \in E(H)$.  Since $M$ is a vertex cover of $H$ and $w \notin M$ that imply $x_{u,v} \in M$. Thereby there is a subset $X'' \subseteq X$ corresponding to the arcs in $B_E$ such that $X'' \subseteq M_X$. By \cref{prop:vc}, we know that $X''$ can be matched into $Y \setminus M_Y$ in $H$. Since every vertex of $Y \setminus M_Y$ is singleton in $P_{\sigma}$, the existence of the matching shows that the arcs in $B_E$ can be certified in $P_{\sigma}$. More specifically, if $x_{u,v}w$ is a matching edge then the certificate of $(u,v)$ is the $S$-path from $v$ to $u$ consisting the arc $(v,w)$ and $(w,u)$.}
\end{proof}

Note that a maximum matching $M$ can be found in polynomial time and the auxiliary bipartite graph $H$ can also be constructed in polynomial time.
Moreover, if \( |S| > 4k\), then $Y = S \setminus M_Y$ contains a vertex $s$ and by \cref{obs:isolated}, $s$ is not isolated in $H$.
This implies that there is a triangle $s,p,q$ with arcs $(s,p),(p,q),(q,s)$ where $q$ comes before $s$ and $p$ comes after $s$ in the computed ordering.
Thus, $(p,q)$ is an $Y$-backward arc and this implies the following observation.

\begin{observation}\label{obs:safe}
	If \( |S| > 4k \), then we can find a non-empty subset \( S_1 \subseteq S \) and a \( S_1 \)-safe partition $ P_{\sigma} $ with $B_E \neq \emptyset$ in polynomial time. 
\end{observation}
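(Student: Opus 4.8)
The plan is to package the constructions of this subsection into a single statement, the only genuinely new step being the verification that the resulting partition is non-trivial, i.e.\ that $B_E\neq\emptyset$. I would take $S_1 := Y\setminus M_Y \subseteq S$ as the required terminal subset and let $P_\sigma$ be the ordered partition built above. First I would argue $S_1\neq\emptyset$. Compute a conflict $S$-packing $\CC$ greedily, the ordering $\sigma$ of \cref{lem:conpack}, the bipartite graph $H$, and a minimum vertex cover $M$ of $H$; all of this takes polynomial time. If $\CC$ has more than $k$ triangles, or $H$ has a matching of size more than $k$, then (by \cref{obs:cfno} together with the matching-to-packing correspondence) $(T,S,k)$ is a \no instance and we are done. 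Otherwise $|\CC|\le k$, so $|V(\CC)|\le 3k$ and hence $|Y|=|S|-|V(\CC)|\ge |S|-3k>k$, while $|M_Y|\le|M|\le k$; together these give $|Y\setminus M_Y|\ge 1$, so $S_1=Y\setminus M_Y$ is non-empty.

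Next I would invoke the preceding Claim, which already shows $P_\sigma$ is safe: every backward arc of $A_B(T_\sigma,P_\sigma)$ is $(Y\setminus M_Y)$-backward, hence $S_1$-backward, and so lies in $A(\CC)$ by \cref{lem:conpack}; the corresponding vertices of $X$ lie in $M_X$ and, by \cref{prop:vc}, can be matched into $Y\setminus M_Y$, which yields arc-disjoint certificates, each an $S_1$-path $v\to w\to u$ through a distinct singleton $w\in S_1$. Thus $P_\sigma$ is an $S_1$-safe partition.

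It remains to prove $B_E\neq\emptyset$, which is the main point. Pick any $s\in S_1=Y\setminus M_Y$. By \cref{obs:isolated}, $s$ is not isolated in $H$, so there is an edge $x_{p,q}s\in E(H)$; by the construction of $H$ this means $(p,q)\in A(\CC)$ and $\{s,p,q\}$ is a directed $S$-triangle with arcs $(s,p),(p,q),(q,s)$. The key observation (from the construction of $\sigma$ in \cref{lem:conpack}) is that no $S_1$-backward arc is incident to an $S_1$-vertex; since $s\in S_1$, the arcs $(s,p)$ and $(q,s)$ must therefore be forward, giving $q<_\sigma s<_\sigma p$. Consequently $(p,q)$ is a backward arc whose endpoints are separated in $\sigma$ by the singleton part $\{s\}$ of $P_\sigma$, so $(p,q)$ is an external backward arc and $(p,q)\in B_E$. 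This proves $B_E\neq\emptyset$, and since every object above is computed in polynomial time, the observation follows. The only delicate step is pinning down the order $q<_\sigma s<_\sigma p$ inside the triangle; everything else is an assembly of \cref{lem:conpack}, \cref{obs:isolated}, \cref{prop:vc} and the Claim.
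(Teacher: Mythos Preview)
Your proposal is correct and follows essentially the same route as the paper: set $S_1=Y\setminus M_Y$, use the bound $|M|\le k$ together with $|V(\CC)|\le 3k$ to get $S_1\neq\emptyset$, invoke the preceding Claim for safety, and for $B_E\neq\emptyset$ pick $s\in S_1$, use \cref{obs:isolated} to produce a triangle $\{s,p,q\}$ with $(p,q)\in A(\CC)$, and observe that arcs incident to $s$ are forward in $\sigma$, forcing $q<_\sigma s<_\sigma p$ so that $(p,q)\in B_E$. Your write-up is in fact more explicit than the paper's own paragraph (which handles the ordering step in one sentence), and your explicit treatment of the case $|\CC|>k$ / matching size $>k$ is a clean way to dispose of the degenerate \no-instance branch that the paper leaves implicit.
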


\begin{reduction rule}[Safe partition  Rule] 
	\label{redrule:safepartition}
	Consider a safe  partition $ P_{\sigma} = (V_1, \ldots, V_{\ell}) $ of an ordered tournament  $T_{\sigma}$. Let $B_E$ denote the set of all backward arcs in $ A_B(T_{\sigma}, P_{\sigma})$. If   $B_E \neq \emptyset$, then reverse all the arcs of $B_E$ and decrease $k$ by $|B_E|$, in other words, return the instance $(T \circledast B_E, S, k- |B_E|)$.
\end{reduction rule}

In the following, we prove the correctness of \cref{redrule:safepartition}.

\begin{lemma}
	\cref{redrule:safepartition} is sound and can be applied in polynomial time.
\end{lemma}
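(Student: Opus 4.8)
The plan is to prove soundness of the Safe Partition Rule by invoking \Cref{lem:safe}, which already establishes that when $P_\sigma$ is an $S$-safe partition, there exists a minimum-sized $S$-feedback arc set of $T_\sigma$ containing $B_E$. The core of the argument is therefore to translate this structural fact into the equivalence of the two instances $(T,S,k)$ and $(T \circledast B_E, S, k - |B_E|)$. First I would recall that, by \Cref{lem:minimal_equivalence}, we are working in the reversal setting, so a solution reverses rather than deletes arcs; reversing the backward arcs of $B_E$ and reducing the budget by $|B_E|$ is exactly the operation of ``committing'' to reversing $B_E$.

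For the forward direction, suppose $(T,S,k)$ is a \yes instance. By \Cref{lem:safe} there is a minimum-sized $S$-fas $F$ of $T_\sigma$ with $B_E \subseteq F$ and $|F| \le k$. I would then set $F' = F \setminus B_E$ and argue that $F'$ is a solution for $(T \circledast B_E, S, k - |B_E|)$. The key observation is that $(T \circledast B_E) \circledast F' = T \circledast F$ as tournaments, since reversing $B_E$ and then $F'$ reverses exactly the arcs of $F = B_E \uplus F'$ (the sets are disjoint). Because $T \circledast F$ is $S$-acyclic, so is $(T \circledast B_E) \circledast F'$, and $|F'| = |F| - |B_E| \le k - |B_E|$, giving a valid solution of the required size. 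For the reverse direction, given a solution $\widetilde{F}$ of size at most $k - |B_E|$ for $(T \circledast B_E, S, k - |B_E|)$, I would take $\widetilde{F}' = \widetilde{F} \cup B_E$ and use the same identity $T \circledast \widetilde{F}' = (T \circledast B_E) \circledast \widetilde{F}$ to conclude that $T \circledast \widetilde{F}'$ is $S$-acyclic with $|\widetilde{F}'| \le k$, so $(T,S,k)$ is a \yes instance. Here one must be slightly careful: the backward arcs in $B_E$ become forward arcs after reversal, so $\widetilde F$ and $B_E$ are genuinely disjoint as arc sets in the respective orientations, which keeps the cardinality bookkeeping clean.

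The main obstacle, and the step deserving the most care, is justifying that $B_E$ may be assumed to lie entirely inside an optimal solution; this is precisely what \Cref{lem:safe} provides and is the reason safeness is essential rather than cosmetic. Without the $S$-safe hypothesis one could not guarantee that reversing $B_E$ does not destroy an $S$-cycle that a cheaper solution would have handled differently, so I would emphasize that the certification of $B_E$ via arc-disjoint forward certificates is exactly what makes the ``add $B_E$ back'' step of the reverse direction lossless. Finally, for the polynomial-time claim I would point to the construction preceding the rule: the conflict $S$-packing is computed greedily in polynomial time, the ordering $\sigma$ is obtained from \Cref{lem:conpack}, the auxiliary bipartite graph $H$ and a maximum matching (hence a minimum vertex cover via König-Egerváry) are all computable in polynomial time, and the partition $P_\sigma$ together with $B_E$ is then read off directly, so the entire rule runs in polynomial time.
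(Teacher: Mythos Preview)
Your proposal is correct and takes essentially the same approach as the paper: invoke \Cref{lem:safe} to guarantee an optimal $S$-fas containing $B_E$, then translate this into the equivalence of the two instances, and cite the preceding construction (in the paper, \Cref{obs:safe}) for the polynomial-time bound. The paper's own proof is much terser---it simply states that by \Cref{lem:safe} the two instances are equivalent and cites \Cref{obs:safe} for the running time---so you are in fact spelling out details the paper leaves implicit. One minor remark: in the reverse direction, your disjointness observation does not quite rule out that $\widetilde{F}$ contains arcs of $\rev(B_E)$; however, even if it does, the net number of reversals needed in $T$ is $|B_E| + |\widetilde{F}| - 2|\widetilde{F}\cap \rev(B_E)| \le k$, so the conclusion still holds.
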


\begin{proof}
	Let $P_{\sigma}$ be a safe partition of $T_{\sigma}$. Observe that it is possible to certify all the backward arcs $B_E$, in $P_{\sigma}$. Hence, using \cref{lem:safe}, $T$ has a set of $S$-feedback arc set  of size at most $k$
	if and only if the tournament $T \circledast B_E$ has a $S$-feedback arc set of size at most
	$k - | B_E |$.
	
	The time complexity follows from the fact that we can compute a safe partition in polynomial time (by \cref{obs:safe}).
\end{proof}

From now onwards, we consider instances where Reduction Rules \ref{redrule:sanity_check} - \ref{redrule:safepartition} are not applicable in the corresponding tournament. So we can safely assume that  the number of terminals, that is, $|S|$ is upper bounded by $4k$. We formalize it in the following Lemma.

\begin{lemma}\label{lem:ter2}
	Let $(T,S,k)$ be an instance where none of the Reduction Rules \ref{redrule:sanity_check}-\ref{redrule:safepartition} are applicable. Then the number of terminals, that is, $|S|$ is upper bounded by $4k$.
\end{lemma}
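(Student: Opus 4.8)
The plan is to argue by contradiction, exploiting the machinery built up in \Cref{lem:conpack}, \Cref{obs:isolated}, and the claim that $P_\sigma$ is a $S$-safe partition. Suppose $(T,S,k)$ is reduced with respect to all Reduction Rules \ref{redrule:sanity_check}--\ref{redrule:safepartition}, yet $|S| > 4k$. By \Cref{obs:safe}, the hypothesis $|S| > 4k$ is precisely the trigger that lets us construct, in polynomial time, a non-empty subset $S_1 \subseteq S$ together with an $S_1$-safe partition $P_\sigma$ whose backward-arc set $B_E$ is non-empty. The whole point of this observation is that it certifies applicability of \cref{redrule:safepartition}: that rule fires exactly when we have a safe partition with $B_E \neq \emptyset$.

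First I would invoke \Cref{obs:safe} directly to extract such a safe partition $P_\sigma$ with $B_E \neq \emptyset$ from the assumption $|S| > 4k$. Then I would note that this makes \cref{redrule:safepartition} applicable, contradicting the premise that the instance is reduced with respect to that rule. Hence $|S| \le 4k$, which is the claimed bound. The contradiction is clean because \Cref{obs:safe} does all the heavy lifting: it already packages the conflict $S$-packing of \Cref{lem:conpack}, the König--Egerv\'ary argument bounding the vertex cover by $k$, and the matching-based certification from the safe-partition claim into a single guarantee that, whenever $|S|>4k$, a non-trivial safe partition exists.

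The only subtlety I would be careful about is the edge case: \Cref{obs:safe} gives the bound $|S| \le 4k$ rather than some other linear function, and I should confirm that the instance remains reduced throughout—in particular that applying \cref{redrule:safepartition} strictly decreases the parameter $k$ by $|B_E| \ge 1$, so repeated application terminates and the final reduced instance genuinely satisfies $|S| \le 4k$. I would also remark that all the relevant objects (the conflict $S$-packing, the ordering $\sigma$ of \cref{lem:conpack}, the bipartite graph $H$, a minimum vertex cover via maximum matching, and hence the partition $P_\sigma$) are computable in polynomial time, so the reduction process is efficient.

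The main obstacle here is not in the logic of this particular lemma, which is essentially a one-line contrapositive of \Cref{obs:safe}; rather, the real work has already been front-loaded into establishing that $P_\sigma$ is genuinely $S_1$-safe (the preceding claim) and that $B_E \neq \emptyset$ whenever $|S| > 4k$. I expect the statement itself to follow immediately once \Cref{obs:safe} and the soundness of \cref{redrule:safepartition} are in hand, so I would keep the proof short and defer all structural arguments to those earlier results.
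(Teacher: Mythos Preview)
Your proposal is correct and matches the paper's approach exactly: the paper does not give a separate proof of \Cref{lem:ter2} but simply states it as the formalization of \Cref{obs:safe} together with the non-applicability of \cref{redrule:safepartition}, which is precisely the contrapositive argument you outline. Your additional remarks on termination and polynomial-time computability are accurate but go slightly beyond what the paper records for this particular lemma.
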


\section{Irrelevant Vertex for  \sfast} \label{sec:sfaskernel}
In this section, we introduce reduction rules to eliminate non-terminal vertices. The idea is as follows: Let \( (T, S, k) \) be a \yes instance of \sfastr, and let \( F \) be a solution of size at most \( k \). We aim to understand the structure of a special ordering \( \Pi \) of \( T \circledast F \). Specifically, we focus on the interval between two consecutive vertices of \( S \) in \( \Pi \). If the number of vertices in this interval is large, we demonstrate that there exists an irrelevant vertex that can be safely deleted without affecting the solution. Notably, for any vertex \( s \in S \), all but at most \( k \) of its in-neighbors must appear before \( s \) in \( \Pi \), and all but at most \( k \) of its out-neighbors must appear after \( s \) in \( \Pi \). This observation is crucial for identifying the majority of vertices in the interval between two consecutive vertices of \( S \) in \( \Pi \). We formalize this idea using an equivalence relation.

\subsection{Partitioning Non Terminal  Vertices into Equivalence Classes} \label{subsec:partition_nont_eqvcls}
Let $(T,S,k)$ be an instance of \sfastr where none of the Reduction Rules \ref{redrule:sanity_check} - \ref{redrule:safepartition} are applicable. By \Cref{lem:ter2}, $|S| \leq 4k$.



Observe that each non-\( S \) vertex \( v \in V(T) \setminus S \) divides the \( S \) vertices into two groups: one group contains all of its in-neighbors from \( S \), and the other contains all of its out-neighbors from \( S \) in \( T \). To encapsulate this phenomenon, we define a function \( \type_T^S \colon V(T) \setminus S \to 2^S \) as follows: for every vertex \( v \in V(T) \setminus S \),
$$
\type_T^S(v) \coloneqq \{ s \in S \mid (s, v) \in A(T) \}.
$$
We can then categorize all non-\( S \) vertices into equivalence classes, where each class consists of vertices of the same type. More precisely, for a subset $X \subseteq S$, let \( \eqv_T^S[X] \) denote the set of all non-\( S \) vertices \( v \) for which \( \type_T^S(v) = X \); that is,
$$ \eqv_T^S[X] \coloneqq \{ v \in V(T) \setminus S \mid \type_T^S(v) = X \}.$$ 
We refer to \( \eqv_T^S[X] \) as \textit{non-trivial} if \( \eqv_T^S[X] \neq \emptyset \). It is noteworthy that there can be at most \( 2^{|S|} \) equivalence classes. In fact, we will show that the number of equivalence classes can be upper-bounded by $5k+1$.

\begin{reduction rule}\label{redrule:many_types}
	Let $(T,S,k)$ be an instance where none of the Reduction Rules \ref{redrule:sanity_check} - \ref{redrule:safepartition} are applicable. If the number of non-trivial equivalence classes in $T$ is more than $5k+1$, then return a trivial \no instance. 
\end{reduction rule}

It is easy to verify that \cref{redrule:many_types} can be applied in polynomial time.  In the following, we prove the correctness of it.

\begin{lemma}
	\cref{redrule:many_types} is sound.
\end{lemma}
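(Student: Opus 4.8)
The plan is to prove the contrapositive: I assume the instance $(T,S,k)$ is a \yes instance and show that the number of non-trivial equivalence classes is at most $5k+1$. The key tool is the $\spclorder$ established in \Cref{obs:ordering}. Since $(T,S,k)$ is a \yes instance of \sfastr, there exists a solution $F \subseteq A(T)$ with $|F| \leq k$ such that $T \circledast F$ has no $S$-triangle, hence by \Cref{lem:Scycle} no $S$-cycle. By \Cref{obs:ordering}, the tournament $T \circledast F$ admits a $\spclorder$ $\Pi = V_1 \uplus V_2 \uplus \cdots \uplus V_{\ell}$ with respect to $S$, where each $S$-vertex forms a singleton part and the non-$S$ parts are strongly connected components.

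First I would bound the number of equivalence classes relative to the neighborhood structure \emph{in $T \circledast F$}. In the ordering $\Pi$, each part $V_i$ that contains only non-$S$ vertices lies entirely between two consecutive $S$-singleton parts (or at the ends), and by property (iii) of the $\spclorder$, every vertex in such a part has exactly the same set of in-neighbors among $S$ and the same set of out-neighbors among $S$ \emph{in $T \circledast F$}: namely all the $S$-vertices appearing before the part are in-neighbors and all those appearing after are out-neighbors. Hence the number of distinct types $\type_{T\circledast F}^{S}$ realized by non-$S$ vertices is at most $|S|+1$, since the non-$S$ vertices are split into at most $|S|+1$ intervals delineated by the $|S|$ singleton $S$-parts. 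This gives a bound of $|S|+1 \leq 4k+1$ on the number of equivalence classes \emph{in $T \circledast F$}.

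The main obstacle, and the heart of the argument, is to transfer this bound from $T \circledast F$ back to $T$, since the reduction rule speaks about equivalence classes in the original tournament $T$. The point is that reversing an arc in $F$ can change the type of a vertex only if that arc is incident to the vertex and to an $S$-vertex. I would argue that reversing the arcs of $F$ changes the type $\type^S$ of only a bounded number of non-$S$ vertices: specifically, each arc of $F$ is incident to at most two non-$S$ vertices (and only arcs with one endpoint in $S$ and one endpoint a non-$S$ vertex can alter a type), so the set $W$ of non-$S$ vertices whose type differs between $T$ and $T \circledast F$ satisfies $|W| \leq 2k$. Consequently, the equivalence classes of $T$ are obtained from those of $T \circledast F$ by moving at most $2k$ vertices, and each such moved vertex can create at most one new equivalence class. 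Therefore the number of non-trivial equivalence classes in $T$ is at most $(|S|+1) + 2k \leq (4k+1) + 2k = 6k+1$; I would tighten the counting — for instance, by observing that only arcs of $F$ with an $S$-endpoint matter and each such arc touches a single non-$S$ vertex, or by a more careful accounting of how the endpoints of $F$ distribute — to reach the claimed $5k+1$. Having shown that a \yes instance has at most $5k+1$ non-trivial equivalence classes, the reduction rule correctly returns a \no instance whenever this bound is exceeded, establishing soundness.
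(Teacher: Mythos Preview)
Your proposal is correct and follows essentially the same route as the paper: bound the number of types in $T\circledast F$ by $|S|+1$ via the $\spclorder$, then argue that reversing $F$ can change the type of at most $k$ non-$S$ vertices, yielding $|S|+1+k\le 5k+1$. The tightening you sketch---that only arcs of $F$ with one endpoint in $S$ matter, and each such arc has exactly one non-$S$ endpoint---is precisely the observation the paper uses to go directly to $|W|\le k$ rather than $2k$, so your intermediate $6k+1$ bound is unnecessary once you commit to that observation.
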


\begin{proof}
	Let \((T,S,k)\) be a \yes instance for \sfastr, and let \(F \subseteq A(T)\) be a minimal solution of size at most \(k\). It follows that \(T \circledast F\) has a \(\spclorder\) \(\Pi\) (as shown in \cref{lem:minimal_equivalence}). In this \(\spclorder\) \(\Pi\), each vertex can position itself in one of three ways: between two \(S\)-vertices, before all \(S\)-vertices, or after all \(S\)-vertices. Additionally, all vertices within the same equivalence class will occupy identical positions relative to the \(S\)-vertices. Since there are a total of \( |S| + 1\) such positions concerning the \(S\)-vertices, it follows that \(T \circledast F\) has at most \( |S| + 1\) non-trivial equivalence classes.
	
	Next, we aim to show that \(T\) contains no more than \( |S| + k + 1\) non-trivial equivalence classes. Notice that a non-\(S\) vertex \(u \in V(T) \setminus S\) will change its \(\type_T^S(u)\) if an arc \(e\) incident to \(u\) is included in \(F\). Furthermore, the other endpoint of \(e\) must be some vertex in \(S\) since \(\type_T^S(u)\) is determined exclusively by the \(S\)-vertices. Given that the solution can encompass up to \(k\) arcs, this means that at most \(k\) non-\(S\) vertices can have their respective types altered following the reversal of \(F\). Consequently, at most \(k\) non-trivial (non-empty) equivalence classes can become trivial after the reversal operation. Therefore, prior to the reversal, there were at most \( |S| + 1 + k\) non-trivial equivalence classes in \(T\). 
\end{proof}


So we have the following Lemma.

\begin{lemma}\label{lem:ter3}
	Let $(T,S,k)$ be an instance where none of the Reduction Rules \ref{redrule:sanity_check}-\ref{redrule:many_types} are applicable. Then the number of non-trivial equivalence classes in $T$ is bounded by  $5k+1$.
\end{lemma}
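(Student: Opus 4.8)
The plan is to prove Lemma~\ref{lem:ter3} directly from the reduction rules that precede it, since the heavy lifting was already done in establishing Reduction Rule~\ref{redrule:many_types} and its soundness. First I would observe that the statement is essentially a restatement of the guarantee provided by Reduction Rule~\ref{redrule:many_types}: by definition, an instance is reduced with respect to this rule precisely when the rule's condition fails to apply, i.e. when the number of non-trivial equivalence classes is \emph{not} more than $5k+1$. Thus the entire content is a bookkeeping argument combined with the soundness of the earlier rules.

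The key step is to invoke the soundness of Reduction Rule~\ref{redrule:many_types}, which guarantees that the rule correctly returns a \no instance whenever the number of non-trivial equivalence classes exceeds $5k+1$. Concretely, I would argue by contraposition on a \yes instance: suppose $(T,S,k)$ is an instance on which none of Reduction Rules \ref{redrule:sanity_check}--\ref{redrule:many_types} are applicable. If $(T,S,k)$ were a \no instance, the bound $|S| \le 4k$ from \Cref{lem:ter2} still holds, but the claim about equivalence classes is only asserted for the reduced form; since Reduction Rule~\ref{redrule:many_types} is not applicable, its triggering condition must fail, which is exactly the statement that the number of non-trivial equivalence classes is at most $5k+1$. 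I would then recall the structural bound established inside the soundness proof of Reduction Rule~\ref{redrule:many_types}: combining $|S| \le 4k$ (\Cref{lem:ter2}) with the fact that an $S$-acyclic tournament has at most $|S|+1$ non-trivial equivalence classes and that reversing a size-$k$ solution can destroy at most $k$ classes, we get at most $|S| + k + 1 \le 4k + k + 1 = 5k+1$ classes in any \yes instance.

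The main subtlety — which is really the only obstacle — is that the bound must hold for \emph{all} instances reduced with respect to the rules, not merely \yes instances, and the quantity ``number of non-trivial equivalence classes'' is a property of $T$ alone, independent of whether a solution exists. I would resolve this by simply noting that non-applicability of Reduction Rule~\ref{redrule:many_types} is, by its very statement, equivalent to the inequality $(\text{number of non-trivial classes}) \le 5k+1$: if the count exceeded $5k+1$ the rule would fire and return a trivial \no instance, contradicting the hypothesis that the rule is not applicable. Hence no further computation is needed beyond unwinding the definition of ``reduced with respect to a reduction rule'' from the preliminaries.

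In short, the proof is a one-line deduction: since the instance is reduced with respect to Reduction Rule~\ref{redrule:many_types}, the rule's precondition (more than $5k+1$ non-trivial equivalence classes) fails, and therefore the number of non-trivial equivalence classes in $T$ is at most $5k+1$, as claimed. I expect the write-up to be a couple of sentences invoking \Cref{lem:ter2} for context and the definition of reducedness for the conclusion.
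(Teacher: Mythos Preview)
Your proposal is correct and matches the paper's approach exactly: the paper does not even give an explicit proof of \Cref{lem:ter3}, stating it simply as an immediate consequence (``So we have the following Lemma'') of Reduction Rule~\ref{redrule:many_types} not being applicable. Your observation that non-applicability of the rule is by definition equivalent to having at most $5k+1$ non-trivial equivalence classes is precisely the intended one-line argument.
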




\subsection{Quadratic Bound on Relevant Vertices in Each Equivalence Class} \label{subsec:quadratic_bound} 
Let \((T,S,k)\) be an instance of \sfastr where none of the Reduction Rules \ref{redrule:sanity_check} - \ref{redrule:many_types} are applicable. This indicates that the number of non-trivial equivalence classes in \(T\) is bounded above by \(5k+1\). In this section, we provide an upper limit on the size of each equivalence class. If the size of each equivalence class is limited to \(7k+4\), we can conclude our analysis. Otherwise, we consider a non-trivial equivalence class \(\eqv_T^S[X]\) with a size exceeding \(7k+4\). For clarity, let’s rename \(\eqv_T^S[X]\) as \(Z\). Recall that \(Z\) corresponds to a partition \(S_1(=X) \uplus S_2\) of \(S\), such that for every vertex \(u \in Z\), \(\type_T^S(u) = S_1\) (by definition). Let \(Z'\) denote the subset of vertices in \(Z\) that have at least \(k+1\) in-neighbors and at least \(k+1\) out-neighbors within \(Z\). We define the set \(Z \setminus Z'\) as \(\widetilde{Z}\). We begin with the following observation.

\begin{observation}
	\label{lem:bounded_deg_vertices}
	In a tournament $\mathcal{T}$, the  number of vertices with in-degree (resp., out-degree)  at most $d$ is upper bounded by $2d+1$.
\end{observation}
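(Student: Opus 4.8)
The plan is to prove this by a standard averaging (counting) argument applied not to the whole tournament but to the induced sub-tournament on the low-degree vertices. Let me set up notation: let $L \subseteq V(\mathcal{T})$ be the set of vertices with in-degree at most $d$ in $\mathcal{T}$, and write $m = |L|$. The goal is to show $m \leq 2d+1$. The symmetric statement for out-degree then follows by applying the same argument to the tournament obtained by reversing every arc, so I would only spell out the in-degree case.

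First I would restrict attention to the sub-tournament $\mathcal{T}[L]$. Since $\mathcal{T}[L]$ is again a tournament on $m$ vertices, it has exactly $\binom{m}{2}$ arcs, and hence the sum of the in-degrees of its vertices, measured inside $\mathcal{T}[L]$, equals $\binom{m}{2} = \tfrac{m(m-1)}{2}$. By averaging, some vertex $v \in L$ has in-degree at least $\tfrac{m-1}{2}$ within $\mathcal{T}[L]$.

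Next I would observe that the in-degree of $v$ within $\mathcal{T}[L]$ is at most its in-degree in the full tournament $\mathcal{T}$, simply because $\mathcal{T}[L]$ is an induced subgraph and every in-arc of $v$ inside $L$ is also an in-arc of $v$ in $\mathcal{T}$. Since $v \in L$, its in-degree in $\mathcal{T}$ is at most $d$. Chaining these bounds gives $\tfrac{m-1}{2} \leq d$, i.e.\ $m \leq 2d+1$, as desired.

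There is no genuine obstacle here; the only point that requires a moment of care is that one must pass to the induced sub-tournament before averaging. Summing in-degrees over all of $\mathcal{T}$ counts arcs coming from outside $L$ and does not isolate the quantity one wants to bound, so the averaging must be performed inside $\mathcal{T}[L]$, where the total arc count is pinned down exactly as $\binom{m}{2}$. Once that reduction is made, the inequality is immediate.
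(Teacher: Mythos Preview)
Your proof is correct and follows essentially the same approach as the paper: pass to the induced sub-tournament on the low-in-degree vertices, use that it has $\binom{m}{2}$ arcs, and compare with the upper bound $md$ on the total in-degree contributed by these vertices. The only cosmetic difference is that the paper sums the in-degrees directly to get $\binom{m}{2} \leq md$, whereas you extract a single vertex of in-degree at least $\tfrac{m-1}{2}$ by averaging; the two are equivalent.
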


\begin{proof}
	We will establish an upper bound specifically for the vertices with an in-degree of at most \(d\). The reasoning for vertices with a bounded out-degree follows similarly. Let \(Y\) be the set of vertices with an in-degree of at most \(d\). If we denote \(|Y| = c\), we find that $\sum_{y \in Y} \indeg(y) \leq cd$. Additionally, \(\mathcal{T}[Y]\) forms a tournament on \(c\) vertices, which means there are \(\binom{c}{2}\) arcs present in it. Each arc in \(\mathcal{T}[Y]\) contributes to one in-degree, leading to the inequality 
	$\sum_{y \in Y} \indeg(y) \geq \binom{c}{2}$. Combining these two results, we can deduce that $cd \geq \binom{c}{2}$, which implies that $
	c \leq 2d + 1$.
\end{proof}

Recall that every vertex \( v \in \widetilde{Z} \) has either an in-degree or an out-degree of at most \( k \) within \( Z \). \cref{lem:bounded_deg_vertices} directly imply the followings.

\begin{observation} \label{lem:bounded_deg_vertices1}
	The number of vertices in $\widetilde{Z} (= Z \setminus Z')$ is upper-bounded by $4k+2$. 
\end{observation}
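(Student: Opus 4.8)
The plan is to recognize that $\widetilde{Z}$ is, by definition, exactly the set of vertices of $Z$ that fail the defining condition of $Z'$, and then to bound it by applying \cref{lem:bounded_deg_vertices} twice inside the subtournament $T[Z]$.

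First I would unpack the definition. Recall that $Z'$ consists of the vertices of $Z$ having at least $k+1$ in-neighbors \emph{and} at least $k+1$ out-neighbors within $Z$. Hence a vertex lies in $\widetilde{Z} = Z \setminus Z'$ precisely when it has at most $k$ in-neighbors within $Z$ or at most $k$ out-neighbors within $Z$. In other words, writing $A$ for the set of vertices of $T[Z]$ with in-degree at most $k$ and $B$ for the set of vertices of $T[Z]$ with out-degree at most $k$, we have $\widetilde{Z} \subseteq A \cup B$.

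Next I would apply \cref{lem:bounded_deg_vertices} to the tournament $T[Z]$ with the parameter $d = k$. The in-degree part of that observation gives $|A| \leq 2k+1$, and the out-degree part gives $|B| \leq 2k+1$. A union bound then yields
$$
|\widetilde{Z}| \leq |A| + |B| \leq (2k+1) + (2k+1) = 4k+2,
$$
as claimed.

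There is no real obstacle here, since the statement is an immediate corollary of \cref{lem:bounded_deg_vertices}; the only point requiring a moment of care is making sure the degrees are measured within $T[Z]$ (not in the whole tournament $T$), which is exactly how $Z'$ and $\widetilde{Z}$ were defined, so that the hypotheses of \cref{lem:bounded_deg_vertices} apply to the tournament $T[Z]$.
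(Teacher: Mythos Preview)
Your proposal is correct and follows exactly the paper's approach: the paper also notes that every vertex of $\widetilde{Z}$ has in-degree or out-degree at most $k$ within $Z$ and then invokes \cref{lem:bounded_deg_vertices} (with $d=k$) on $T[Z]$ to conclude. Your write-up simply makes the union-bound step explicit.
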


Now we divide $\widetilde{Z}$ into two sets $\widetilde{Z}_-$ and $\widetilde{Z}_+$ where $\widetilde{Z}_-$ consists of the vertices with in-degree at most $k$ in $Z$, and $\widetilde{Z}_+$ consists of the vertices with an out-degree at most $k$ in $Z$. Further let $R = V(T) \setminus (S \cup Z)$ be the set of remaining non-$S$ vertices. We can now categorize the vertices in $R$ as follows:

\begin{mdframed}[backgroundcolor=gray!10,topline=false,bottomline=false,leftline=false,rightline=false] 
	\begin{description}
		\item[{$R_<$}{\label{cat1}}:] Set of vertices in $R$ with in-degree at most $k$ and out-degree at least $k+1$ in $Z'$
		
		\item[{$R_>$}{\label{cat2}}:] Set of vertices in $R$ with in-degree at least $k+1$ and out-degree at most $k$ in $Z'$
		
		\item[{$R_+$}{\label{cat3}}:] Set of vertices in $R$ with in-degree and out-degree at least $k+1$ in $Z'$
		
		\item[{$R_-$}{\label{cat4}}:] Set of vertices in $R$ with in-degree and out-degree at most $k$ in $Z'$
	\end{description}
\end{mdframed}

Furthermore, we want to show that $R = R_<  \uplus R_>$, in other words $R_+ = R_-= \emptyset$. We will prove this in \Cref{claim:Rempty}.  Before that, we define the notion of \surearc, which will be extensively used in our kernelization algorithm, followed by an observation. 

\begin{definition}[\surearc]\label{def:sure}
	{\em For an instance $(T,S,k)$ of \sfastr, an arc $e \in A(T)$ is called a \surearc  if every  solution of size at most $k$ avoids   $e$.}
\end{definition}

The intuition behind \surearcs is that there is no  solution of size at most $k$ in $T$ that use any  \surearc (see \Cref{fig:surearc} for an llustration).  The following lemma shows a sufficient condition for being a  sure arc.  

\begin{figure}[ht!]
	\begin{center}
		\includegraphics[scale=0.55]{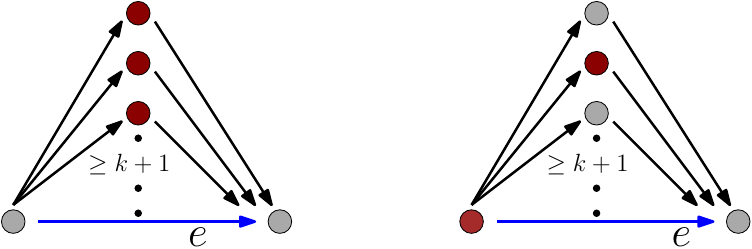}
	\end{center}
	\caption{Illustration of a \surearc $e$. Vertices with color red are from   $S$.}
	\label{fig:surearc}
\end{figure}

\begin{lemma}[Sure Arc Lemma]\label{lem:sure}
	Let $(T,S,k)$ be an instance  of \sfastr. Consider an arc  $e=(u,v)$ in $T$.  If there are more than  $k$ arc disjoint $S$-paths from $u$ to $v$, then $e$ is a {\em \surearc} for $(T,S,k)$. 
\end{lemma}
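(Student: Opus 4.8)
The plan is to argue by contradiction using the reversal formulation of \sfastr. Suppose, for contradiction, that $e=(u,v)$ is \emph{not} a \surearc; then by \Cref{def:sure} there exists a solution $F \subseteq A(T)$ with $|F| \leq k$ such that $e \in F$ and $T \circledast F$ has no $S$-cycle. The key structural fact I would exploit is that reversing $e$ turns it into the arc $(v,u)$ in $T \circledast F$, so any surviving directed $S$-path from $u$ to $v$ would close up into a forbidden $S$-cycle. Thus the whole argument reduces to exhibiting one such path that is left untouched by the reversal.

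To produce that path, I would invoke the hypothesis that there are strictly more than $k$ arc-disjoint $S$-paths from $u$ to $v$ in $T$, say $P_1,\ldots,P_{k+1}$ (possibly more). Since these paths are pairwise arc-disjoint, each arc of $F$ lies on at most one of them, so the set $F$ can intersect the arc sets of at most $|F| \leq k$ of these paths. By pigeonhole there is an index $j$ with $A(P_j) \cap F = \varnothing$. The next step is to observe that reversal only affects arcs in $F$ (it deletes $F$ and inserts $\rev(F)$), so every arc of $P_j$, being outside $F$, survives unchanged in $T \circledast F$; hence $P_j$ is still a directed $S$-path from $u$ to $v$ in $T \circledast F$.

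Finally, I would combine $P_j$ with the reversed arc: since $e=(u,v) \in F$, the arc $(v,u)$ belongs to $A(T \circledast F)$, and appending it to $P_j$ yields a closed directed walk through $u$ and $v$. As $P_j$ is a (simple) path, $P_j$ followed by $(v,u)$ is a simple directed cycle, and it contains an $S$-vertex because $P_j$ is an $S$-path; therefore it is an $S$-cycle in $T \circledast F$, contradicting that $F$ is a solution. This contradiction shows no size-at-most-$k$ solution can contain $e$, so $e$ is a \surearc.

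The only delicate points I anticipate are bookkeeping rather than conceptual: first, making the pigeonhole step precise by noting that arc-disjointness is exactly what guarantees each arc of $F$ "charges" to at most one path, so $k$ reversed arcs cannot destroy all $k+1$ paths; and second, confirming that the $S$-vertex witnessing that $P_j$ is an $S$-path is retained in the closed walk (it is, since all of $P_j$ is retained), so the resulting cycle genuinely passes through $S$. Neither step requires the $\spclorder$ machinery, so I expect the proof to be short once these two observations are stated cleanly.
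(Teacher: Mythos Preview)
Your proposal is correct and follows essentially the same approach as the paper: assume $e \in F$ for some solution $F$ of size at most $k$, use pigeonhole on the more than $k$ arc-disjoint $S$-paths to find one untouched by $F$, and close it with the reversed arc $(v,u)$ to produce an $S$-cycle in $T \circledast F$. The paper's proof is more terse but the argument is identical.
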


\begin{proof}
	Suppose that there is an arc $e=(u,v) \in A(T)$ such that there are  more than $k$ arc disjoint $S$-paths from $u$ to $v$. Consider any solution $F$ of size at most $k$. Observe that if $e \in F$, then we $F$ must hit each of the $S$-paths from $u$ to $v$ as each  such path creates a $S$-cycle along with the arc $(v,u)$. But then it exceeds our budget $k$.
\end{proof}

Next, we show that the majority of the arcs entering \( Z' \) or exiting \( Z' \) from \( R \) are \surearc. Recall that $\arc(X,Y)$ denotes the set of arcs with one endpoint in $X$ and the other endpoint in $Y$. We refer to the arcs in $\arc(X,Y)$ as being from $X$ to $Y$ if each arc originates from $X$ and terminates at $Y$. 
\begin{observation} \label{obs:sure_arc2}
	Let $(T,S,k)$ be an instance of \sfastr where  Reduction Rule \ref{redrule:many_triangle} is not  applicable. Let $Z$ be an equivalence class with $|Z| \geq 6k+6$ and $Z$ partitions $S$ into $S_1 \uplus S_2$ such that for every vertex $v \in Z$, $\type_T^S(v) =  S_1$, or in other words, $Z = \eqv_T^S[S_1]$. As defined earlier, let \(Z'\) denote the subset of vertices in \(Z\) that have at least \(k+1\) in-neighbors and at least \(k+1\) out-neighbors within \(Z\). Then we have that,  
	\begin{enumerate}[(i)]
		\item Every  arc in $\arc(S_1, S_2)$ is from $S_1$ to $S_2$, moreover   is a   \surearc in $(T,S,k)$.
		\item Every arc of $\arc(S_1, Z')$ is a \surearc in $(T,S,k)$.
		
		\item  Every arc of $\arc(Z', S_2)$ is a \surearc in $(T,S,k)$.
	\end{enumerate}
	
\end{observation}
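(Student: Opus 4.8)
The plan is to reduce each of the three claims to a single application of the Sure Arc Lemma (\Cref{lem:sure}): for an arc $e=(u,v)$ it suffices to exhibit more than $k$ arc-disjoint $S$-paths from $u$ to $v$. In every case I will use length-two paths routed through a suitable set of $Z$-vertices, so the only genuine work is (a) fixing the orientation of the arc and (b) confirming that enough intermediate vertices exist. Throughout I would exploit the type structure: since $\type_T^S(v)=S_1$ for every $v\in Z$, each $s_1\in S_1$ satisfies $(s_1,v)\in A(T)$ and each $s_2\in S_2$ satisfies $(v,s_2)\in A(T)$, for all $v\in Z$.

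For (i), consider the arc between $s_1\in S_1$ and $s_2\in S_2$. First I would settle its orientation. Were it $(s_2,s_1)$, then for every $v\in Z$ the triple $s_1,v,s_2$ would close into an $S$-triangle $s_1\to v\to s_2\to s_1$, and all $|Z|\geq 6k+6> k+1$ of these triangles share the arc $(s_2,s_1)$; this makes \Cref{redrule:many_triangle} applicable, contradicting the hypothesis. Hence the arc is $(s_1,s_2)$. Now the paths $s_1\to v\to s_2$, one for each $v\in Z$, are arc-disjoint $S$-paths from $s_1$ to $s_2$ (they share only their endpoints and use the distinct arcs $(s_1,v),(v,s_2)$), and there are more than $k$ of them, so \Cref{lem:sure} makes $(s_1,s_2)$ a sure arc.

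Parts (ii) and (iii) are symmetric and do not require the triangle-counting step, since the orientation is forced directly by the type. For $s_1\in S_1$ and $z\in Z'$ the arc is $(s_1,z)$ because $s_1\in\type_T^S(z)$; using the at least $k+1$ in-neighbours $w$ of $z$ inside $Z$ (guaranteed by $z\in Z'$), the paths $s_1\to w\to z$ give $k+1>k$ arc-disjoint $S$-paths from $s_1$ to $z$, so $(s_1,z)$ is a sure arc. Dually, for $z\in Z'$ and $s_2\in S_2$ the arc is $(z,s_2)$, and the at least $k+1$ out-neighbours $w$ of $z$ in $Z$ yield the paths $z\to w\to s_2$, which establishes the claim via \Cref{lem:sure}.

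The main obstacle is the orientation argument in part (i): this is the only place where the non-applicability of \Cref{redrule:many_triangle} is essential, and one must check carefully that the produced triangles genuinely all share the single arc $(s_2,s_1)$ and are $S$-triangles. The remaining bookkeeping---verifying arc-disjointness of the length-two paths and that each is an $S$-path (it contains an endpoint lying in $S$)---is routine, and the size hypotheses ($|Z|\geq 6k+6$ for (i), and the $\geq k+1$ in/out-degree guaranteed by $z\in Z'$ for (ii) and (iii)) comfortably supply the required ``more than $k$'' paths.
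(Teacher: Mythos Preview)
Your proposal is correct and follows essentially the same approach as the paper: for part (i) you first rule out the wrong orientation by exhibiting $|Z|\geq k+1$ many $S$-triangles sharing $(s_2,s_1)$ (contradicting non-applicability of \Cref{redrule:many_triangle}), and then in all three parts you route $k+1$ arc-disjoint length-two $S$-paths through suitable $Z$-vertices and invoke \Cref{lem:sure}. The only cosmetic difference is that you make explicit why the orientation in (ii) and (iii) is forced by the definition of $\type_T^S$, whereas the paper leaves this implicit.
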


\begin{proof} (i) Assume towards contradiction that there is an arc $(v,u) \in A(T)$ such that $v \in S_2$ and $u \in S_1$. We know that every arc in $\arc(S_1, Z)$ is from $S_1$ to $Z$ in $T$ and every arc in $\arc(S_2, Z)$ is from $Z$ to $S_2$ in $T$.  Since $|Z| \geq 6k+6$, we obtain at least $(k+1)$ $S$-triangles where the arc $(v,u)$ is common in each of them. This leads to a contradiction that $(T,S,k)$ is an instance where \cref{redrule:many_triangle} can not be applied. Moreover, since $|Z| \geq 6k+6$, it follows that there exist $(k+1)$ arc-disjoint two length $S$-paths from $u$ to $v$, where each path passes through a vertex $z \in Z$. So, the arc $(u,v)$ is a \surearc by \Cref{lem:sure}.

	(ii)  Consider an arc $(u,v)$ where $u \in S_1$ and $v \in Z'$. Given that $v$ has $(k+1)$ in-neighbors in $Z$, let that set be $W$. Since $Z = \eqv_T^S[S_1]$, it follows that
	there exist $(k+1)$ arc-disjoint two length $S$-paths from $u$ to $v$, where each path passes through a vertex $w \in W$. Hence, the arc $(u,v)$ is a \surearc in $(T,S,k)$ by \Cref{lem:sure}. 
	
	(iii)  Consider an arc $(u,v)$ where $u \in Z'$ and $v \in S_2$. Given that $u$ has at least $(k+1)$ out-neighbors in $Z$, let that set be $W$. Since $Z = \eqv_T^S[S_1]$, it follows that there exist $(k+1)$ arc-disjoint two length $S$-paths from $u$ to $v$, where each path passes through a vertex $w \in W$. Hence, the arc $(u,v)$ is a \surearc in $(T,S,k)$  by \Cref{lem:sure}. 
\end{proof}

For an illustration of the \Cref{obs:sure_arc2}, we refer to \Cref{fig:surearc5}.

\begin{figure}[ht!]
	\begin{center}
		\includegraphics[scale=0.45]{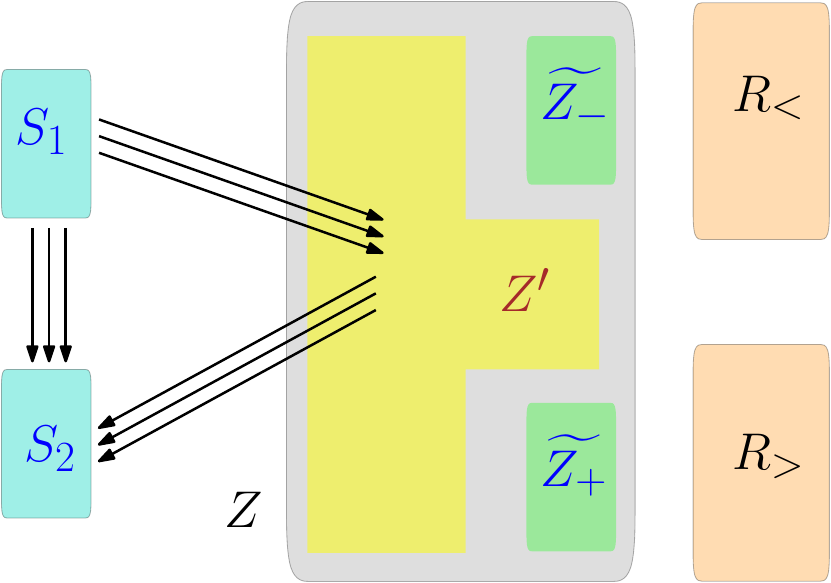}
	\end{center}
	\caption{Illustration of a \surearcs in \Cref{obs:sure_arc2}.}
	\label{fig:surearc5}
\end{figure}


	
	
	

	
	
	


\begin{lemma}\label{claim:Rempty}
	The sets $R_+$ and $R_-$ are empty.
\end{lemma}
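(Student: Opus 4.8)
The plan is to fix an arbitrary vertex $r \in R$ and derive a contradiction under the assumption that $r$ lies in $R_+$ or in $R_-$, using two quite different arguments for the two sets. The starting point for $R_+$ is the observation that, since $r \in R = V(T) \setminus (S \cup Z)$, we have $\type_T^S(r) \neq S_1$. Hence at least one of the following holds: (A) there is a vertex $s \in S_1$ with $(r,s) \in A(T)$ (an out-arc from $r$ to $S_1$), or (B) there is a vertex $s \in S_2$ with $(s,r) \in A(T)$ (an in-arc into $r$ from $S_2$). I would combine this local ``type defect'' with the large in-/out-neighbourhood of $r$ inside $Z'$ to manufacture many $S$-triangles sharing a single arc.

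Concretely, for $R_+$ I would argue as follows. Suppose $r \in R_+$, so $r$ has at least $k+1$ in-neighbours and at least $k+1$ out-neighbours in $Z'$. In case (A), pick $k+1$ distinct in-neighbours $w_1,\dots,w_{k+1} \in Z'$ of $r$; since $s \in S_1$ and each $w_i \in Z$, we have $(s,w_i) \in A(T)$, so each triple $s \to w_i \to r \to s$ is a directed triangle, and it is an $S$-triangle because $s \in S$. These $k+1$ triangles pairwise share only the arc $(r,s)$ (their remaining arcs $(s,w_i),(w_i,r)$ are distinct for distinct $i$). In case (B), symmetrically, pick $k+1$ distinct out-neighbours $w_1,\dots,w_{k+1} \in Z'$ of $r$; since $s \in S_2$ and each $w_i \in Z$ we have $(w_i,s) \in A(T)$, giving $S$-triangles $s \to r \to w_i \to s$ sharing only the arc $(s,r)$. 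In either case some arc of $T$ participates in $k+1$ $S$-triangles that are otherwise arc-disjoint, contradicting the fact that the instance is reduced with respect to \cref{redrule:many_triangle}. Since at least one of (A), (B) always holds, no such $r$ exists and $R_+ = \emptyset$.

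For $R_-$ the argument is a simple counting argument and needs no reduction rule: a vertex $r \in R_-$ has at most $k$ in-neighbours and at most $k$ out-neighbours in $Z'$, yet in the tournament $T$ every vertex of $Z'$ is joined to $r$ by exactly one arc, so $|Z'| \le 2k$. On the other hand, $|Z| > 7k+4$ and $|\widetilde{Z}| \le 4k+2$ by \Cref{lem:bounded_deg_vertices1}, whence $|Z'| = |Z| - |\widetilde{Z}| \ge 3k+3 > 2k$, a contradiction; thus $R_- = \emptyset$.

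The routine parts are the tournament-counting for $R_-$ and the verification that the constructed triples are genuine directed triangles (the three vertices lie in the pairwise disjoint sets $S$, $Z$, and $R$, hence are distinct). The main obstacle, and the step I would write most carefully, is the $R_+$ case: one must (i) correctly extract the out-arc-to-$S_1$ or in-arc-from-$S_2$ from the single hypothesis $\type_T^S(r) \neq S_1$, choosing the matching half of $r$'s $Z'$-neighbourhood ($k+1$ in-neighbours in case (A), $k+1$ out-neighbours in case (B)) so that the relevant budget is available, and (ii) check that the $k+1$ triangles are pairwise arc-disjoint apart from the common arc, which is exactly the configuration needed to invoke \cref{redrule:many_triangle}.
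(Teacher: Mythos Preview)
Your proposal is correct and follows essentially the same approach as the paper: for $R_+$ you exploit $\type_T^S(r)\neq S_1$ to find either an out-arc to $S_1$ or an in-arc from $S_2$, pair it with the appropriate $k+1$ neighbours in $Z'$ to produce $k+1$ $S$-triangles through a common arc and contradict \cref{redrule:many_triangle}; for $R_-$ you use the same counting $|Z'|=|Z|-|\widetilde{Z}|>2k$ against the degree bound. The only cosmetic differences are that the paper phrases the $R_-$ bound via pigeonhole and uses the slightly weaker hypothesis $|Z|\ge 6k+6$, and it does not bother verifying the triangles are otherwise arc-disjoint (\cref{redrule:many_triangle} only needs $k+1$ $S$-triangles through a common arc).
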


\begin{proof}
	Consider a vertex \( v \in R_+ \). Since \( \type_T^S(v) \neq \type_T^S(u) \) for any \( u \in Z \), it follows that either i) there exists a vertex \( s_1 \in S_1 \) such that \( (v, s_1) \in A(T) \), or ii) there exists a vertex \( s_2 \in S_2 \) such that \( (s_2, v) \in A(T) \). Let's first examine the case where there is a vertex \( s_1 \in S_1 \) such that \( (v, s_1) \in A(T) \). The argument for the other case is symmetric. Given that \( v \in R_+ \), it implies that \( v \) has at least \( k+1 \) in-neighbors in \( Z' \), which we denote as the set \( X \subseteq Z' \). Additionally, we know that \( s_1 \) has an arc to every vertex \( z \in Z' \). Thus, there are \( (k+1) \) \( S \)-triangles, each sharing the arc \( (v, s_1) \). This leads to a contradiction, as it implies that \( (T,S,k) \) is an instance where \cref{redrule:many_triangle} is applicable.

	Now consider a vertex $u \in R_-$. By definition, $u$ has an in-degree and an out-degree at most $k$ in $Z'$. Now,  $|Z| \geq 6k+6$ imply that  $|Z'| \geq  2k+4$ (by \cref{lem:bounded_deg_vertices1}). Since  $T$ is a tournament, every vertex in $V(T) \setminus Z'$ has at least either $k+2$ in-neighbor or $k+2$ out-neighbor in $Z'$ (by pigeon hole principle). Hence $R_-$ cannot be non-empty.    
\end{proof}

We will further subdivide the sets \( R_< \) and \( R_> \). The process for partitioning \( R_< \) will be described, while a similar approach will be applied to \( R_> \). Let \( \widehat{R}_< \subseteq R_< \) denote the subset of vertices in \( R_< \) that does not have an in-neighbor in \( Z' \). We define \( \widetilde{R}_< = R_< \setminus \widehat{R}_< \) as the remaining vertices in \( R_< \), where each vertex in \( \widetilde{R}_< \) has at least one in-neighbor in \( Z' \). Our next reduction rule aims to demonstrate that for a \yes instance, the size of \( \widetilde{R}_< \) and $\widetilde{R}_>$ must be bounded.

\begin{reduction rule} \label{redrule:wrong_arcs}
	Let $(T,S,k)$ be an instance where none of the Reduction Rules \ref{redrule:sanity_check} - \ref{redrule:safepartition} are applicable.
	If $|\widetilde{R}_<|> k$ or $|\widetilde{R}_>| > k$, then we return a trivial \no instance.
\end{reduction rule}

It is easy to verify that \cref{redrule:wrong_arcs} can be applied in polynomial time.  In the following, we prove the correctness of it.

\begin{lemma} \label{lem:wrong_arcs}
	\cref{redrule:wrong_arcs} is sound.
\end{lemma}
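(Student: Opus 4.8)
The plan is to prove soundness of \cref{redrule:wrong_arcs} by exhibiting, in any instance where $|\widetilde{R}_<| > k$ (the case $|\widetilde{R}_>| > k$ is symmetric), a family of more than $k$ arc-disjoint $S$-triangles through some single arc, or equivalently more than $k$ arc-disjoint $S$-paths witnessing an obstruction that no budget-$k$ solution can destroy. First I would recall the defining properties of the relevant sets. Each vertex $w \in \widetilde{R}_< \subseteq R_<$ has in-degree at most $k$ and out-degree at least $k+1$ in $Z'$, and additionally has at least one in-neighbor in $Z'$. Since $\type_T^S(w) \ne S_1$ (as $w \notin Z$), $w$ fails to have exactly the type $S_1$, so there is either a vertex $s_1 \in S_1$ with $(w,s_1) \in A(T)$ or a vertex $s_2 \in S_2$ with $(s_2,w) \in A(T)$; I would use the in-neighbor of $w$ in $Z'$ together with this wrong-direction $S$-arc to build an $S$-triangle through $w$.

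The key structural step is to show that the existence of a large $\widetilde{R}_<$ forces a witness that survives any small solution, hence making the instance a \no instance. Concretely, for a \yes instance with solution $F$, $|F| \le k$, I would argue that $T \circledast F$ admits a $\spclorder$ $\Pi$ (via \cref{lem:minimal_equivalence,obs:ordering}). In this ordering each vertex of $Z$ shares the type $S_1$, so the block of $S_1$-singletons precedes $Z$ and the block of $S_2$-singletons follows $Z$. A vertex $w \in \widetilde{R}_<$ has an in-neighbor in $Z'$ in $T$ and has an arc to some $s_1 \in S_1$ (or receives an arc from some $s_2 \in S_2$), creating an $S$-triangle of the form $w \to s_1 \to z \to w$ where $z \in Z'$ is an in-neighbor of $w$ and $s_1 \to z$ holds because $z$ has type $S_1$. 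I would then observe that reversing $F$ must destroy this triangle, so at least one of its three arcs lies in $F$; I would set up the counting so that distinct vertices of $\widetilde{R}_<$ contribute triangles that cannot all be hit by $k$ reversals.

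The main obstacle, and the crux of the argument, is achieving arc-disjointness (or otherwise controlling the overlap) of the obstructing $S$-triangles across the more than $k$ vertices of $\widetilde{R}_<$. Naively each $w \in \widetilde{R}_<$ gives a triangle, but these triangles might share the $S_1$-arc or the $Z'$-vertex, so one reversal could kill many of them. To circumvent this I would exploit the abundance of $Z'$: because $|Z'|$ is large (at least $2k+4$ by \cref{lem:bounded_deg_vertices1}) and each $w \in \widetilde{R}_<$ has out-degree at least $k+1$ into $Z'$, I would show that for each $w$ one can select a \emph{private} arc incident to $w$ — namely the arc between $w$ and its $S_1$-neighbor (or $S_2$-neighbor), which is not incident to any other vertex of $\widetilde{R}_<$ — so that no single reversal in $F$ that does not touch $w$ can be blamed for $w$'s obstruction. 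Pinning the solution to spend one arc per vertex of $\widetilde{R}_<$, together with $|\widetilde{R}_<| > k$, yields the contradiction with $|F| \le k$.

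Alternatively, and perhaps more cleanly, I would phrase the whole argument through the \surearc machinery of \cref{lem:sure}: I expect that for each $w \in \widetilde{R}_<$ the wrong-direction arc between $w$ and $S$ admits more than $k$ arc-disjoint $S$-paths (routed through the many out-neighbors of $w$ in $Z'$ and the fixed arcs from $S_1$ to $Z'$), so that arc is a \surearc and therefore cannot be reversed by any solution. Then the surviving $S$-triangles, one per vertex of $\widetilde{R}_<$, would all have to be hit on their other two arcs, and arranging these to be arc-disjoint over the $> k$ vertices of $\widetilde{R}_<$ again forces more than $k$ reversals. I would choose whichever of these two framings keeps the arc-disjointness bookkeeping simplest, and I anticipate the bulk of the write-up to be the careful verification that the selected paths or triangles are genuinely arc-disjoint given the degree lower bounds on $Z'$.
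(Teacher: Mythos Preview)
Your overall plan matches the paper's: for each $w \in \widetilde{R}_<$ build an $S$-triangle through $w$ and argue that any size-$k$ solution $F$ must spend a distinct arc on each such $w$. But both routes you sketch to the counting step have a gap. In the alternative framing you try to show that $(w,s_1)$ is a \surearc by routing $S$-paths $w \to z \to s_1$ through the many out-neighbours $z$ of $w$ in $Z'$; this fails, because every arc between $S_1$ and $Z$ points \emph{into} $Z$, so there is no arc $z \to s_1$ to complete the path. In your first framing you try to make the triangles arc-disjoint by invoking the abundance of out-neighbours of $w$ in $Z'$; but the triangle $w \to s_1 \to z \to w$ uses an \emph{in}-neighbour $z$ of $w$ in $Z'$ (of which $w$ has at most $k$, not $k+1$), and nothing prevents all vertices of $\widetilde{R}_<$ from sharing the same $z$ and the same $s_1$, in which case the arc $(s_1,z)$ is common to all triangles and a single reversal kills them all.

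The fix is to apply the \surearc machinery to the \emph{third} arc of the triangle: by \cref{obs:sure_arc2}(ii), $(s_1,z)$ is itself a \surearc whenever $z \in Z'$. Hence $F$ can never contain $(s_1,z)$, so for each $w$ the solution must contain one of the remaining two arcs $(z,w)$ or $(w,s_1)$, both of which are incident to $w$. Since the other endpoint of either arc lies in $Z' \cup S_1$, which is disjoint from $\widetilde{R}_< \subseteq R$, the forced arcs are automatically distinct for distinct $w$, and you get $|F| \ge |\widetilde{R}_<|$ with no arc-disjointness bookkeeping at all. You must also dispose of the second case (some $s_2 \in S_2$ with $(s_2,w) \in A(T)$) separately: there the $k+1$ out-neighbours of $w$ in $Z'$ give $k+1$ $S$-triangles sharing the arc $(s_2,w)$, contradicting the assumption that \cref{redrule:many_triangle} has already been applied exhaustively, so this case simply cannot occur.
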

\begin{proof}
	We only prove the soundness of the \cref{redrule:wrong_arcs} for the case when $|\widetilde{R}_<| > k$. The proof for the case where $|\widetilde{R}_>| > k$ is analogous.
	
	Let $(T,S,k)$ be a \yes instance for \sfastr. Let $F$ be a solution of size at most $k$. We have a corresponding equivalence class $Z$ with a size at least $6k+6$. Additionally we have $S_1, S_2, \widetilde{Z},  Z', R_<$, $\widetilde{R}_<$, $R_>$ and $\widetilde{R}_>$ corresponding to $Z$. Consider a vertex $v \in \widetilde{R}_<$. Since we know that for each $u \in Z$, $\type_T^S(v) \neq \type_T^S(u)$, we have either of two cases: i) there is a vertex $s_1 \in S_1$ such that $(v, s_1) \in A(T)$, and ii) there is a vertex $s_2 \in S_2$ such that $(s_2, v) \in A(T)$.   We analyze both the cases separately.
	
	\begin{description}
		\item[Case 1: There is a vertex $s_1 \in S_1$ such that $(v, s_1) \in A(T)$. \label{case2}] Since $v \in \widetilde{R}_<$, this implies that $v$ has at least one in-neighbors in $Z'$. Let $z' \in Z'$ such that $(z',v) \in A(T)$. We also know that $(s_1, z') \in A(T)$ (an arc in $T$). This implies that there is a triangle $\triangle$ in $T$ with arcs $(z',v), (v,s_1), (s_1,z')$. Now since $(s_1,z')$ is a \surearc in $(T,S,k)$ (by \cref{obs:sure_arc2}), this implies that any solution of size at most $k$ must avoid this arc. Hence, for the triangle $\triangle$, $F$ must contain an arc that is incident to $v$, that is at least one of the arcs from $ \{(z',v),(v,s_1)\}$.
		
		\item[Case 2: There is a vertex $s_2 \in S_2$ such that $(s_2, v) \in A(T)$.\label{case1}] Since $v \in \widetilde{R}_<$, this implies that $v$ has at least  $k+1$ out-neighbors to $Z'$, let that set be $X \subseteq Z'$. Moreover we know that every vertex $z \in Z'$ has an arc to $s_2$. Thus, there are \( (k+1) \) \( S \)-triangles, each sharing the arc \( (s_2, v) \). This leads to a contradiction, as it implies that \( (T,S,k) \) is an instance where \cref{redrule:many_triangle} is applicable.

		
	\end{description}
	
	Considering both cases, we can conclude that Case \hyperref[case1]{2} cannot occur. Given that the solution can contain at most $k$ arcs, this implies that the total number of vertices that satisfies Case \hyperref[case2]{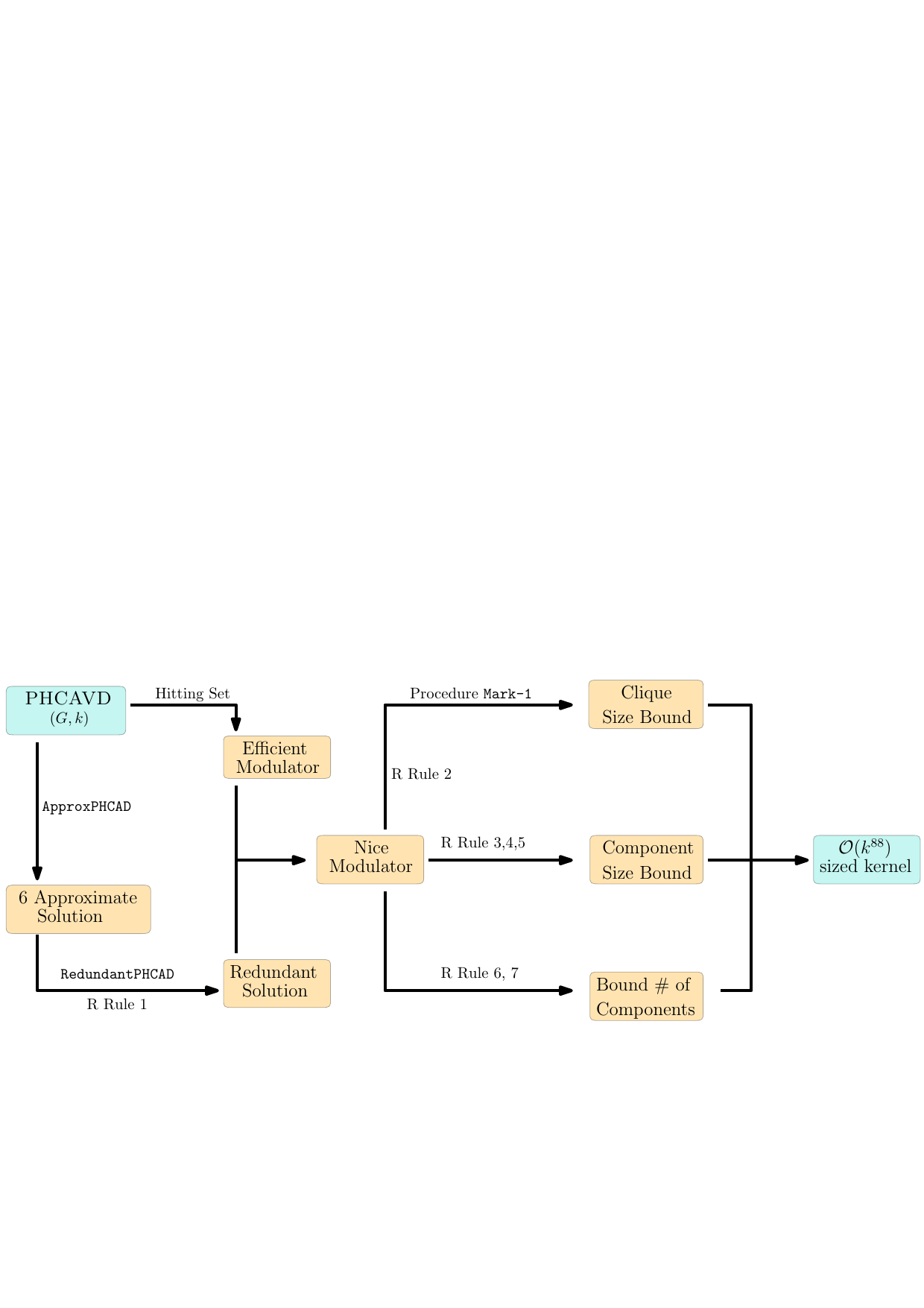} is upper bounded by $k$. Hence, we can say that $|\widetilde{R}_<| \leq k$.
\end{proof}

Therefore, we can state the following lemma. 

\begin{lemma}\label{lem:Rbound}
	Let $(T,S,k)$ be an instance where none of the Reduction Rules \ref{redrule:sanity_check}-\ref{redrule:wrong_arcs} are applicable. Then $|\widetilde{R}_<| \leq k$ and $|\widetilde{R}_>| \leq k$.
\end{lemma}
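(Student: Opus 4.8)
The plan is to observe that \Cref{lem:Rbound} is simply a restatement of the bounds established by \Cref{lem:wrong_arcs} (the soundness of \Cref{redrule:wrong_arcs}), now phrased as a structural property of an instance that is already reduced with respect to all of \cref{redrule:sanity_check} through \cref{redrule:wrong_arcs}. So the bulk of the work has effectively been done; what remains is to explain why non-applicability of the reduction rules yields the claimed bounds on $|\widetilde{R}_<|$ and $|\widetilde{R}_>|$.

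\textbf{Approach.} First I would fix an instance $(T,S,k)$ to which none of Reduction Rules \ref{redrule:sanity_check}--\ref{redrule:wrong_arcs} apply. If $(T,S,k)$ is a \no instance, then both bounds hold trivially, since the lemma asserts only upper bounds on the sizes of the two sets. So I would assume $(T,S,k)$ is a \yes instance and argue by contradiction: if either $|\widetilde{R}_<| > k$ or $|\widetilde{R}_>| > k$, then the hypothesis of \cref{redrule:wrong_arcs} is met, and that rule \emph{would} be applicable to $(T,S,k)$, contradicting our assumption that the instance is reduced with respect to it.

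\textbf{Key steps.} The one subtlety I would spell out is that the sets $\widetilde{R}_<$ and $\widetilde{R}_>$ are defined relative to a fixed large equivalence class $Z$ (with $|Z|\geq 6k+6$); \Cref{lem:Rbound} should be read as holding for each such $Z$ and its associated decomposition $S_1 \uplus S_2$, $\widetilde{Z}$, $Z'$, $R_<$, $R_>$, etc., exactly as set up in \cref{subsec:quadratic_bound}. For a fixed such $Z$, non-applicability of \cref{redrule:wrong_arcs} is by definition the statement that neither $|\widetilde{R}_<| > k$ nor $|\widetilde{R}_>| > k$ holds, i.e.\ $|\widetilde{R}_<| \leq k$ and $|\widetilde{R}_>| \leq k$. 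The substantive content—that a \yes instance cannot have $|\widetilde{R}_<| > k$ in the first place—is precisely \Cref{lem:wrong_arcs}, which I would simply invoke: its Case~2 is shown to be impossible (it would trigger \cref{redrule:many_triangle}), so every vertex of $\widetilde{R}_<$ falls under Case~1 and forces $F$ to contain an arc incident to it, bounding $|\widetilde{R}_<|$ by $|F| \leq k$; the argument for $\widetilde{R}_>$ is symmetric.

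\textbf{Main obstacle.} Honestly there is no real obstacle here, since this lemma is a clean corollary of the soundness proof already given. The only thing to be careful about is the logical direction: I must make sure to phrase it as ``since the rule is sound and not applicable, the negation of its triggering condition holds,'' rather than re-deriving the combinatorial bound from scratch. Thus the proof is a one- or two-sentence appeal to \Cref{lem:wrong_arcs} together with the definition of being reduced with respect to \cref{redrule:wrong_arcs}.
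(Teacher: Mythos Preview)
Your proposal is correct and matches the paper's approach: the paper also treats \Cref{lem:Rbound} as an immediate consequence of \cref{redrule:wrong_arcs} being inapplicable (it simply writes ``Therefore, we can state the following lemma'' with no separate proof). One minor remark: your case split on \yes/\no instances is unnecessary and the \no-instance sentence is confused---the bounds follow directly from the negation of the rule's trigger condition, independently of whether the instance is a \yes or \no instance; your subsequent ``by definition'' sentence is the entire argument.
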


For convenience, we have summarized all the sets discussed above, as well as the final one that will be defined at the end.

\begin{mdframed}[backgroundcolor=gray!10,topline=false,bottomline=false,leftline=false,rightline=false] 
	\begin{enumerate}
		
		\item[$\bullet$] $V(T) = S \uplus Z \uplus R$
		
		\item[$\bullet$] $S= S_1 \uplus S_2 $
		
		\item[$\bullet$] $Z = \{ v~:~v \notin S,~N^-(v) \cap S= S_1\} = Z' \uplus \widetilde{Z}  $
		\begin{itemize}
			\item $Z'= \{v~:~|N^+(v) \cap Z|\geq k+1,~ |N^-(v) \cap Z|\geq k+1\}$
			
			\item $\widetilde{Z}= \widetilde{Z}_- \uplus \widetilde{Z}_+ $
			
			\begin{itemize}
				\item  $\widetilde{Z}_- = \{v~:~|N^-(v) \cap Z| \leq  k \}$
				
				\item $\widetilde{Z}_+ = \{v~:~|N^+(v) \cap Z| \leq k  \}$
			\end{itemize}
		\end{itemize}
		
		\item[$\bullet$] $R= R_< \uplus R_> $
		\begin{itemize}
			\item $R_<= \{ v~;~|N^-(v) \cap Z'|\leq k,~ |N^+(v) \cap Z'|\geq k+1\} = \widehat{R}_< \uplus \widetilde{R}_<  $
			
			\begin{itemize}
				\item $\widehat{R}_< = \{v~:~|N^-(v) \cap Z'| = 0 \}$
				\item $\widetilde{R}_< = \{v~:~|N^-(v) \cap Z'|\geq 1\}$
			\end{itemize}
			
			\item $R_>= \{ v~:~|N^-(v) \cap Z'|\geq k+1,~ |N^+(v) \cap Z'|\leq k\} = \widehat{R}_> \uplus \widetilde{R}_>  $

			\begin{itemize}
				\item $\widehat{R}_> = \{v~:~  |N^+(v) \cap Z'| = 0 \}$
				\item $\widetilde{R}_> = \{v~:~|N^+(v) \cap Z'|\geq 1\}$
			\end{itemize}
		\end{itemize}  
		
		\item[$\bullet$] $Z_{\relevant} = \INN_{Z'}(R_<) \cup \OUTT_{Z'}(R_>) \cup \INN_{Z'}(\widetilde{Z}_-) \cup \OUTT_{Z'}(\widetilde{Z}_+)$
	\end{enumerate}
\end{mdframed}
For a pair of disjoint vertex sets \( A \) and \( B \), we define \(\INN_{B}(A)\) (respectively, \(\OUTT_{B}(A)\)) as the set of all in-neighbors (respectively, out-neighbors) of \( A \) in \( B \). Recall that \( Z' = Z \setminus \widetilde{Z} \). We now examine \(\INN_{Z'}(R_<)\). Note that \(\INN_{Z'}(R_<)\) corresponds to the in-neighbors of \(\widetilde{R}_<\) in \( Z' \), since \(\widehat{R}_<\) has no in-neighbors in \( Z' \). The size of \(\INN_{Z'}(R_<)\) is clearly upper-bounded by \( k^2 \) since the size of the set \(\widetilde{R}_<\) is at most \( k \), and each vertex in \(\widetilde{R}_<\) has at most \( k \) in-neighbors in \( Z' \). Similarly, we consider \(\OUTT_{Z'}(R_>)\). Observe that \(\OUTT_{Z'}(R_>)\) represents the out-neighbors of \(\widetilde{R}_>\) in \( Z' \), as \(\widehat{R}_>\) has no out-neighbors in \( Z' \). Thus, the size of \(\OUTT_{Z'}(R_>)\) is also upper-bounded by \( k^2 \) since the size of the set \(\widetilde{R}_>\) is at most \( k \), and each vertex in \(\widetilde{R}_>\) has at most \( k \) out-neighbors in \( Z' \).

Additionally, we define \(\INN_{Z'}(\widetilde{Z}_-)\) and \(\OUTT_{Z'}(\widetilde{Z}_+)\). The sizes of these sets, \(\INN_{Z'}(\widetilde{Z}_-)\) and \(\OUTT_{Z'}(\widetilde{Z}_+)\), are each upper-bounded by \( k(2k+1) \) (Observations~\ref{lem:bounded_deg_vertices} and \ref{lem:bounded_deg_vertices1}).

We introduce \( Z_{\relevant} \coloneq \INN_{Z'}(R_<) \cup \OUTT_{Z'}(R_>) \cup \INN_{Z'}(\widetilde{Z}_-) \cup \OUTT_{Z'}(\widetilde{Z}_+) \) as the set of relevant vertices associated with the equivalence class \( Z \). 
The size of \( Z_{\relevant} \) is bounded above by \( 2k(2k+1) + 2k^2 = 6k^2 + 2k \). At this point, we can simplify the instance by deleting any vertex that is not part of \( Z_{\relevant} \). However, this would lead to a kernel with $\OO(k^3)$ vertices.  For an illustration of the partition of the vertex set $V(T)$, in particular the set $Z_{\relevant}$, we refer to \Cref{fig:partition}.

\begin{figure}[t!]
	\begin{center}
		\includegraphics[scale=0.5]{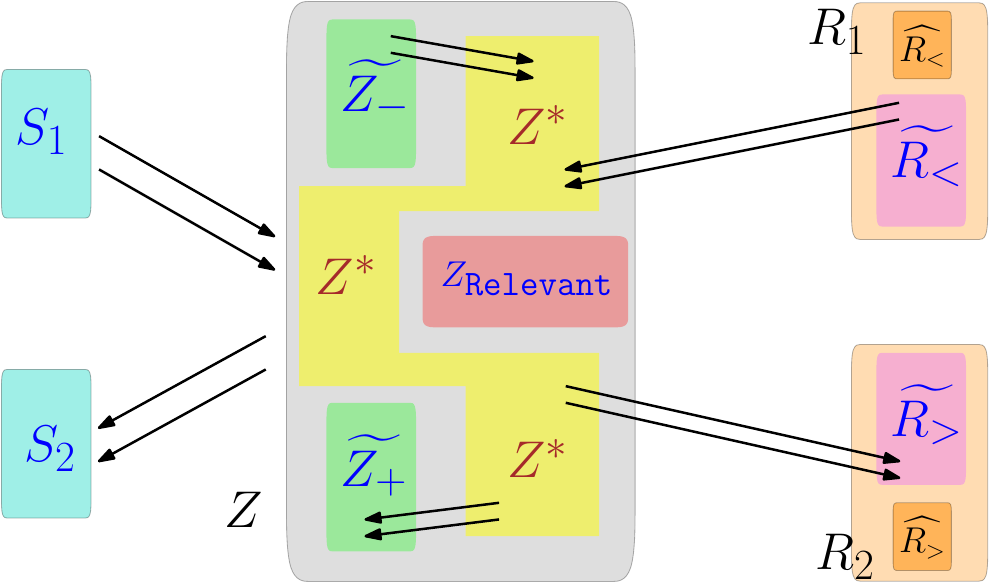}
	\end{center}
	\caption{Illustration of our partition. Double directed arrow denotes that  they have arc in that direction for each pair of vertices.}
	\label{fig:partition}
\end{figure}

In the following subsection, we give further improvement to the size of $Z_{\relevant}$.

\subsection{Linear Bound on the number of  Relevant Vertices} \label{subsec:linear_bound_eqclasses}
In this section, we provide an improved bound for the size of \( Z_{\relevant} \). Recall that 
$$ Z_{\relevant} = \INN_{Z'}(R_<) \cup \OUTT_{Z'}(R_>) \cup \INN_{Z'}(\widetilde{Z}_-) \cup \OUTT_{Z'}(\widetilde{Z}_+) $$ 
is a subset of vertices in \( Z' \). Let \(  Z_{\irrelevant} \coloneq Z' \setminus Z_{\relevant} \).

One reason why the size of \( Z_{\relevant} = \mathcal{O}(k^2) \) is that each vertex in \( \widetilde{Z}_- \) could have \( k \) in-neighbors in \( Z' \). Since \( \widetilde{Z}_- \) contains \( 2k+1 \) vertices, we have \( |\INN_{Z'}(\widetilde{Z}_-)| \leq 2k^2 + k \). We keep all these in-neighbors in \( Z_{\relevant} \) because they validate the redundancy of the vertices in \( Z_{\irrelevant} \). However, the redundancy of a vertex in \( Z_{\irrelevant} \) depends on the \emph{number of in-neighbors} and \emph{number of out-neighbors} in \( Z_{\relevant} \), not on the actual neighbors themselves. We leverage this observation and apply an ``arc swapping argument'' (see \Cref{fig:swap} for an illustration) to obtain an equivalent instance in which, for any vertex in \( Z_{\irrelevant} \), the number of in-neighbors and number of out-neighbors in \( Z_{\relevant} \) remains the same, but the size of \( Z_{\relevant} \) is reduced to \( \mathcal{O}(k) \).



\begin{figure}[t!]
	\begin{center}
		\includegraphics[scale=0.55]{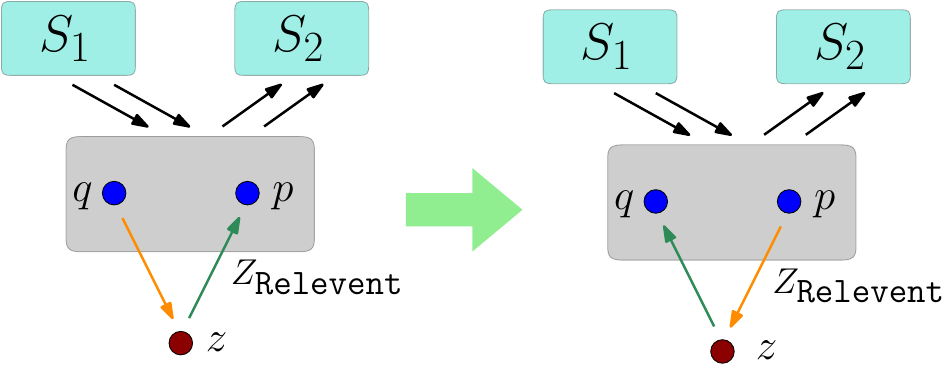}
	\end{center}
	\caption{Illustration of Arc Swapping.}
	\label{fig:swap}
\end{figure}


\begin{lemma}[Arc Swapping Lemma]\label{lem:swapping_lemma}
	Let $z \notin Z_{\relevant}$ be any non-$S$ vertex such that there exist two vertices $p,q \in Z_{\relevant}$ such that $P= \{(z,p), (q,z)\} \subseteq  A(T)$  then $(T, S, k)$ is a {\em \yes} instance if and only if $(T'= T\circledast P, S, k)$ is a {\em \yes} instance. 
\end{lemma}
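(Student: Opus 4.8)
The plan is to reduce the whole question to the behaviour of \emph{$S$-triangles} (via \cref{lem:Scycle}) and to exploit that $z$, which in this subsection lies in $Z_{\irrelevant}\subseteq Z$, together with $p,q\in Z_{\relevant}\subseteq Z$, all belong to the same equivalence class $Z=\eqv_T^S[S_1]$. The decisive structural fact I will isolate is the following \emph{type observation}: if two non-terminal vertices $a,b$ satisfy $N^+(a)\cap S\subseteq S_2$ and $N^-(b)\cap S\subseteq S_1$, then the arc $(a,b)$ lies in no $S$-triangle, since the apex $w$ of such a triangle would have to satisfy $w\in N^+(a)\cap N^-(b)\cap S\subseteq S_2\cap S_1=\emptyset$. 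In particular the two arcs of $P$, as well as the two reversed arcs $\rev(P)$, are internal arcs of $Z$ between type-$S_1$ vertices and hence sit in no $S$-triangle of $T$ (resp.\ $T'$).

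The only delicate point is that a candidate solution reverses arcs, so I must make sure the containments $N^+(\cdot)\cap S\subseteq S_2$ and $N^-(\cdot)\cap S\subseteq S_1$ survive this reversal for $z,p,q$. This is exactly what the \surearc machinery buys us: by \cref{obs:sure_arc2} all arcs from $S_1$ into $Z'$ and from $Z'$ into $S_2$ are \surearcs, so no solution of size at most $k$ reverses an arc that could push an $S_2$-vertex into the in-neighbourhood of a $Z'$-vertex, or an $S_1$-vertex into its out-neighbourhood. Thus for any solution of either instance the vertices $z,p,q$ keep $N^+\cap S\subseteq S_2$ and $N^-\cap S\subseteq S_1$ in the reversed tournament, and the type observation continues to apply to the arcs between $z$ and $\{p,q\}$. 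I note one subtlety that I expect to be the main obstacle: the swap shifts the $Z$-internal in/out-degrees of $p,q$ by one, so two of their \surearc certificates become borderline in $T'$; however the type observation uses only \emph{one-sided} containments, and the borderline arcs are precisely the ones it does not rely on, so this asymmetry causes no harm.

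With this in hand the argument is short and symmetric. For the forward direction I start from a solution $F$ with $|F|\le k$ of $(T,S,k)$; deleting $(z,p)$ and $(q,z)$ from $F$ cannot create an $S$-triangle, because the reinstated arcs are internal type-$S_1$ arcs (the containments above survive since $F$ avoids the relevant \surearcs), so I may assume $F$ avoids $P$ and hence $F\subseteq A(T)\cap A(T')$. Since $T'\circledast F=(T\circledast F)\circledast P$ differs from the $S$-acyclic tournament $T\circledast F$ only in the two internal type-$S_1$ arcs between $z$ and $\{p,q\}$, and no $S$-triangle can use those, $T'\circledast F$ is again $S$-acyclic, so $(T',S,k)$ is a \yes instance. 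The reverse direction is identical after observing that $T=T'\circledast\rev(P)$ and that in $T'$ the vertex $z$ again has exactly one in- and one out-neighbour among $\{p,q\}\subseteq Z_{\relevant}$, so the same removal-and-swap step applies verbatim. Everything beyond the preservation of the type containments under arc reversals is routine bookkeeping with \cref{lem:Scycle}.
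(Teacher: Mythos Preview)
Your argument hinges on the assumption that $z\in Z_{\irrelevant}\subseteq Z'\subseteq Z$, so that $z$ has the same $S$-type $S_1$ as $p$ and $q$ and the arcs $(s_1,z),(z,s_2)$ are \surearcs by \cref{obs:sure_arc2}. This is a misreading of the hypothesis. The statement only says $z\notin Z_{\relevant}$, and in the very next lemma (\cref{lem-upper-bound-relevant}) the swap is applied with $z\in\widetilde Z_-\cup R_<$; in particular $z$ may lie in $R$, which is disjoint from $Z$, or in $\widetilde Z\subseteq Z\setminus Z'$.

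For such $z$ your ``type observation'' breaks. If $z\in R_<$ then $\type_T^S(z)\neq S_1$, and by the analysis in \cref{lem:wrong_arcs} there is some $s_1\in S_1$ with $(z,s_1)\in A(T)$. After the swap the triple $p\to z\to s_1\to p$ is a genuine $S$-triangle in $T'$, so the reversed arc $(p,z)$ \emph{does} sit in an $S$-triangle and your conclusion fails. Even for $z\in\widetilde Z_-$ (so $z\in Z$ but $z\notin Z'$) the arcs between $S$ and $z$ are not covered by \cref{obs:sure_arc2}, so a solution $F$ may contain $(s_1,z)$; then removing $(q,z)$ from $F$ produces the $S$-triangle $q\to z\to s_1\to q$ in $T\circledast(F\setminus\{(q,z)\})$, and your ``we may assume $F$ avoids $P$'' step is unjustified.

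The paper's proof does not assume anything about the $S$-neighbourhood of $z$. It fixes a \emph{minimal} solution $F$, performs a case distinction on which of $(z,p),(q,z)$ lie in $F$ (building a tailored $F'\subseteq A(T')$ in each case so that $T'\circledast F'=T\circledast F$), and in the crucial case where both avoid $F$ it argues indirectly: a putative $S$-triangle $(p,z),(z,s),(s,p)$ in $T'\circledast F$ is transferred to $T\circledast F$ via the \surearc $(s,q)$ (here only $q\in Z'$ is used), together with a closed-walk argument exploiting minimality of $F$ when $(z,s)\in F$. You will need this device, or something equivalent, to cover the cases $z\in R_<\cup\widetilde Z_-$ that the lemma is actually used for.
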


\begin{proof}
	In the forward direction, let $(T, S, k)$ be a \yes instance and let $F$ be a minimal solution of size at most $k$. Now we make a case distinction as follows.
	
	\begin{description}
		\item[Case 1: $(z,p),~(q,z) \in F$:\label{case31}]  In this case, we will show that $F'= F \setminus P$ is a solution of size at most $k-2$ for $(T', S, k)$. Notice that $T\circledast F$ and $T'\circledast F'$ are the same tournaments. So, if $T'\circledast F'$ has a $S$-triangle, then that will imply that $T\circledast F$ also has a $S$-triangle which contradicts that $F$ is a solution for $(T,S,k)$. 
		
		
		\item[Case 2: $(z,p) \in F,~(q,z) \notin F$:\label{case32}] We only consider this case since the case $(z,p) \notin F,~(q,z) \in F$ is symmetric. Consider $F'= F \setminus (z,p) \cup \rev(q,z)$. We claim that $F'$ is a solution of size at most $k$ for $(T', S, k)$.  Notice that the tournaments $T\circledast F$ and $T'\circledast F'$ are the same tournaments. Hence it directly follows that any $S$-triangle in $T'\circledast F'$ will directly imply having the same $S$-triangle in $T\circledast F$ which contradicts that $F$ is a solution for $(T,S,k)$.
		
		
		\item[Case 3: $(z,p) \notin F,~(q,z) \notin F$:\label{case34}] In this case, we claim that $F$ is also a solution of size at most $k$ for $(T',S,k)$. Suppose not, this implies that $T'\circledast F$ has a $S$ triangle, $\triangle$. Since $T\circledast F$ did not have any $S$-triangle, this implies that $\triangle$ must contain at least one of the arcs $(p,z)$ and $(z,q)$ (reversed arcs corresponding to $(z,p)$ and $(q,z)$, respectively). Without loss of generality assume that $(p,z) \in \triangle$ (proof for the case when  $(z,q) \in \triangle$ is identical).
		
		\begin{description}\item[When $(p,z) \in \triangle$: \label{subcase1}] 
			Since $z,p \notin S$, let the third vertex of $\triangle$ be $s \in S$ and the arcs of $\triangle$ be $(p,z),(z,s)$ and $(s,p)$. Since $|Z| > 7k+4$ and $p \in Z_{\relevant} \subseteq Z'$, the arc $(s,p)$ is a \surearc in $(T,S,k)$ (by \cref{obs:sure_arc2}). Moreover since $p \in Z'$ and for every $s' \in S_2$, the arc $(p,s')$ is a \surearc in $(T,S,k)$ (by \cref{obs:sure_arc2}), this implies that $s \notin S_2$. So we assume that $s \in S_1$. Now consider $T\circledast F$. Since $q \in Z'$, the arc $(s,q)$ is also a \surearc in $(T,S,k)$ (by \cref{obs:sure_arc2}) and hence $F$ avoids this arc in the instance $(T,S,k)$. $T\circledast F$ has the arc $(q,z)$ since $(q,z) \notin F$. Consider the arc $(z,s)$. If $(z,s) \notin F$, then we found a triangle $(s,q),(q,z)$ and $(z,s)$ in $T\circledast F$ which contradicts that $F$ is a solution for $(T,S,k)$. Let's assume that $(z,s) \in F$. Since $F$ is a minimal solution, there must exist a  cycle $C$ such that $\arc(C) \cap F = \{(z,s)\}$. Now, consider a closed walk \( W \) that traces the triangle formed by the arcs \( (s,q), (q,z), \) and \( (z,s) \). The arcs \( (s,q) \) and \( (q,z) \) remain the same in $W$, but instead of following the arc \( (z,s) \), we traverse the path \( C - (z,s) \) from the corresponding cycle of \( (z,s) \). By definition, \( W \) is a closed walk, and notably, it contains the vertex \( s \in S_1 \) while avoiding all arcs from \( F \). From this, we can easily extract a cycle that includes \( s \), and by \cref{lem:Scycle}, we find a triangle containing \( s \) that avoids all arcs of \( F \), leading to a contradiction.    
		\end{description}
	\end{description}
	In the reverse direction, note that we apply the same operation on the instance $(T', S, k)$ to transform it back to $(T, S, k)$. Thus, the proof for the reverse direction mirrors the proof for the forward direction.
\end{proof}

The next lemma utilizes the Arc Swapping Lemma (\Cref{lem:swapping_lemma}) to reduce the size of \(Z_{\relevant}\) to \(2k+2\).

\begin{lemma}\label{lem-upper-bound-relevant}
	Given an instance $(T,S,k)$ of \sfastr, in polynomial time we can obtain an instance $(T',S,k)$ such that 
	$(T,S,k)$ is a \yes instance if and only if $(T',S,k)$ is a \yes instance. Moreover, in $T'$, the size of $Z_{\relevant}$ is upper bounded by $2k+2$.
\end{lemma}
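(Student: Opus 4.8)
The plan is to produce $T'$ from $T$ by a \emph{sequence} of arc reversals, each one an instance of the Arc Swapping Lemma (\Cref{lem:swapping_lemma}), arranged so that in the end every vertex whose relevant neighbourhood generates $Z_{\relevant}$ draws its relevant in- and out-neighbours from two small fixed pools. Recall that $Z_{\relevant} = \INN_{Z'}(R_<) \cup \OUTT_{Z'}(R_>) \cup \INN_{Z'}(\widetilde{Z}_-) \cup \OUTT_{Z'}(\widetilde{Z}_+)$ is generated by four families of ``consumers''. The guiding observation is that, for the eventual irrelevant-vertex rule, only the \emph{number} of relevant in-/out-neighbours of each irrelevant vertex matters, not their identities, and a swap $\{(z,p),(q,z)\}\mapsto\{(p,z),(z,q)\}$ with $p,q\in Z_{\relevant}$ leaves exactly those two numbers (for the moved vertex $z$) intact while trading the neighbour $q$ for the neighbour $p$. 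Accordingly I would reserve an in-pool $B_{\mathrm{in}}\subseteq \INN_{Z'}(R_<)\cup\INN_{Z'}(\widetilde{Z}_-)$ and an out-pool $B_{\mathrm{out}}\subseteq \OUTT_{Z'}(R_>)\cup\OUTT_{Z'}(\widetilde{Z}_+)$, each of size at most $k+1$ (if either generating set already has size $\le k+1$ there is nothing to do on that side), and reroute all consumer arcs into them, so that the recomputed $Z_{\relevant}$ satisfies $Z_{\relevant}\subseteq B_{\mathrm{in}}\cup B_{\mathrm{out}}$ and hence $|Z_{\relevant}|\le 2k+2$.

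Concretely, I would process the consumers one at a time; treat the in-side, the out-side being symmetric. A consumer here is a vertex $v\in\widetilde{Z}_-\cup\widetilde{R}_<$, and by \cref{lem:Rbound} and \cref{lem:bounded_deg_vertices1} there are only $\OO(k)$ of them, each with at most $k$ in-neighbours in $Z'$, all of which lie in $Z_{\relevant}$ by the definition of the relevant set. While $v$ has an in-neighbour $q\in Z'\setminus B_{\mathrm{in}}$, I select $p\in B_{\mathrm{in}}$ that is an out-neighbour of $v$; such a $p$ exists because $v$ has at most $k<k+1=|B_{\mathrm{in}}|$ in-neighbours, so some pool vertex is not among them, and in a tournament it is then an out-neighbour. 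As both $p$ and $q$ are relevant, \Cref{lem:swapping_lemma} applies to $P=\{(v,p),(q,v)\}$, and reversing $P$ replaces the in-neighbour $q$ by the in-neighbour $p\in B_{\mathrm{in}}$ without altering the in-degree or out-degree of $v$ inside $Z$. Iterating empties the in-neighbourhood of $v$ of vertices outside $B_{\mathrm{in}}$; since the swaps touching $v$ alter only arcs incident to $v$ and to pool vertices, the in-neighbourhoods of already-processed consumers stay fixed. Because every swapped arc is incident only to non-$S$ vertices, $\type_T^S$ is unchanged, so the equivalence class $Z$ and the partition $S_1\uplus S_2$ are preserved, and equivalence of $(T,S,k)$ and $(T',S,k)$ follows by composing the reversals through \Cref{lem:swapping_lemma}. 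The whole procedure is clearly polynomial.

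The hard part will be that $Z'$, $\widetilde{Z}_\pm$, $R_<$, $R_>$, and therefore $Z_{\relevant}$ itself, are all defined by in-/out-degrees measured inside $Z'$, and a single swap raises the in-degree of $q$ by one and lowers that of $p$ by one within $Z$. Each swap preserves the degrees (hence the $Z'$-versus-$\widetilde{Z}$ status) of the moved consumer $v$, but repeatedly using a pool vertex as the acquired neighbour $p$ erodes its in-degree; should it drop below $k+1$, that vertex leaves $Z'$, which both invalidates later swaps (the lemma needs $p,q\in Z_{\relevant}$) and, worse, could place it in $\widetilde{Z}_-$ and thereby \emph{create} new relevant vertices. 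I would control this by choosing each pool among the vertices of its generating set of \emph{largest} in-degree (respectively out-degree) in $Z$: a fixed pool vertex is acquired at most once per consumer, so its in-degree falls by at most the $\OO(k)$ total number of consumers, and since $|Z|>7k+4$ a counting argument (in the spirit of \cref{lem:bounded_deg_vertices1}) guarantees at least $k+1$ vertices whose in-degree exceeds $k+1$ by this additive $\OO(k)$ slack. Thus the pools stay inside $Z'$ and all invoked swaps remain valid.

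Verifying this degree budget precisely — confirming that enough high-degree candidates exist, that the cumulative perturbation never reclassifies a consumer or an $R$-vertex, and that the in-side and out-side reroutings do not interfere when the pools are not disjoint — is the real crux of the formal argument. Once it is in place, recomputing $Z_{\relevant}$ on $T'$ gives $\INN_{Z'}(R_<)\cup\INN_{Z'}(\widetilde{Z}_-)\subseteq B_{\mathrm{in}}$ and $\OUTT_{Z'}(R_>)\cup\OUTT_{Z'}(\widetilde{Z}_+)\subseteq B_{\mathrm{out}}$, so $|Z_{\relevant}|\le |B_{\mathrm{in}}|+|B_{\mathrm{out}}|\le 2k+2$, which is exactly the claimed bound.
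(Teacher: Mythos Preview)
Your approach is essentially the paper's: fix a pool of $k+1$ vertices on each side (the paper's $Z_{\relevant}^1[k+1]$, your $B_{\mathrm{in}}$) and, for each consumer $v\in\widetilde{Z}_-\cup R_<$, repeatedly apply \Cref{lem:swapping_lemma} to trade an out-of-pool in-neighbour $q$ for an in-pool out-neighbour $p$ until all of $v$'s in-neighbours in $Z'$ lie in the pool; symmetrically on the out-side; conclude $Z_{\relevant}\subseteq B_{\mathrm{in}}\cup B_{\mathrm{out}}$. The paper chooses the pool arbitrarily (``order the vertices in $Z_{\relevant}^1$ arbitrarily''), you choose by largest degree, but the mechanics are identical.

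Where you go beyond the paper is in flagging the degree-drift issue: a swap with $v\in\widetilde{Z}_-$ lowers the $Z$-in-degree of the acquired pool vertex and the $Z$-out-degree of the discarded vertex, so after many swaps some of them could slip out of $Z'$, both invalidating later invocations of \Cref{lem:swapping_lemma} (which needs $p,q\in Z_{\relevant}\subseteq Z'$) and potentially enlarging the \emph{recomputed} $\widetilde Z_\pm$ and hence $Z_{\relevant}$ in $T'$. The paper's proof does not address this at all; it simply asserts the final inclusion without tracking how $Z'$, $\widetilde{Z}_\pm$ evolve under the swaps. Your proposed remedy---pick pool vertices with enough degree headroom to survive an additive $\OO(k)$ erosion---is the natural fix, but as you yourself say, you do not carry out the verification. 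So your write-up is the same argument as the paper's, with the soft spot correctly identified rather than glossed over; neither closes it.
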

\begin{proof}
	We assume that the size of \(Z_{\relevant}\) is at least \(2k+2\); otherwise, the proof is already complete.  
	We partition \( Z_{\relevant} \) into two parts, \( Z_{\relevant}^1 \uplus Z_{\relevant}^2 \), where \( Z_{\relevant}^1 = \INN_{Z'}(\widetilde{Z}_-) \cup \INN_{Z'}(R_<) \) and \( Z_{\relevant}^2 = \OUTT_{Z'}(\widetilde{Z}_+) \cup \OUTT_{Z'}(R_>) \). We will first provide an upper bound for the size of the set \( Z_{\relevant}^1 \), and the argument for bounding the size of \( Z_{\relevant}^2 \) is symmetric. The proof utilizes Lemma~\ref{lem:swapping_lemma}, applied to a specific subset of \( Z_{\relevant}^1 \).
	
	We order the vertices in \( Z_{\relevant}^1 \) arbitrarily. Let \( Z_{\relevant}^1[k+1] \) denote the first \( k+1 \) vertices in that order. Now, consider the set \( \widetilde{Z}_- \cup R_< \), which contains vertices with in-degree at most \( k \) in \( Z' \). For each vertex \( v \in \widetilde{Z}_- \cup R_< \), proceed as follows: choose an in-neighbor \( u \) of \( v \) in \( Z_{\relevant}^1 \setminus Z_{\relevant}^1 [k+1] \). Since the number of in-neighbors of \( v \) in \( Z' \) is at most \( k \), there must be a vertex \( w \in Z_{\relevant}^1[k+1] \) such that \( w \) is an out-neighbor of \( v \). Given that \( u \) is an in-neighbor and \( w \) is an out-neighbor of \( v \), we can apply \cref{lem:swapping_lemma} to obtain an equivalent instance \( (T' = T - \{(u,v),(v,w)\} + \{(v,u),(w,v)\}, S, k) \). 
	
	We continue applying \cref{lem:swapping_lemma} for each vertex \( v \) until \( v \) has no remaining in-neighbors in \( Z_{\relevant}^1 \setminus Z_{\relevant}^1 [k+1] \). Then, we move to the next vertex in \( \widetilde{Z}_- \cup R_< \). The proof of the correctness of this step follows by induction on the number of swapped arcs and from the correctness of the lemma \cref{lem:swapping_lemma}. Once this process is applied exhaustively to every vertex in \( \widetilde{Z}_- \cup R_< \), the arcs in \( A(\widetilde{Z}_- \cup R_<, Z_{\relevant}^1 \setminus Z_{\relevant}^1 [k+1]) \) will all originate from \( \widetilde{Z}_- \cup R_< \) and terminate in \( Z_{\relevant}^1 \setminus Z_{\relevant}^1 [k] \). Furthermore, all the in-neighbors of \( \widetilde{Z}_- \cup R_< \) within \( Z' \) will belong to \( Z_{\relevant}^1 [k+1] \), i.e., \( \INN_{Z'}(\widetilde{Z}_-) \cup \INN_{Z'}(R_<) \subseteq Z_{\relevant}^1 [k+1] \). Since \( Z_{\relevant}^1 [k+1] \) has size \( k+1 \), the size of \( Z_{\relevant}^1 = \INN_{Z'}(\widetilde{Z}_-) \cup \INN_{Z'}(R_<) \) is also bounded by \( k+1 \).
	
	Similarly, the size of \( Z_{\relevant}^2 \) is also bounded above by \( k+1 \). Therefore, the total size of \( Z_{\relevant} \) is at most \( 2k+2 \). This completes the proof.
\end{proof}

In the next subsection, we show that any vertex in \(Z_\irrelevant\) can be safely removed until the size of \(Z'\) is reduced to \(2k+4\).

\subsection{Finding Irrelevant Vertex from Large Equivalence Class}
Recall that $Z_{\irrelevant} ~ \coloneq Z'~ \setminus Z_{\relevant}$. We call every vertex of the set $Z_{\irrelevant}$, an irrelevant vertex. All the vertices of $Z_{\irrelevant}$  have the following properties (by \cref{obs:sure_arc2}) which is captured in the following observation.

\begin{observation}\label{obs:surearc-irv}
	\begin{mdframed}
		[backgroundcolor=gray!10,topline=false,bottomline=false,leftline=false,rightline=false] 
		$~~$
		\begin{enumerate}
			\item  Every  arc in $\arc(\widetilde{Z}_- \cup S_1 \cup R_<, Z_{\irrelevant})$ is a \surearc in $(T,S,k)$ and those arcs are from $(\widetilde{Z}_- \cup S_1 \cup R_<)$ to $Z_{\irrelevant}$.
			
			\item  Every arc in $\arc(Z_{\irrelevant}, \widetilde{Z}_+ \cup S_2 \cup R_>)$ is a \surearc in $(T,S,k)$ and those arcs are from $Z_{\irrelevant}$ to $(\widetilde{Z}_+ \cup S_2 \cup R_>)$.
			
		\end{enumerate}
	\end{mdframed} 
\end{observation}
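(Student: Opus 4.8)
The plan is to prove the two orientation claims and the two \surearc claims separately, treating item~1 in full and obtaining item~2 by the obvious left--right symmetry (swap $\INN_{Z'}(\cdot)$ for $\OUTT_{Z'}(\cdot)$ and $S_1$ for $S_2$). First I would settle the orientations purely from the definitions. Since $Z_{\relevant} \supseteq \INN_{Z'}(\widetilde{Z}_-) \cup \INN_{Z'}(R_<)$, any $v \in Z_{\irrelevant} = Z' \setminus Z_{\relevant}$ is by construction not an in-neighbour (inside $Z'$) of any vertex of $\widetilde{Z}_- \cup R_<$; as $v \in Z' \subseteq Z$ and $T$ is a tournament, this forces every arc of $\arc(\widetilde{Z}_- \cup R_<, Z_{\irrelevant})$ to point towards $v$. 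For the $S_1$ part, every arc of $\arc(S_1, Z_{\irrelevant})$ points towards $v$ because $v \in Z$ has $\type_T^S(v) = S_1$, i.e.\ all of $S_1$ are in-neighbours of $v$.

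Next I would dispatch the arcs incident to $S$. Because $Z_{\irrelevant} \subseteq Z'$, we have $\arc(S_1, Z_{\irrelevant}) \subseteq \arc(S_1, Z')$ and $\arc(Z_{\irrelevant}, S_2) \subseteq \arc(Z', S_2)$, so these arcs are \surearcs immediately from \cref{obs:sure_arc2}(ii) and (iii).

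The remaining and genuinely delicate case is the arcs joining $Z_{\irrelevant}$ to the non-terminal sets $\widetilde{Z}_-$ and $R_<$ (and symmetrically $\widetilde{Z}_+, R_>$). For these I would \emph{not} use \cref{lem:sure}: since both endpoints are non-terminals, every $S$-path between them must detour into $S$ through $R$, and there need not be $k+1$ arc-disjoint such paths. Instead the plan is to show that such an arc lies in \emph{no} $S$-triangle of $T$. Consider $e = (w,v)$ with $w \in \widetilde{Z}_-$ and $v \in Z_{\irrelevant}$: a directed triangle through $e$ is $w \to v \to x \to w$, and as $w,v \notin S$ an $S$-triangle forces $x \in S$; but $x = s_1 \in S_1$ is impossible since $(s_1,v) \in A(T)$ (type of $v$), while $x = s_2 \in S_2$ is impossible since $(w,s_2) \in A(T)$ (type of $w$). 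For $e = (r,v)$ with $r \in R_<$ only the case $x = s_2$ needs an extra step: if some $s_2 \in S_2$ were an in-neighbour of $r$, then the $\geq k+1$ out-neighbours of $r$ in $Z'$ would produce $k+1$ $S$-triangles all sharing $(s_2,r)$, contradicting that \cref{redrule:many_triangle} is not applicable; hence no $s_2$ is an in-neighbour of $r$ and $e$ is again in no $S$-triangle. The facts for $\widetilde{Z}_+$ and $R_>$ follow the same template.

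Finally, to convert ``no $S$-triangle'' into the \surearc conclusion, I would use minimality. Suppose a minimal solution $F$ reversed such an arc $e$. By \cref{lem:minimal_equivalence} we may treat $F$ as a minimal deletion $S$-feedback arc set, so by minimality $T - (F \setminus \{e\})$ contains an $S$-cycle; since $T - F$ is $S$-acyclic this cycle must use $e$, and by \cref{lem:Scycle} it yields an $S$-triangle through $e$ in $T$, contradicting the previous paragraph. The hard part of the whole argument is precisely this last block: for the arcs not incident to $S$ the \surearc property cannot be read off from the path-counting of \cref{lem:sure} and instead relies on the structural fact that a minimal solution never reverses an arc lying outside all $S$-triangles, together with the orientation constraints — in particular the invocation of \cref{redrule:many_triangle} for the $R_<$ (and $R_>$) arcs.
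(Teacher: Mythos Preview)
Your orientation arguments and your handling of the $S_1$- and $S_2$-arcs via \cref{obs:sure_arc2} are correct, and this is exactly what the paper does: the paper's only justification for the observation is the parenthetical ``(by \cref{obs:sure_arc2})'', which in fact only covers the $\arc(S_1,Z_{\irrelevant})$ and $\arc(Z_{\irrelevant},S_2)$ pieces. Your attempt to supply the missing argument for the $\widetilde{Z}_-$, $R_<$ (and symmetrically $\widetilde{Z}_+$, $R_>$) arcs is therefore more than the paper offers --- but the final step of that argument contains a genuine gap.

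The problem is the sentence ``by \cref{lem:Scycle} it yields an $S$-triangle through $e$ in $T$''. \Cref{lem:Scycle} only says that a \emph{tournament} with an $S$-cycle has \emph{some} $S$-triangle; it does not say that an $S$-cycle through a given arc $e$ yields an $S$-triangle through $e$. Moreover $T-(F\setminus\{e\})$ is not a tournament, so the lemma does not even apply to that graph. Concretely, on four vertices $w,v,a,s$ with $S=\{s\}$ and arcs $(w,v),(v,a),(a,s),(s,w),(w,a),(s,v)$ there is an $S$-cycle $w\to v\to a\to s\to w$ through $e=(w,v)$ but no $S$-triangle through $e$; so the implication you use is false in general.

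The fix is to stay in the reversal world, where the intermediate graph \emph{is} a tournament. If a minimal reversal solution $F$ contains $e=(w,v)$, then $T\circledast F$ is $S$-acyclic while $T\circledast(F\setminus\{e\})$ is not; since the latter is a tournament, \cref{lem:Scycle} gives an $S$-triangle in $T\circledast(F\setminus\{e\})$, and since the two tournaments differ only on the $w$--$v$ arc this triangle must be $w\to v\to x\to w$ with $x\in S$. Now the other two arcs may a priori be reversed arcs of $F$, so you cannot directly invoke your ``no $S$-triangle through $e$ in $T$'' computation. Instead argue case by case that one of the other two arcs is a \surearc, hence not in $F$, and derive a contradiction: if $x\in S_1$ then $(x,v)\in\arc(S_1,Z')$ is a \surearc by \cref{obs:sure_arc2}(ii), yet the triangle needs $(v,x)$, forcing $(x,v)\in F$; if $x\in S_2$ then $(w,x)$ (for $w\in\widetilde{Z}_-$) or $(r,x)$ (for $r\in R_<$) is a \surearc by \cref{lem:sure}, because $w$ has at least $|Z|-1-k\geq k+1$ out-neighbours in $Z$ and $r$ has at least $k+1$ out-neighbours in $Z'$, each giving $k+1$ arc-disjoint length-$2$ $S$-paths into $x$. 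This closes the gap.
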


In the subsequent reduction rule, our objective is to eliminate an irrelevant vertex to establish a linear (in \(k\)) bound on each equivalence class. 
We formalize this as follows.


\begin{reduction rule}[Irrelevant Vertex Rule 2]\label{redrule:bigtype}
	Let \((T,S,k)\) be an instance of \sfastr where none of the Reduction Rules \ref{redrule:sanity_check}-\ref{redrule:wrong_arcs} are applicable. Furthermore, this instance is obtained after applying \Cref{lem-upper-bound-relevant}. Let \(Z_\irrelevant\) be the set defined above. If \(|Z| \geq 6k+7\), then delete an arbitrary vertex \(v\) from \(Z_\irrelevant\) in \(T\). Return the instance \((T - v, S, k)\).
\end{reduction rule}

It is easy to verify that \cref{redrule:bigtype} can be applied in polynomial time.  In the following, we prove the correctness of it.

\begin{lemma}
	\cref{redrule:bigtype} is sound.
\end{lemma}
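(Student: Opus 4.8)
# Proof Proposal for Soundness of Reduction Rule~\ref{redrule:bigtype}

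The plan is to show that deleting an arbitrary irrelevant vertex $v \in Z_\irrelevant$ preserves the answer to the instance, i.e. $(T,S,k)$ is a \yes instance if and only if $(T-v,S,k)$ is. The forward direction is immediate: if $F$ is a solution of size at most $k$ for $(T,S,k)$, then restricting $F$ to the arcs of $T-v$ (that is, $F \setminus \arc(v)$) yields a feasible solution for $(T-v,S,k)$, since removing a vertex cannot create new $S$-cycles. The entire substance of the proof lies in the reverse direction, where the challenge is to reinsert $v$ into a solution of $T-v$ without increasing the budget, exploiting the fact that $v$ is irrelevant and that we have already reduced $|Z_{\relevant}| \le 2k+2$ via \Cref{lem-upper-bound-relevant}.

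For the reverse direction, I would let $\widetilde{F}$ be a \emph{minimal} solution of size at most $k$ for $(T-v,S,k)$ and argue that $\widetilde{F}$ itself (viewed as an arc set in $T$) is already a solution for $(T,S,k)$. The key structural facts are supplied by \Cref{obs:surearc-irv}: every arc from $\widetilde{Z}_- \cup S_1 \cup R_<$ into $v$, and every arc from $v$ into $\widetilde{Z}_+ \cup S_2 \cup R_>$, is a \surearc, so no minimal solution of size $\le k$ reverses any of these arcs. I would first take the \spclorder{} $\Pi$ of $(T-v)\circledast \widetilde{F}$ guaranteed by \Cref{lem:minimal_equivalence} and \Cref{obs:ordering}, and then identify the unique position in $\Pi$ where $v$ can be inserted. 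Because $Z = \eqv_T^S[S_1]$ and $v \in Z' \subseteq Z$, the vertex $v$ must slot between the singleton parts for $S_1$ and those for $S_2$; the only possible obstructions to $S$-acyclicity after reinsertion come from $S$-triangles through $v$, and by \Cref{lem:Scycle} it suffices to rule out $S$-triangles.

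The main obstacle, and the heart of the argument, is controlling the arcs between $v$ and $Z_{\relevant}$, since these are \emph{not} guaranteed to be \surearcs. Here is where the bound $|Z_{\relevant}| \le 2k+2$ and $|Z| \ge 6k+7$ (hence $|Z_{\irrelevant}| = |Z'| - |Z_{\relevant}|$ is large, using \Cref{lem:bounded_deg_vertices1}) become essential. I would argue that any $S$-triangle newly created by reinserting $v$ must use an arc incident to $v$ whose other endpoint lies in $Z_{\relevant}$, together with a \surearc; since $v$ has at least $k+1$ in-neighbors and at least $k+1$ out-neighbors inside $Z$ (as $v \in Z'$), and since most of these neighbors lie in $Z_{\irrelevant}$ rather than in $Z_{\relevant}$, one can find a witness vertex $w \in Z_{\irrelevant}$ (itself untouched by $\widetilde{F}$ on the relevant arcs, because those are \surearcs) that occupies exactly the position $v$ should take. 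The closed-walk / cycle-rerouting technique used in the proof of \Cref{lem:swapping_lemma} and \Cref{lem:minimal_equivalence} would then be invoked: if a putative $S$-triangle through $v$ survives in $T\circledast\widetilde{F}$, I would reroute through $w$ and the \surearcs to produce an $S$-cycle in $(T-v)\circledast\widetilde{F}$, contradicting that $\widetilde{F}$ is a solution for $T-v$. I expect the delicate bookkeeping to be in verifying that $v$ and its witness $w$ agree on their neighborhoods toward $S_1$, $S_2$, and the \surearc-endpoints, so that the rerouting actually yields a valid $S$-cycle; this is precisely what the irrelevance of $v$ (disjointness from $Z_{\relevant}$) buys us.
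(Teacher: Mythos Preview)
Your forward direction and the overall shape of the reverse direction (show that a minimal solution $\widetilde{F}$ for $(T-v,S,k)$ is already a solution for $(T,S,k)$ by ruling out any $S$-triangle through $v$ in $T\circledast\widetilde{F}$) match the paper. However, your diagnosis of the obstacle is wrong, and the machinery you propose (insertion into a $\spclorder$, witness vertex $w$, rerouting) is both unnecessary and aimed at a non-case. An $S$-triangle through $v$ in $T\circledast\widetilde{F}$ has vertices $v,s,z$ with $s\in S$; the arc between $v$ and $s$ is forced (a \surearc by \cref{obs:sure_arc2}, since $v\in Z'$), so only the location of $z$ matters. Your central claim that the other endpoint must lie in $Z_{\relevant}$ is simply false: in fact $z$ can \emph{never} lie in $Z'\supseteq Z_{\relevant}$, because then the $s$--$z$ arc would also be a \surearc with the same orientation as the $s$--$v$ arc, making a directed triangle impossible. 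The cases that actually need work are $z\in S$, $z\in\widetilde{Z}_-\cup R_<$, and $z\in\widetilde{Z}_+\cup R_>$, and your plan does not address them.

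The paper's proof skips the ordering/rerouting idea entirely and does a direct case analysis on $z$ over the partition $S_1\cup S_2\cup Z'_v\cup\widetilde{Z}_-\cup\widetilde{Z}_+\cup R_<\cup R_>$. Each case is killed by a \surearc argument: for $z\in Z'_v$ or $z\in S$ one of the triangle arcs contradicts \cref{obs:sure_arc2}; for $z\in\widetilde{Z}_-\cup R_<$ the required arc $(v,z)$ contradicts the fixed orientation in \cref{obs:surearc-irv}; and for $z\in\widetilde{Z}_+\cup R_>$ one exhibits $k+1$ arc-disjoint $S$-paths from $s$ to $z$ through $Z$, so $(s,z)$ is a \surearc and $(z,s)$ cannot be present. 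Note also that the bound $|Z_{\relevant}|\le 2k+2$ from \cref{lem-upper-bound-relevant}, which you describe as ``essential'' to the soundness argument, plays no role here; it is only needed to guarantee $Z_{\irrelevant}\neq\emptyset$ when $|Z|\ge 6k+7$ and to obtain the final kernel size.
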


\begin{proof}
	In the forward direction, let \((T,S,k)\) be a \yes instance of \sfastr and \(F \subseteq A(T)\) be a solution of size at most $k$. Since \(T-v\) is an induced subgraph of \(T\), \(F' = F \setminus \arc(v)\) is a solution of size at most $k$ for \((T - v, S, k)\).
	
	In the reverse direction, let \((T - v, S, k)\) be a \yes instance for \sfastr and \(F' \subseteq A(T-v)\) be a solution of size at most $k$. Let \(Z_v = Z \setminus \{v\}\) and \(Z'_v = Z' \setminus \{v\}\). Observe that since \(|Z_v| \geq 6k + 7\), it follows that \(|Z'_v| \geq 2k + 5\).
	
	We first show that \(F'\) must avoid every arc from the set \(\arc(S_1, Z'_v)~ \cup~ \arc(Z'_v, S_2)~ \cup ~\arc(S_1, S_2)\) in $(T -v,S,k)$. 
	Note that since $|Z| \geq 6k+7$, this implies that $|Z_v| \geq 6k+6$. Hence, 
	by \cref{obs:sure_arc2}, we can say that every arc in \(\arc(S_1, Z'_v) \cup \arc(Z'_v, S_2) \cup \arc(S_1, S_2)\) is a \surearc in \((T - v, S, k)\). We now prove the following claim.

	\begin{claim}
		$F'$ is a solution  of \sfastr on $(T,S,k)$.
	\end{claim}
	
	\begin{claimproof}
		Assume towards contradiction that $F'$ is not a solution in $T$. Then  $T\circledast F'$ must have  a $S$-triangle $\triangle$. Now since, $F'$ was a solution for $(T - v,S,k)$, the triangle $\triangle$  must contain the  vertex $v$ and an $S$-vertex, say $s$. Let $z$ be the other vertex of $\triangle$ other than $\{v,s\}$. Now we can have one of the following cases depending on the arcs of the triangle. 
		\begin{description}
			\item[Case A: $(s,v) \in \triangle \ \&\  s \in S_1:$ \label{case11}] In this case, the arcs of the triangle, $\triangle$ are $(s,v),(v,z)$ and $(z,s)$. Firstly $z \notin Z'_v$ because every arc in $\arc(S_1, Z')$ is a \surearc in $(T, S, k)$ (by \cref{obs:sure_arc2}) and $(z,s)$ is an arc of $\triangle$. Moreover, $z \notin S_2$ because every arc in $\arc(S_1, S_2)$ is a \surearc in $(T, S, k)$ (by \cref{obs:sure_arc2}) and $(z,s)$ is an arc of $\triangle$. Furthermore, $z \notin S_1$ because every arc in $\arc(S_1, Z')$ is a \surearc in $(T, S, k)$(by \cref{obs:sure_arc2}) and $(v,z)$ is an arc of $\triangle$.
			Further $z \notin (\widetilde{Z}_- \cup R_<)$. Assume towards contradiction that $z \in (\widetilde{Z}_- \cup R_<)$. Since $v \in Z_\irrelevant$, there is an arc from every vertex of $(\widetilde{Z}_- \cup R_<)$ to every vertex of $Z_\irrelevant$ in $(T,S,k)$ (by \cref{obs:surearc-irv}). But we have that $(v,z)$ is an arc of $\triangle$ with $v \in Z_{\irrelevant}$ and $z \in (\widetilde{Z}_- \cup R_<)$, which is a contradiction. Lastly, $z \notin (\widetilde{Z}_+ \cup R_>)$. Assume towards contradiction that $z \in (\widetilde{Z}_+ \cup R_>)$. Note that every vertex of $(\widetilde{Z}_+ \cup R_>)$ has at least $(k+1)$ in-neighbors in $Z_\irrelevant$, let that set of vertices be $X \subseteq Z_\irrelevant$. Now it follows that there exist $(k+1)$ arc-disjoint two length $S$-paths from $s$ to $z$, where each path passes through a vertex $x \in X$. Hence, the arc $(s,z)$ must be a \surearc in $(T,S,k)$ (by \cref{obs:sure_arc2}). But we have that $(z,s)$ is an arc of  $\triangle$ with $z \in (\widetilde{Z}_+ \cup R_>)$ and $s \in S_1$ which is a contradiction.



			
			\item[Case B: $(v,s) \in \triangle \ \&\  s \in S_2:$ \label{case12}]  The argument for this case is analogous to Case \hyperref[case11]{A}.
			
			\item[Case C: $(s,v) \in \triangle \ \&\  s \in S_2:$ \label{case12}]  This case is not possible since the arc $(v,s)$ is a \surearc in $(T,S,k)$.
			\item[Case D: $(v,s) \in \triangle \ \&\  s \in S_1:$ \label{case12}]  This case is not possible since the arc $(s,v)$ is a \surearc in $(T,S,k)$.
		\end{description}
	\end{claimproof}
	This concludes the proof. 
\end{proof}

So we have the  following Lemma.

\begin{lemma}\label{lem:ZZbound}
	Let $(T,S,k)$ be an instance where none of the Reduction Rules \ref{redrule:sanity_check}-\ref{redrule:bigtype} are applicable. Then size of each equivalence class $\eqv_T^S[X]$ is at most $(6k+6)$.
\end{lemma}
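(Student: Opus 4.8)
The plan is to derive \Cref{lem:ZZbound} directly from the chain of reduction rules and the preceding lemmas, by showing that once \emph{none} of Reduction Rules \ref{redrule:sanity_check}--\ref{redrule:bigtype} apply, every equivalence class must have size at most $6k+6$. First I would recall that \Cref{redrule:bigtype} (Irrelevant Vertex Rule 2) is applicable precisely when there exists an equivalence class $Z$ with $|Z| \geq 6k+7$, provided $Z_{\irrelevant} \neq \emptyset$, and that the instance is already in the form produced by \Cref{lem-upper-bound-relevant}. The contrapositive is the natural route: suppose, for contradiction, that after exhaustively applying the reduction rules there is still an equivalence class $Z$ with $|Z| \geq 6k+7$. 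I then want to show that \Cref{redrule:bigtype} would in fact be applicable, contradicting the assumption that the instance is reduced.

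The key step is verifying that the set $Z_{\irrelevant} = Z' \setminus Z_{\relevant}$ is nonempty whenever $|Z| \geq 6k+7$. For this I would combine the size bounds established earlier: by \Cref{lem:bounded_deg_vertices1}, $|\widetilde{Z}| = |Z \setminus Z'| \leq 4k+2$, so $|Z'| \geq |Z| - (4k+2) \geq (6k+7) - (4k+2) = 2k+5$. On the other hand, \Cref{lem-upper-bound-relevant} guarantees that after the arc-swapping preprocessing the instance satisfies $|Z_{\relevant}| \leq 2k+2$. Hence $|Z_{\irrelevant}| = |Z'| - |Z_{\relevant}| \geq (2k+5) - (2k+2) = 3 > 0$, so $Z_{\irrelevant}$ contains at least one vertex available for deletion. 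This means every precondition of \Cref{redrule:bigtype} is met — the instance is reduced with respect to Rules \ref{redrule:sanity_check}--\ref{redrule:wrong_arcs}, it is in the post-\Cref{lem-upper-bound-relevant} form, $Z_{\irrelevant}\neq\emptyset$, and $|Z| \geq 6k+7$ — so the rule would still fire, which is the desired contradiction.

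I would then conclude that no equivalence class can have size $6k+7$ or larger, i.e.\ every non-trivial equivalence class $\eqv_T^S[X]$ satisfies $|Z| \leq 6k+6$, which is exactly the statement of the lemma. One careful point to get right is the bookkeeping of \emph{which} reduction rules must be inapplicable for the size bounds to hold: \Cref{lem:bounded_deg_vertices1} and the definition of $Z'$ depend only on the structure of $Z$, but the bound $|Z_{\relevant}| \leq 2k+2$ is not the output of a stand-alone reduction rule — it is produced by the polynomial-time procedure of \Cref{lem-upper-bound-relevant}, which is also assumed as a precondition in the statement of \Cref{redrule:bigtype}. I expect the main (admittedly mild) obstacle to be ensuring this interplay is stated cleanly: the lemma is essentially a termination/soundness consequence of the rules rather than a fresh combinatorial argument, so the proof is short, but it must explicitly invoke both the degree-based counting (\Cref{lem:bounded_deg_vertices1}) and the relevant-vertex bound (\Cref{lem-upper-bound-relevant}) to certify that $Z_{\irrelevant}$ is nonempty, thereby showing the rule applies and closing the contradiction.
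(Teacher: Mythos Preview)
Your proposal is correct and matches the paper's approach: the paper states \Cref{lem:ZZbound} as an immediate consequence of the reduction rules without an explicit proof, and your argument supplies precisely the omitted details, namely that \Cref{lem:bounded_deg_vertices1} and \Cref{lem-upper-bound-relevant} together force $|Z_{\irrelevant}| \geq (6k+7)-(4k+2)-(2k+2)=3>0$ whenever $|Z|\geq 6k+7$, so \Cref{redrule:bigtype} would still apply.
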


\section{Wrapping Up: Final Bound on the Kernel}

Let $(T,S,k)$ be an instance where none of the previous reduction rules are applicable. Now we have our final reduction rule.

\begin{reduction rule}\label{redrule:vertex_bound}
Let $(T,S,k)$ be an instance where the Reduction Rules \ref{redrule:sanity_check}-\ref{redrule:bigtype} are not applicable. If $|V(T)| > (30k^2 + 40k+ 7)$, then return a trivial \no instance.
\end{reduction rule}

It is easy to verify that \cref{redrule:vertex_bound} can be applied in polynomial time.  In the following, we prove the correctness of it.

\begin{lemma}
\cref{redrule:vertex_bound} is sound.
\end{lemma}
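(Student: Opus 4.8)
The plan is to prove the contrapositive of what the rule asserts: I would show that whenever $(T,S,k)$ is a \yes instance that is reduced with respect to all of Reduction Rules \ref{redrule:sanity_check}--\ref{redrule:bigtype}, its vertex count cannot exceed $30k^2 + 40k + 7$. The soundness of the rule then follows immediately, since any instance violating this bound cannot be a \yes instance and may safely be declared a trivial \no instance.

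First I would collect the three structural bounds already established earlier in the paper. By \Cref{lem:ter2}, the terminal set satisfies $|S| \leq 4k$. By \Cref{lem:ter3}, the number of non-trivial equivalence classes is at most $5k+1$. Finally, by \Cref{lem:ZZbound}, every equivalence class $\eqv_T^S[X]$ has size at most $6k+6$.

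Next I would account for all non-$S$ vertices through the equivalence-class partition. The key observation is that every vertex $v \in V(T) \setminus S$ lies in exactly one class, namely $\eqv_T^S[\type_T^S(v)]$, which is non-trivial by definition since it contains $v$. Hence the non-$S$ vertices are distributed among the at most $5k+1$ non-trivial classes, each of size at most $6k+6$, so that $|V(T) \setminus S| \leq (5k+1)(6k+6) = 30k^2 + 36k + 6$. Combining this with the terminal bound gives
$$|V(T)| = |S| + |V(T) \setminus S| \leq 4k + 30k^2 + 36k + 6 = 30k^2 + 40k + 6 \leq 30k^2 + 40k + 7,$$
which is exactly what is needed.

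I do not expect a substantive obstacle here, as the argument is a direct summation of previously proved bounds; the only point requiring care is the bookkeeping in the final arithmetic and verifying that the equivalence-class partition truly captures every non-$S$ vertex, so that none is omitted from the count. Polynomial-time applicability is immediate, since the rule only needs to compare $|V(T)|$ against the stated threshold.
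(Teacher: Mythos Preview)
Your proposal is correct and follows essentially the same approach as the paper: both arguments combine \Cref{lem:ter2}, \Cref{lem:ter3}, and \Cref{lem:ZZbound} to bound $|V(T)|$ by $4k + (5k+1)(6k+6) = 30k^2 + 40k + 6$, from which soundness is immediate. Your write-up is in fact slightly more careful in explicitly noting that every non-$S$ vertex lies in some non-trivial equivalence class.
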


\begin{proof}
Suppose $(T,S,k)$ is a \yes instance. Since none of the previous reduction rules are applicable, $T$ has the following properties: 
\begin{enumerate}
	\item the number of $S$ vertices is upper bound by $4k$ (\Cref{lem:ter2}),
	\item the number of equivalence classes is upper bound by $(5k+1)$ (\Cref{lem:ter3}),
	\item The size of each equivalence class is also upper bound by $(6k+6)$. (\Cref{lem:ZZbound})
\end{enumerate}
So, overall number of vertices in $T$ is upper bounded by $30 k^2 + 40k+6$. 
\end{proof}

Let $(T,S,k)$ be an instance where none of the  Reduction Rules \ref{redrule:sanity_check} - 
\ref{redrule:vertex_bound} are applicable, then number of vertices of $T$ is upper bounded by $\mathcal{O}(k^2)$. Hence, we finally obtain the following theorem. 

\faskernel*


\section{FPT algorithm for \sfast} \label{sec:sfasfpt}

In this section, we prove the following result.

\sfastheorem*

Our algorithm follows the approach of Alon, Lokshtanov, and Saurabh \cite{DBLP:conf/icalp/AlonLS09} who achieve a sub-exponential time {\sf FPT} algorithm for {\sc FAST}. However, due to the inherent generality of our problem, we need to deviate  from it while implementing the outline.  Let we are given a tournament $T$, a vertex set $S \subseteq V(T)$, a positive integer $k$. We use $(T,S,k)$ to denote an instance of our problem. Our algorithm (refer as \texttt{ALGO$\_$SFAST}) for \sfast consists of three steps. 

\begin{figure}[ht!]
\noindent\fbox{
	\begin{minipage}{0.96\textwidth}
		
		\begin{description}
			\item[Step 1:] Do kernelization and obtain a tournament $T'$ of  size $\OO(k^2)$. 
			
			\item[Step 2:] Color the vertices of $ T' $ uniformly at random   with  colors from $\{1,2,\ldots,\sqrt{8k}\}$.

			\item[Step 3:] Find  a $S$-fas $ O $ of size at most $ k $ such that no arc of $ O$ is monochromatic.

		\end{description}
		
	\end{minipage}
	
}
\caption{Algorithm for \sfast.}
\label{fig:algo}
\end{figure}

Before giving the formal description of the proof, below we provide a concise overview of our algorithm for \sfast. 

\subparagraph{Overview of proof for \cref{theo:fast}.}
We begin with a polynomial kernel. Let $\Pi$ be an $S$-topological ordering corresponding to $G \oplus F$ where $F$ is a minimal solution we desire to find. Additionally, let $\Pi$ represent an ordering in which strongly connected components are contracted, ensuring a unique topological ordering. The coloring of the graph is obtained using the coloring lemma of Alon et al.~\cite{DBLP:conf/icalp/AlonLS09}. This guarantees that solution arcs appear only between different color classes. Consequently, when considering the subgraph induced by each color class, every strongly connected component remains within a strongly connected component of $G \oplus F$. This allows us to contract each strongly connected component and treat it as a single vertex. From this point onward, we assume that all strongly connected components within each color class are contracted.  Under this assumption, we obtain an ordering of $G_i$ such that when we restrict $\Pi$ to this subgraph, it preserves the same ordering. By combining these known orderings, we construct $\Pi$. To achieve this, we follow the reasoning below.  Let $w$ be the middle vertex of $\Pi$ (in the contracted representation). Our goal is to determine which contracted vertices lie to its left and which lie to its right in $\Pi$. If this can be achieved using a branching process with subexponential dependence on $k$, it leads to a subexponential-time algorithm.  To accomplish this, we first guess the last $S$ vertices from each color class that appear before $w$. However, this does not immediately create two independent subproblems because we do not yet know which arcs originate from the right side while having their tails in the left side (i.e., backward arcs that contribute to the solution). To address this, we also guess the last vertex $s$ before $w$ and determine which vertices from each color class appear after both $s$ and $w$. Specifically, we select one vertex from each color class, identifying the last vertex in each class before $s$.  

Since the number of possible choices for each of these selections is bounded by $k^{\mathcal{O}(1)}$, the resulting recurrence leads to a subexponential-time algorithm. We further optimize this approach through a carefully designed dynamic programming algorithm with the help of a set of potential integer vectors.

\subsection{Color the Vertices in a Good Way}
We want to color the vertices in such a way that for an \yes-instance $ (T, S,k) $ there is a $ S $-feedback arc set $ O $ of size at most $ k $ such that  no arc of $ O $ is monochromatic. We call such a coloring as {\em good coloring}. The following known result tells that if we randomly color the vertices of a  $k$ edge graph $G$ with $ \sqrt{8k} $ colors, then the probability that $ G $ has been properly colored is at least $ 2^{-c\sqrt{k}} $, where $c$ is a  fixed constant. 
.

\begin{proposition}[Lemma 2 \cite{DBLP:conf/icalp/AlonLS09}]\label{propo:chromatic}
If a graph on $m$ edges is colored randomly with  $ \sqrt{8m} $ colors then the probability that $ G $ is properly colored is at least $ (2e)^{-\sqrt{m/8}} $.
\end{proposition}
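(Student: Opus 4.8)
The plan is to bound the probability by revealing the colours one vertex at a time (a chain-rule argument), after first ordering the vertices along a \emph{degeneracy ordering} so that each vertex has only few previously-coloured neighbours. Write $c = \lceil\sqrt{8m}\rceil$ for the number of colours; using more colours only helps, so I may assume $c \ge \sqrt{8m}$. First I would recall the standard degeneracy bound: since a graph in which every vertex has degree at least $d$ has at least $d(d+1)/2$ edges, every subgraph of $G$ contains a vertex of degree at most $\sqrt{2m}$, so $G$ is $\lfloor\sqrt{2m}\rfloor$-degenerate. Repeatedly deleting a minimum-degree vertex and reading the deletions backwards produces an ordering $v_1,\dots,v_n$ of $V(G)$ in which each $v_i$ has at most $b_i \le \sqrt{2m}$ neighbours among $v_1,\dots,v_{i-1}$, and crucially $\sum_i b_i = m$ because every edge is counted exactly once, at its later endpoint.

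Next I would reveal the colours $\chi(v_1),\chi(v_2),\dots$ sequentially and let $E_i$ be the event that $v_i$ receives a colour different from all of its already-coloured (``backward'') neighbours. Since each edge is tested exactly once, the event that $\chi$ is proper is precisely $\bigcap_i E_i$. Conditioned on any fixed colouring of $v_1,\dots,v_{i-1}$, the colour of $v_i$ is uniform and independent, and the set of forbidden colours has size at most $b_i$, so $\Pr[E_i \mid \text{past}] \ge (c-b_i)/c = 1 - b_i/c$ pointwise. By the chain rule this yields
$$\Pr[\chi \text{ proper}] \;=\; \prod_{i=1}^{n}\Pr\!\left[E_i \mid E_1,\dots,E_{i-1}\right] \;\ge\; \prod_{i=1}^{n}\left(1 - \frac{b_i}{c}\right).$$
Because $c \ge \sqrt{8m} = 2\sqrt{2m}$ and $b_i \le \sqrt{2m}$, every factor satisfies $b_i/c \le \tfrac12$, so no factor is negative and the product is meaningful.

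Finally I would convert this product into the claimed bound via the elementary one-variable inequality $1 - x \ge (2e)^{-x}$, valid for $x \in [0,\tfrac12]$ (both sides equal $1$ at $x=0$, and since the difference is unimodal on the interval with a nonnegative value at the right endpoint, a short derivative check confirms it stays nonnegative). Applying it term by term with $x = b_i/c$ gives
$$\prod_{i=1}^{n}\left(1 - \frac{b_i}{c}\right) \;\ge\; \prod_{i=1}^{n}(2e)^{-b_i/c} \;=\; (2e)^{-\left(\sum_i b_i\right)/c} \;=\; (2e)^{-m/c} \;\ge\; (2e)^{-\sqrt{m/8}},$$
using $\sum_i b_i = m$ and $m/c \le m/\sqrt{8m} = \sqrt{m/8}$, which is exactly the stated probability.

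I expect the main obstacle to be the control of the back-degrees: the naive union-bound factor $1 - b_i/c$ is useless (indeed negative) whenever a vertex has many earlier neighbours, so the entire argument hinges on first fixing an ordering in which every back-degree is at most $c/2$. The degeneracy estimate $\sqrt{2m}$ is precisely what forces $b_i/c \le \tfrac12$, which is in turn exactly the range where the calibrated inequality $1 - x \ge (2e)^{-x}$ holds. In other words, the constants $8$ (number of colours) and $2e$ (base of the bound) in the statement are not arbitrary but are tuned so that these two ingredients interlock.
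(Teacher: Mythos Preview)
Your argument is correct. The paper does not supply its own proof of this proposition; it merely quotes the lemma from Alon, Lokshtanov and Saurabh~\cite{DBLP:conf/icalp/AlonLS09}, and your write-up is precisely the degeneracy-ordering argument used there: order the vertices so that each has at most $\sqrt{2m}$ earlier neighbours, apply the chain rule to get $\prod_i(1-b_i/c)$, and finish with the pointwise inequality $1-x\ge(2e)^{-x}$ on $[0,\tfrac12]$ together with $\sum_i b_i=m$.
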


This proposition directly leads to the following observation.

\begin{observation}\label{obs:goodcoloring}
Let $(T,S,k)$ be an instance of \sfast and $O$ be a solution. If we color the vertices in $T$ uniformly at random with $ \sqrt{8k} $ colors	 then 
$$ \mathbb{P}r[\text{no arc of $O$ is monochromatic}] \geq  (2e)^{-\sqrt{k/8}}. $$

\end{observation}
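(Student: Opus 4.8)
The plan is to reduce the statement directly to \Cref{propo:chromatic}. First I would reformulate the monochromatic condition as a proper-coloring condition. Let $G_O$ be the simple undirected graph on vertex set $V(T)$ whose edge set consists of the underlying pairs $\{u,v\}$ of the arcs $(u,v)\in O$; since $T$ is a tournament, $O$ contains at most one arc between any pair, so $G_O$ is simple and $m \coloneq |E(G_O)| = |O| \le k$. A random coloring of $V(T)$ makes an arc $(u,v)\in O$ monochromatic exactly when the endpoints $u,v$ receive the same color, i.e.\ exactly when the edge $\{u,v\}$ is improperly colored in $G_O$. Hence the event ``no arc of $O$ is monochromatic'' coincides with the event ``$G_O$ is properly colored'', and it suffices to lower-bound the probability that $G_O$ is properly colored when its vertices are colored uniformly at random from $\{1,\dots,\sqrt{8k}\}$.

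Next I would match this to the exact hypotheses of \Cref{propo:chromatic}, which speaks of a graph with a prescribed number $m'$ of edges colored with $\sqrt{8m'}$ colors. The only mismatch is that $G_O$ may have $m \le k$ edges while we use $\sqrt{8k}\ge\sqrt{8m}$ colors. I would close this gap by augmentation: add $k-m$ arbitrary extra pairs of $V(T)$ as edges to obtain a supergraph $G' \supseteq G_O$ with exactly $k$ edges (possible since after kernelization $\binom{|V(T)|}{2}=\Omega(k^2)\ge k$; if $T$ had fewer than $k$ pairs in total the instance would be trivial, as reversing all arcs is an $S$-fas of size below $k$). Every proper coloring of $G'$ is also a proper coloring of its subgraph $G_O$, so by event inclusion
\[
\Pr[\text{$G_O$ properly colored}] \;\ge\; \Pr[\text{$G'$ properly colored}].
\]
Applying \Cref{propo:chromatic} to $G'$, which has exactly $k$ edges and is colored with $\sqrt{8k}$ colors, yields $\Pr[\text{$G'$ properly colored}] \ge (2e)^{-\sqrt{k/8}}$, and combining this with the displayed inequality gives the claim.

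The argument is essentially bookkeeping, so the only point needing care --- and the step I would treat as the ``main obstacle'' --- is the passage from $G_O$ (with possibly fewer than $k$ edges and more than $\sqrt{8m}$ colors) to the exact form required by \Cref{propo:chromatic}. Besides the edge-augmentation route above, an equivalent alternative is to apply the proposition to $G_O$ directly with $\sqrt{8m}$ colors, obtaining the bound $(2e)^{-\sqrt{m/8}}$, then observe that supplying additional colors can only increase the proper-coloring probability, and finally use $m\le k$ together with monotonicity of $x\mapsto(2e)^{-\sqrt{x/8}}$ to conclude $(2e)^{-\sqrt{m/8}}\ge(2e)^{-\sqrt{k/8}}$. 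I prefer the augmentation route because its monotonicity input (a subgraph is properly colored whenever the supergraph is) is an immediate event inclusion, whereas monotonicity in the number of colors, though true, is slightly less transparent.
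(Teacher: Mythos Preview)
Your argument is correct and aligns with the paper's approach: the paper simply asserts that the observation follows directly from \Cref{propo:chromatic} without spelling out any details, and your proof is a careful elaboration of exactly that reduction. The edge-augmentation step you single out as the main obstacle is more scrupulous than what the paper provides (the paper treats the passage from $|O|\le k$ to the exact hypothesis of the proposition as self-evident), but it is sound and handles the boundary case cleanly.
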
 

We say that an arc set $F \subseteq A(T)$ is {\em colorful} if no arc in $F$ is monochromatic. From now onwards, we assume that our instance $(T,S,k)$  of \sfast is given with a good coloring with $q=\sqrt{8k}$ colors. That is, we are given  a $ q $-colored tournament $T$, a vertex set $S \subseteq V(T)$, an integer $k$ as input and our goal is to check whether there exists a  colorful $S$-feedback arc set of size at most $k $, or conclude that no such $S$-feedback arc set exists. We call this problem as \csfast. We use $(T,S,k,q)$ to denote an instance of \csfast. 

\defparprob{{\sc Colorful Subset Feedback Arc Set  }(\csfast)}{A tournament $T=(V, A)$, a vertex subset $S \subseteq V$, a color function $c:V \to  [q]$,   \hspace*{12mm} and  an    integer $ k$.}{$k, q$}{$~~$Find  an arc subset $ F \subseteq A$ with $|F| \leq k$ such that   $ T\circledast F $ has no $S$-cycle and no \hspace*{10mm} $~$ arc of $F$ is  monochromatic.}

\subsection{Solving a Colored Instance}

Consider an ordering $ \sigma $ of $V(T)$. The  $S$-fas corresponding to $ \sigma $ is the set of all $ S $-backward arcs. We say  $ \sigma $ is {\em colorful ordering} if the    $S$-fas corresponding to $ \sigma $ is colorful.   A colorful ordering $\sigma$ is called  {\em valid colorful ordering} if the $S$-fas with at most $k$ many  $ S $-backward arcs. 

We now give an algorithm for  \csfast on the instance $(T,S,k,q)$. Let  $V(T) = V_1 \uplus \ldots \uplus V_q$, where for each $i \in [q]$ the set $V_i$ denotes the set of vertices with color $i$. We define $ O $  to be a  minimal solution (hypothetical) of \csfast  that we are looking for.  Clearly, if $(T,S,k)$ is an  \yes instance then $ |O| \leq k $. Let $ \sigma $ be the $S$-topological ordering of the vertices in $ T \circledast O $, respectively. That means  there is no $S$-backward arc  the ordering $\sigma$.  We use the term \textsf{\scc}  to denote strongly connected component. 

\begin{observation}\label{obs:color1}
If there exists a colorful $ S $-fas of $T$ then  for each $i \in [q]$, the subgraph $T [V_i ]$ is $S$-acyclic.
\end{observation}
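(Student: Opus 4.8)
The plan is to exploit the defining property of a \emph{colorful} solution: an arc is monochromatic exactly when both its endpoints lie in the same color class $V_i$, so a colorful $S$-fas never reverses any arc living entirely inside a single color class. Fix a colorful $S$-fas $F$ of $T$; by definition $T \circledast F$ has no $S$-cycle and no arc of $F$ is monochromatic. First I would observe that for each $i \in [q]$, every arc of the induced subtournament $T[V_i]$ has both endpoints of color $i$ and is therefore monochromatic. Since $F$ is colorful, this forces $F \cap A(T[V_i]) = \emptyset$.

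Next I would use this to argue that reversing $F$ leaves $T[V_i]$ untouched. Because none of the reversed arcs have both endpoints in $V_i$, the induced subgraph on $V_i$ is identical before and after the reversal, i.e. $(T \circledast F)[V_i] = T[V_i]$. The final step is to transfer $S$-acyclicity from the whole tournament down to this color class: any $S$-cycle in $(T \circledast F)[V_i]$ would, as an induced subgraph, also be an $S$-cycle in $T \circledast F$, contradicting that $T \circledast F$ is $S$-acyclic. Hence $(T \circledast F)[V_i]$, and therefore $T[V_i]$, contains no $S$-cycle, which is precisely the claim.

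I do not expect any genuine obstacle here; the statement is an immediate structural consequence of colorfulness. The only point worth stating explicitly is that $S$-acyclicity is inherited by induced subgraphs, which is clear since a cycle through a vertex of $S$ that uses only vertices of $V_i$ is in particular such a cycle in the ambient graph. The whole argument is short and does not require \cref{lem:Scycle} or the ordering machinery, though one could equivalently phrase it via the absence of $S$-triangles.
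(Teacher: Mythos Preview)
Your argument is correct and is precisely the intended justification; the paper states this observation without proof, treating it as immediate from the definition of a colorful solution. The only cosmetic point is that the paper's formal definition of an $S$-fas is phrased via deletion ($T-F$ is $S$-acyclic) rather than reversal, but in the surrounding \textsc{Colored Subset-FAST} context the solution is indeed taken in the reversal sense, and in any case the argument is identical either way.
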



Let $S_i=V_i \cap S$. Due to \Cref{obs:color1}, the vertices $V_i$ has a unique $\spclorderi$ -- we refer this ordering as $\sigma_i$. 

\begin{observation}\label{obs:color2}
For every color class $V_i$ of $T$, let $s_i^1, s_i^2, \ldots, s_i^{n_i}$ be the order in
which the vertices of $S_i$ appear according to $\sigma_i$. If $F$ is a colorful solution and $\sigma$ is the unique $\spclorder$ of  $T \circledast F $ then
\begin{enumerate}[(i)]
	\item For each $j \in [n_i -1]$, $s_i^j <_{\sigma} s_i^{j+1}$
	\item For each non $S$-vertex $v$ in $V_i$, $s_i^j <_{\sigma_i} v <_{\sigma_i}  s_i^{j+1} \implies s_i^j <_{\sigma} v <_{\sigma}  s_i^{j+1}$
\end{enumerate}
\end{observation}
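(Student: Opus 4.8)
The plan is to exploit the single structural consequence of \emph{colorfulness}: a colorful solution $F$ reverses no arc whose two endpoints share a colour, so the restriction of $T\circledast F$ to any colour class coincides with that of $T$, and both claimed order relations can then be read off directly from the defining properties of an $\spclorder$. Concretely, since $F$ is colorful every arc of $F$ joins two distinct colour classes, so no arc with both endpoints in $V_i$ lies in $F$; reversing $F$ therefore leaves the tournament induced on $V_i$ untouched, giving $(T\circledast F)[V_i]=T[V_i]$. By \Cref{obs:color1} this induced tournament is $S$-acyclic and $\sigma_i$ is its unique $\spclorderi$, while $\sigma$ is an $\spclorder$ of the ($S$-acyclic) tournament $T\circledast F$. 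The whole proof then reduces to transporting the few arcs that fix positions in $\sigma_i$ over to $\sigma$, using that all these arcs live inside $V_i$ and hence are common to $T$ and $T\circledast F$.

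For part (i), take consecutive terminals $s_i^j<_{\sigma_i}s_i^{j+1}$. As $T$ is a tournament there is exactly one arc between them, and since $\sigma_i$ orients every arc forward (the third defining property in \Cref{def:ordered_part}) this arc must be $(s_i^j,s_i^{j+1})$; being internal to $V_i$, it is also an arc of $T\circledast F$. In $\sigma$ both $s_i^j$ and $s_i^{j+1}$ are singleton parts (the first defining property), hence lie in distinct parts, so the forward arc $(s_i^j,s_i^{j+1})$ forces $s_i^j<_\sigma s_i^{j+1}$. Part (ii) is the same argument applied to the two arcs flanking an intermediate non-$S$ vertex: if $s_i^j<_{\sigma_i}v<_{\sigma_i}s_i^{j+1}$ with $v\notin S$, then in $\sigma_i$ the singleton $\{s_i^j\}$ precedes the part of $v$, which precedes $\{s_i^{j+1}\}$, so by the forward property the tournament arcs are $(s_i^j,v)$ and $(v,s_i^{j+1})$, both internal to $V_i$ and hence present in $T\circledast F$. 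Since $s_i^j,s_i^{j+1}$ are singleton parts of $\sigma$ and $v$ is not a terminal, $v$ occupies a part distinct from each of them, and the two forward arcs yield $s_i^j<_\sigma v$ and $v<_\sigma s_i^{j+1}$.

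The one point needing care is that $\sigma$ is genuinely an ordered partition with the non-terminal strongly connected components contracted, not a literal total order, so each inequality $<_\sigma$ must be interpreted as a comparison of the parts containing the two vertices. This is harmless exactly because in every comparison one vertex is a terminal, and terminals are singleton parts: a non-$S$ vertex $v$ may share a part of $\sigma$ with other non-$S$ vertices, but it can never share a part with a terminal, so the two flanking comparisons in (ii) are always well defined. I would therefore slow down only to state precisely that a forward arc between vertices in two distinct parts determines their relative order in $\sigma$ (a direct consequence of the third property together with the tournament condition), which is the fact doing all the work above.
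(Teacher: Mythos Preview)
Your argument is correct. The paper itself states \Cref{obs:color2} as an observation without any accompanying proof, so there is no proof in the paper to compare against; your write-up supplies exactly the natural justification the authors left implicit, namely that colorfulness of $F$ forces $(T\circledast F)[V_i]=T[V_i]$, and then property~(iii) of \Cref{def:ordered_part} together with the fact that $S$-vertices occupy singleton parts transports the relevant arcs into order constraints in $\sigma$.
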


\subparagraph{Modification of the instance.}

We modify the instance $ (T,S,k,q) $ of \csfast to $ (T',S,k,q) $ of the same problem as follows.
\begin{enumerate}
\item For each $ i \in [q] $, we partition the graph $ T[V_i] $ into a set of maximal \scc's.

\item We construct a new tournament  $ T' $ as follows. Let $\mathcal{C}_i$ denote the set of all maximal \scc's in $T[V_i]$ and  $\mathcal{C}\coloneqq \bigcup_{i \in [q]} \mathcal{C}_i$.
\begin{itemize}
	\item  Each \scc $ C $ in $ \mathcal{C} $ corresponds to a vertex $ v_C $ in $ T'$. 
	\item For each pair of vertices $u \in V(C_1)$ and $v \in V(C_2)$ for some $C_1, C_2 \in \mathcal{C}$, $C_1 \neq C_2$  with $(u,v) \in A(T)$, we add an arc $ (v_{C_1}, v_{C_2}) $ to $A(T')$. 
	
	\item Here we allow multiple arcs but no self loop.
	
\end{itemize}

\item For each $ i \in [q] $, as all the \scc's of $\mathcal{C}_i$ forms a DAG so  there exists a unique topological ordering of the vertices corresponding to $\mathcal{C}_i$. We denote this ordering as $ \sigma_i $.

\item For each $C \in \mathcal{C}$, we color the vertex $v_C$ with the same color as the color class, the \scc $C$ belongs to. In other words, color of $v_C$ is $i$ if $C \in \mathcal{C}_i$ in the instance \csfast.


\end{enumerate}

Observe that for each vertex $s \in S$, the graph $T[s]$ forms a strongly component in $\mathcal{C}$. Notice that $|V(T')| \leq |V(T)|$ and  $|A(T')| \leq |A(T)|$. Also, every non-monochromatic arc corresponds to some arc in $T'$. Now  in $T'$, our goal  is to find $ S $-fas $Z' \subseteq A(T')$ of size at most $k $ such that no arc of $Z' $ is monochromatic. Below in \cref{lem:equi}, we  demonstrates that the instances $(T,S,k,q)$ and  $(T',S',k',q)$ are equivalent to the problem \csfast.


\begin{lemma}\label{lem:equi}
$(T,S,k,q)$ is a {\sc Yes} instance for \csfast if and only if $(T',S,k,q)$ is a {\sc Yes} instance for \csfast.
\end{lemma}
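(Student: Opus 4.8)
The plan is to prove the equivalence between the original colored instance $(T,S,k,q)$ and the contracted instance $(T',S,k,q)$ by exhibiting a bijection-like correspondence between colorful solutions on both sides, exploiting the fact that contracting a strongly connected component of a single color class does not destroy or create any relevant $S$-cycle once we fix attention to colorful solutions. The key structural input is \Cref{obs:color1}: any colorful $S$-fas keeps each $T[V_i]$ $S$-acyclic, so each maximal \scc of $T[V_i]$ survives intact inside some \scc of $T\circledast O$, which justifies treating it as an atomic object. I would also repeatedly invoke the characterization via $S$-triangles (\Cref{lem:Scycle}) and the $S$-topological ordering machinery (\Cref{obs:ordering}, \Cref{def:ordered_part}) to translate ``$S$-acyclic'' into the combinatorial statement about orderings.

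\textbf{Forward direction.} First I would assume $(T,S,k,q)$ is a \yes instance and let $O$ be a minimal colorful $S$-fas of size at most $k$, with $\sigma$ the unique $\spclorder$ of $T\circledast O$. Since $O$ is colorful, no arc inside any single \scc $C \in \mathcal{C}_i$ is reversed (such an arc would be monochromatic), so every \scc of every color class sits entirely within one part of $\sigma$; hence each $C$ maps consistently to a single position and the projection of $O$ onto inter-\scc arcs yields an arc set $Z'$ of $T'$. I would argue $|Z'| \le |O| \le k$ (contraction can only merge or drop arcs, never increase the count of reversed inter-component arcs) and that $Z'$ is colorful because every arc of $T'$ inherits the color of endpoints lying in distinct same-color \scc's or in different color classes. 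The main content is verifying that $T' \circledast Z'$ has no $S$-cycle: any hypothetical $S$-cycle in $T'\circledast Z'$ lifts, through the \scc-contraction, to a closed $S$-walk in $T\circledast O$ (replacing each contracted vertex $v_C$ by a path inside the strongly connected $C$), from which \Cref{lem:Scycle} extracts an $S$-triangle, contradicting that $O$ is a solution.

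\textbf{Reverse direction.} Conversely, given a colorful $S$-fas $Z'$ of $T'$ with $|Z'|\le k$, I would pull it back to an arc set $F$ of $T$ by selecting, for each arc $(v_{C_1}, v_{C_2}) \in Z'$, a concrete representative arc of $T$ between $C_1$ and $C_2$ (this is where the multigraph bookkeeping matters: each $T'$-arc records an actual $T$-arc). Then $|F| \le |Z'| \le k$ and $F$ remains colorful. To show $T\circledast F$ is $S$-acyclic I would again use \Cref{lem:Scycle}: suppose there is an $S$-triangle in $T\circledast F$; since each $T[V_i]$ is $S$-acyclic (its \scc's were contracted and no intra-\scc arc is touched), the triangle must use at least one inter-\scc arc, and contracting its vertices down to $T'$ produces a short closed $S$-walk in $T'\circledast Z'$, again yielding an $S$-cycle there and a contradiction.

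\textbf{The hard part} will be handling the lift-and-project step cleanly, namely ensuring that a cycle passing \emph{through} a contracted \scc-vertex genuinely corresponds to an $S$-cycle on the other side and that the ``$S$-ness'' (containment of a terminal) is preserved in both directions. Because every $s\in S$ forms its own singleton \scc in $\mathcal{C}$ (as noted right before the lemma), terminals are never absorbed into larger contracted vertices, so an $S$-cycle in $T'\circledast Z'$ contains a terminal \scc-vertex $v_{\{s\}}$ which lifts directly to the terminal $s$, and vice versa; I would make this observation explicit to close the argument. The remaining bookkeeping — that contraction preserves tournament-like structure modulo parallel arcs and that colorfulness is inherited in both directions — is routine and I would state it briefly rather than belabor it.
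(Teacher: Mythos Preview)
Your proposal is correct and follows essentially the same lift-and-project strategy as the paper: map a colorful solution of $T$ to one of $T'$ via the bijection between inter-\scc arcs and $T'$-arcs, and show an $S$-cycle on one side yields an $S$-closed walk (hence an $S$-cycle) on the other. The paper's write-up is terser---it phrases the argument with deletion $T-F$ rather than reversal $T\circledast F$ and dismisses the backward direction as ``analogous''---whereas you spell out both directions and invoke the $\spclorder$ machinery explicitly; but the underlying argument is the same, and your identification of the singleton-\scc property of $S$-vertices as the key to preserving ``$S$-ness'' under contraction is exactly the crux.
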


\begin{proof}
In the forward direction, assume that  $F$ is  a solution to $(T,S,k,q)$. We can assume that $F$ is minimal. By the property of our problem, no arc of $F$ is monochromatic, so each arc in $F$ corresponds to some arc in $T'$.  Let $F'$ be the set of arcs in $T'$ corresponding to $F$ in $T$. We claim that $F'$ is also a solution to $(T',S',k',q)$. If not, then there exists a $S$-cycle, say $C'$ in $T'-F'$. Now, considering $C'$, we can retrieve a $S$-cycle $C$ in $T-F$ in the following way. Pick an arc $e=(v_{C_1}, v_{C_2} )$ in the cycle $C'$. If the arc $e$ is monochromatic in $T'$, then we get an arc $(u,v)$ for all pair of vertices  $u \in V(C_1)$ and $v \in V(C_2)$ in $T-F$ (as $F$ uses no monochromatic arc). If the arc $e$ is not monochromatic  in $T'$, then we get a directed path from $u$ to $v$ for all pair of vertices  $u \in V(C_1)$ and $v \in V(C_2)$ in $T-F$. Combining all these path and arcs we get a closed walk passing through $S$-vertices in $T-F$. For that closed walk, we can retrieve a $S$-cycle in $T-F$. This is in contradiction to the fact that $F$ is a solution to $(T,S, k, q)$.

The backward direction is analogous to the forward direction.  
\end{proof}

\medskip 

From now onwards,  we are working on the problem \csfast to the instance $(T',S, k, q)$.   For the ease of notation, once the context is clear we use $(T, S,  k, q)$ instead of $(T', S', k', q)$.  For the sake of simplicity we use the same notation which is as follows. For each $i \in [q]$, we use $V_i$ to denote the set of vertices corresponding to color $i$. 

For an integer $t \geq 1$, we define $ V^i_t= \{v_1^i, \ldots, v_t^i\} $ and $ V_0=  V^i_0 = \emptyset$.   Given an integer  vector $\hat{p}= [a_1, a_2, \ldots, a_q] $ of length $ q $ in which the $ i^{\text{th}}$ entry is between $0$ and $n_i$ where $ |V_i|=n_i $,  let $V(\hat{p})= V^1_{a_1} \cup V^2_{a_2} \cup \ldots \cup V^q_{a_q}$. Also let $ T (\hat{p})= T [V(\hat{p})]$ and $A(\hat{p})$ be the arc set of $ T (\hat{p})$. We define $S_p \coloneqq  S \cap V(\hat{p}) $. Let $\mathscr{Z}$ denotes the set of all possible vectors $\{ [a_1, a_2, \ldots, a_q]; a_i \in [n_i], i \in [q] \}$. Clearly, $|\mathscr{Z}| \leq k^{\OO(q)}$.


Our algorithm takes a $q$-colored tournament $T = (V_1 \cup V_2 \cup \ldots \cup V_q, A)$, a vertex subset $S \subseteq V(T)$  of $T$ as input, and produces a   $ S $-fas $F$  such that $|F| \leq k$    no arc of $F$  is  monochromatic. We  define the arc set $A_c$ to be the set of arcs whose endpoints have different colors.   Hence
$T[V_i]$ must be an $S$-acyclic tournament for every $i$.  Let $n_i = |V_i |$ for every
$i$ and let $\hat{n}$ be the vector $[n_1, n_2, \ldots, n_q]$. For every color
class $V_i$ of $T$ , let $v_1^i,  v_2^{i}, \ldots , v_{n_i}^i$ be the order in which the vertices of $V_i$ appear according to $\spclorder$ of  $T[V_i]$. We exploit this fact to give a dynamic programming algorithm for the problem. 







\subparagraph{Dynamic programming}

\begin{lemma}\label{lem:DP}
Given a feasible $ q $-colored tournament $ T $ of size $\mathcal{O}(k^2)$ and vertex subset $ S \subseteq V(T)$, we can find a minimum size
colorful $ S $-feedback arc set in  $k^{\OO(q)}$ time.

\end{lemma}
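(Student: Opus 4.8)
The plan is to recast the search for a minimum colorful $S$-feedback arc set as the search, over all orderings $\sigma$ of $V(T)$ that refine the per-color orderings $\sigma_1,\dots,\sigma_q$, for one minimizing the number of $S$-backward arcs. First I would justify this reduction in both directions. By \Cref{obs:color2}, any valid colorful ordering merges the $\sigma_i$'s, so it suffices to optimize over such \emph{color-respecting} orderings. Conversely, for any color-respecting $\sigma$ its set of $S$-backward arcs is a colorful $S$-fas: colorfulness is immediate because, after the \scc-contraction, every monochromatic arc is forward in $\sigma$ (it respects the DAG order $\sigma_i$), hence no monochromatic arc is $S$-backward; and the reversal is $S$-acyclic because every $S$-cycle must contain an $S$-backward arc. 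The latter I would show by tracing the cycle from one of its $S$-vertices $s$: either an arc incident to $s$ is backward (hence $S$-backward by touching $s$), or the sequence of positions along the cycle starts above the position of $s$ and returns from below, so at some step it drops across the position of $s$, yielding a backward arc whose endpoints straddle $s$. Together with \Cref{lem:minimal_equivalence} and \Cref{obs:ordering} (applied to $T\circledast O$ for a minimal colorful solution $O$) this gives that the minimum colorful $S$-fas equals $\min_{\sigma}$ (number of $S$-backward arcs of $\sigma$) over color-respecting $\sigma$.

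Second, I would make the cost incremental. Building $\sigma$ left to right, when a vertex $x$ is appended at the current rightmost position the only newly created backward arcs are $(x,w)$ with $w$ already placed, and such an arc is $S$-backward exactly when the position interval between $w$ and $x$ contains an $S$-vertex. Since $x$ is rightmost, this means either $x\in S$ (then \emph{all} backward arcs from $x$ count), or $w$ lies at or before the last already-placed $S$-vertex $s^\ast$. The decisive point is that appending a vertex at the end never retroactively changes whether an older arc is $S$-backward, so the total count is exactly the sum of these per-insertion increments, each depending only on $x$, the set of placed vertices, and the cut determined by $s^\ast$.

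Third, I would set up a dynamic program indexed by a \emph{pair} of integer vectors $(\hat p,\hat r)$ with $\hat r\le\hat p$ componentwise, where $V(\hat p)$ is the set of placed vertices and $V(\hat r)$ records the prefix up to and including the last placed $S$-vertex, so that $V(\hat p)\setminus V(\hat r)$ is the current block of non-$S$ vertices and $V(\hat r)=V_{\le s^\ast}$. The entry $D[\hat p,\hat r]$ stores the minimum number of $S$-backward arcs over color-respecting orderings realizing that configuration, with base case $D[\mathbf 0,\mathbf 0]=0$. The transition appends the next vertex $x=v^i_{a_i+1}$ of some color $i$: if $x\notin S$ the incremental cost is $|N^+(x)\cap V(\hat r)|$ and $\hat r$ is unchanged; if $x\in S$ the cost is $|N^+(x)\cap V(\hat p)|$ and the reference vector resets to $\hat p+e_i$. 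Because every increment depends only on the state, the recurrence is well defined, and the optimum is $\min_{\hat r} D[\hat n,\hat r]$ with $\hat n=[n_1,\dots,n_q]$. (A divide-and-conquer recursion that guesses the middle vertex $w$ together with the last $S$-vertex before it, as in the overview, yields the same bound; the two-vector state is simply its memoized form.)

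For the running time: each coordinate of $\hat p$ and of $\hat r$ ranges over $\{0,\dots,n_i\}$ with $\sum_i n_i=|V(T)|=\OO(k^2)$, so the number of admissible vectors is $|\mathscr Z|\le k^{\OO(q)}$ and the number of states is at most $|\mathscr Z|^2\le k^{\OO(q)}$; each state has $O(q)$ transitions whose costs are computable in polynomial time, giving total time $k^{\OO(q)}$. The step I expect to be the main obstacle is precisely the position-dependence of ``$S$-backwardness,'' which is where the argument genuinely departs from the \textsc{FAST} dynamic program of Alon--Lokshtanov--Saurabh: there every backward arc simply counts and a single prefix vector suffices, whereas here an arc's contribution depends on the location of the last $S$-vertex. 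Introducing the second vector $\hat r$ is what makes each increment a function of the state alone, and the crux of the proof is to verify that this two-vector state is simultaneously \emph{sufficient} (it captures the full $S$-backward count with no double counting or omission) and \emph{sound} (every state-respecting accounting corresponds to an actual color-respecting ordering).
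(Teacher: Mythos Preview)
Your proposal is correct and rests on the same key insight as the paper: the number of $S$-backward arcs incurred when extending a prefix depends not only on the prefix $V(\hat p)$ but also on where the last $S$-vertex sits inside it. The paper encodes this with a \emph{single} state vector $\hat p$ and a block transition: its recurrence either appends an $S$-vertex $s$ as the last element (cost $|N^+(s)\cap V(\hat p)|$), or guesses the entire maximal non-$S$ suffix at once via a vector $\hat q$ with $\hat q\perp\hat p$ (meaning $\hat q\le\hat p$ and $V(\hat p)\setminus V(\hat q)$ contains no $S$-vertex), paying $|A^{\gets}(\hat q,\hat p)|$. Your formulation instead carries an explicit second vector $\hat r$ recording the prefix up to the last placed $S$-vertex and appends one vertex at a time. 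The two DPs are in bijection: your pair $(\hat p,\hat r)$ is exactly the paper's pair $(\hat p,\hat q)$ arising in the first branch of its recurrence, and your per-vertex increments telescope to the paper's block cost $|A^{\gets}(\hat q,\hat p)|$. Both yield $k^{\OO(q)}$ states and $k^{\OO(q)}$ time. Your version has the minor advantage that each transition is $O(q)$ rather than a minimization over $k^{\OO(q)}$ choices of $\hat q$, and the state invariant (``$V(\hat p)\setminus V(\hat r)$ is $S$-free and the last element of $V(\hat r)$ is in $S$'') is made explicit; the paper's version keeps the table one-dimensional at the price of a heavier recurrence.
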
 

\begin{proof}
We are given a $q$-colored  tournament $T$ size $\mathcal{O}(k^2)$ and  we have to find a a   $S$-fas $F$ of minimum size such that $F$ is colorful. Consider an integer   vector $\hat{p}= [a_1, a_2, \ldots, a_q] \in \mathscr{Z}$, where $a_i \in [|V_i|=n_i]$. Recall that $V(\hat{p})= V^1_{a_1} \cup V^2_{a_2} \cup \ldots \cup V^q_{a_q}$. $ T (\hat{p})= T [V(\hat{p})]$ and $A(\hat{p})$ be the arc set of $ T (\hat{p})$. Also  $S_p \coloneqq  S \cap V(\hat{p}) $.

Let $\sfas [T] $ be the size of the minimum colorful $ S $-feedback arc set of $ T $. Observe that if a $ q $-colored tournament $ T $ is feasible then so are all induced sub-tournaments of $ T $, and hence the function $\sfas$ is well defined on all induced sub-tournaments of $ T $. For a pair vectors $\hat{p}= [a_1, a_2, \ldots, a_q] $ and $\hat{q}= [b_1, b_2, \ldots, b_q] $ from $\mathscr{Z}$, we say $ \hat{q} $ respects $ \hat{p}$ (denoted by $ \hat{q} \perp \hat{p} $) if  (i) $ b_i \leq a_i $, for all $ i \in [q] $, and (ii) $V(\hat{p}) \smallsetminus V(\hat{q})$ contains no vertex  from $S$. For $\hat{q} \perp \hat{p}$, let  $A^{\gets}(\hat{q}, \hat{p}) \coloneqq  \bigl\{ (u,v)~\colon~u \in V(\hat{p}) \smallsetminus V(\hat{q}),~ v \in V(\hat{q}) \bigr\} $. Our base case occurs when $V(\hat{p}) \cap S = \emptyset$. In this case, we have $ {\sfas}[T(\hat{p})]=0$. We proceed to prove that the following recurrence holds for $\sfas[T(\hat{p})]$ when $V(\hat{p}) \cap S \neq  \emptyset$.

\begin{equation}\label{equ:onee}
	{\sfas}[T(\hat{p})]= \min \biggl\{ \min_{\hat{q} : \hat{q} \perp \hat{p}} \Bigl\{{\sfas}[T(\hat{q})]+ |A^{\gets}(\hat{q}, \hat{p})|\Bigr\} ,  \min_{s \in V(\hat{p})} \Bigl\{{\sfas}[T(\hat{p}-s)]+ |N^{+}(s) \cap V(\hat{p}|\Bigr\}\biggr\}
\end{equation}

\subparagraph{Correctness.}	
The correctness for base case is easy to follow. We now give a proof of correctness of  Recurrence \ref{equ:onee} which hold when $V(\hat{p}) \cap S \neq  \emptyset$.

First we prove that the left hand side is at most the right hand side. We may have two cases. Either there is a vector $\hat{q} \perp \hat{p}$ (1st term) or there is an $S$-vertex $s \in V(\hat{p})$ (2nd term) that  minimizes the right hand side. Firstly we consider that there is a vector $\hat{q} \perp \hat{p}$  that  minimizes the right hand side. Taking the ordering corresponding to the value $\sfas[T(\hat{q})]$	and appending it with $ V(\hat{p}) \smallsetminus V(\hat{q})  $ gives an ordering of of vertices of $T(\hat{p})$ with cost at most ${\sfas}[T(\hat{q})]+ |A^{\gets}(\hat{q}, \hat{p})|$. Secondly we consider that there is an $S$-vertex $s \in V(\hat{p})$  that  minimizes the right hand side. Taking the ordering corresponding to the value $\sfas[T(\hat{p}-s)]$	and appending it with $ s $ gives an ordering of of vertices of $T(\hat{p})$ with cost  ${\sfas}[T(\hat{p}-s)]+ |N^{+}(s) \cap V(\hat{p}|$.

To prove that the right hand side is at most the left hand side, take an optimal
colorful ordering $\sigma$ of $T[\hat{p}]$ and let $v$ be the last vertex of this ordering. There can be two cases: $v \in S$ or $v \notin S$. First, we consider that $v \in S$. Now when $\sigma$ restricted to $T[\hat{p}-s]$ we get an ordering corresponding to  the vertex set  $V(\hat{p}-s)$. As  the  number of $S$-backward arcs with respect to $\sigma$ with head  at $v$  is exactly   $ |N^{+}(s) \cap V(\hat{p}| $,   the value $\sfas[T(\hat{p})]$ is at least the value of ${\sfas}[T(\hat{p}-s)]+ |N^{+}(s) \cap V(\hat{p}|$. Next, we consider that $v \notin S$. As $V(\hat{p}) \cap S \neq  \emptyset$, there is an $S$-vertex $u$ such that $u <_{\sigma} v$ such that there is no $S$-vertex in between $u$ and $v$ in $\sigma$. Let $W= \{w: w\leq_{\sigma}u, w \in V(\hat{p})\}$ and $\hat{q}$ is the vector corresponding to $W$, i.e., $V(\hat{q})=W$. Clearly $\hat{q} \perp \hat{p}$. 
Now when $\sigma$ is is restricted to $W$, we get an ordering corresponding  to the vertex set  $V(\hat{q})$.  As the number of $S$-backward arcs with respect to  $\sigma$ with head at $ V(\hat{p}) \smallsetminus V(\hat{q})  $  is $|A^{\gets}(\hat{q}, \hat{p})|$,  the value $\sfas[T(\hat{p})]$ is at least the value of  ${\sfas}[T(\hat{q})]+ |A^{\gets}(\hat{q}, \hat{p})|$,  completing  the  proof.

The Recurrence \ref{equ:onee} naturally leads to a dynamic programming algorithm for the problem. We build a table containing $\sfas[T(\hat{p})]$ for every $ \hat{p} \in \mathscr{Z} $. Our final answer for this problem is the entry $\sfas[T([n_1, n_2, \ldots, n_q])]$.  There are $k^{\OO(q)}$ table entries,	for each entry it takes $k^{\OO(q)}$ time to compute it giving the $k^{\OO(q)}$ time bound.
\end{proof}

\begin{theorem}\label{theo:sfast}
\sfast can be solved in expected time $ 2^{\OO(\sqrt{k} \log k)}+  n^{\OO(1)} $.
\end{theorem}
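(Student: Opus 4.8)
The plan is to combine the quadratic kernel from \Cref{theo:fastkernel} with the chromatic-coding machinery of Alon, Lokshtanov and Saurabh and the dynamic program of \Cref{lem:DP}. First I would run the kernelization algorithm of \Cref{theo:fastkernel} on the input $(T,S,k)$; this takes $n^{\OO(1)}$ time and produces an equivalent instance on $\OO(k^2)$ vertices, after which the parameter $k$ governs the entire remaining computation. From here on every step operates on the kernelized tournament, whose size is a fixed polynomial in $k$.

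Next I would perform one round of the randomized coloring: color the $\OO(k^2)$ kernel vertices uniformly at random with $q=\sqrt{8k}$ colors. By \Cref{obs:goodcoloring}, if the instance is a \yes instance and $O$ is a fixed solution of size at most $k$, then the probability that no arc of $O$ is monochromatic — i.e.\ that the coloring is \emph{good} — is at least $(2e)^{-\sqrt{k/8}}$. Conditioned on a good coloring, the sought solution is a colorful $S$-fas, so, after checking that each color class is $S$-acyclic and contracting the strongly connected components inside every color class (justified by \Cref{obs:color1} and \Cref{lem:equi}), running the dynamic program of \Cref{lem:DP} returns a colorful $S$-fas of minimum size, which is at most $|O|\le k$; we then answer \yes. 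A single such round costs $k^{\OO(q)}=k^{\OO(\sqrt{k})}=2^{\OO(\sqrt{k}\log k)}$ time.

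To amplify the success probability I would repeat the coloring-plus-DP round $t=\OO\!\left((2e)^{\sqrt{k/8}}\right)=2^{\OO(\sqrt{k})}$ times, declaring \yes as soon as any round finds a colorful $S$-fas of size at most $k$, and \no otherwise. Since each round independently hits a good coloring with probability at least $(2e)^{-\sqrt{k/8}}$, the probability that all $t$ rounds miss is at most $\left(1-(2e)^{-\sqrt{k/8}}\right)^{t}\le e^{-1}$, giving constant one-sided success probability; the algorithm never errs on \no instances, because a colorful $S$-fas of size at most $k$ is in particular an $S$-fas of size at most $k$. The total running time is the polynomial kernelization cost plus the cost of all rounds, namely $n^{\OO(1)}+2^{\OO(\sqrt{k})}\cdot 2^{\OO(\sqrt{k}\log k)}=2^{\OO(\sqrt{k}\log k)}+n^{\OO(1)}$, which also bounds the expected time.

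The bulk of this argument is probability and running-time bookkeeping, since the genuine content has already been delegated to \Cref{theo:fastkernel} and \Cref{lem:DP}; the one place deserving care — and the main obstacle at this stage — is verifying that conditioning on a good coloring really lets the DP recover a solution of size at most $k$. Concretely, one must confirm that contracting the strongly connected components within each color class (\Cref{lem:equi}) preserves the minimum colorful $S$-fas, and that \Cref{obs:color1} guarantees each color class is $S$-acyclic, so that the $\spclorderi$ used by the recurrence is well defined. Should an exponentially-small-error randomized statement be insufficient, I would note that the coloring step can be derandomized using the universal coloring families of Alon, Lokshtanov and Saurabh, trading the expectation for a deterministic $2^{\OO(\sqrt{k}\log k)}+n^{\OO(1)}$ bound.
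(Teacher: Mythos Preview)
Your proposal is correct and follows essentially the same approach as the paper: kernelize via \Cref{theo:fastkernel}, randomly color with $\sqrt{8k}$ colors, and solve the colored instance with the dynamic program of \Cref{lem:DP}, invoking \Cref{obs:goodcoloring} for the success probability. Your write-up is in fact more explicit than the paper's own (very terse) proof about the probability amplification and about the intermediate use of \Cref{obs:color1} and \Cref{lem:equi}; the paper simply cites \Cref{obs:goodcoloring} and \Cref{lem:DP} and states the resulting bound.
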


\begin{proof}
Our algorithm \texttt{ALGO$\_$SFAST} proceeds as described in \cref{fig:algo}. The correctness of the algorithm  follows from	 \cref{lem:DP}. Combining  \cref{obs:goodcoloring}, 	\cref{lem:DP} yields an expected running 	time of $ 2^{\OO(\sqrt{k} \log k)} +  n^{\OO(1)} $ for finding a $ S $-feedback	arc set of size 	at most	$ k $ if one exists.
\end{proof}

\subparagraph{Derandomization.}

For integers $n$, $k$ and $q$, a family $ \mathcal{F} $ of functions from $[n]$ to $ [q] $ is called a {\em universal $ (m, k, r) $-coloring family} if for any graph $ G $ on the vertex set $ [n] $ with at most $k$ edges, there exists a function $ f \in \mathcal{F} $ that properly colors $E(G)$. The follwing fact is known about universal $ (n, k, 2\sqrt{k}) $-coloring family.

\begin{proposition}[Theorem 5.28 \cite{cygan2015parameterized}]\label{prop:uni}
For any $n,k \geq 1$, there exists a universal $ (n, k, \ceil[\big]{2\sqrt{k}}) $-coloring family $ \mathcal{F} $ of size $ 2^{\OO(\sqrt{k} \log k)} \log n$ that can be constructed in time $ 2^{\OO(\sqrt{k} \log k)} n\log n$.
\end{proposition}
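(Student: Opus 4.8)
The plan is to build $\mathcal{F}$ by composing two ingredients: a domain-reduction step based on perfect hashing, and a coloring step on a small domain. First I would observe that any graph $G$ on $[n]$ with at most $k$ edges has at most $2k$ non-isolated vertices, so only these matter for proper edge-colorings. Using a known $(n, 2k, 4k^2)$-splitter (a perfect hash family) $\mathcal{H}$ of functions $h\colon [n]\to [4k^2]$ \cite{cygan2015parameterized}, one obtains, for every such $G$, some $h\in\mathcal{H}$ that is injective on the non-isolated vertices of $G$; this $h$ sends $G$ to an isomorphic graph $G_h$ with the same number of edges on the reduced domain $[4k^2]$. Because the range is quadratic in the set size $2k$, a single random function is injective on a fixed $2k$-set with probability at least $1/2$, so such families have size $k^{\OO(1)}\log n$ and are constructible in time $k^{\OO(1)} n\log n$.

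\textbf{Composition.} Next I would construct a universal coloring family $\mathcal{G}$ on the small domain $[4k^2]$ with $\lceil 2\sqrt{k}\rceil$ colors, and then set $\mathcal{F}=\{\,g\circ h : h\in\mathcal{H},\ g\in\mathcal{G}\,\}$. If $h$ is injective on the non-isolated vertices of $G$ and $g$ properly colors $G_h$, then $g\circ h$ properly colors $G$, so $\mathcal{F}$ is universal. The size of $\mathcal{F}$ is $|\mathcal{H}|\cdot|\mathcal{G}|$, which is $2^{\OO(\sqrt k\log k)}\log n$ provided that $|\mathcal{G}| = 2^{\OO(\sqrt k \log k)}$.

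\textbf{The small-domain family.} The heart of the argument is therefore $\mathcal{G}$. Here I would use the probabilistic coloring bound in the spirit of Proposition~\ref{propo:chromatic}: a uniformly random coloring of $[4k^2]$ with $\lceil 2\sqrt k\rceil$ colors properly colors any fixed $k$-edge graph with probability $2^{-\OO(\sqrt k)}$ (the proof of Proposition~\ref{propo:chromatic} adapts to $\lceil 2\sqrt k\rceil$ colors at the cost of a worse constant in the exponent). Since the number of $k$-edge graphs on $[4k^2]$ vertices is at most $\binom{\binom{4k^2}{2}}{k}=2^{\OO(k\log k)}$, a union bound shows that a family of $t = 2^{\OO(\sqrt k\log k)}$ independent random colorings covers every such graph with positive probability; hence a universal $(4k^2, k, \lceil 2\sqrt k\rceil)$-coloring family of size $2^{\OO(\sqrt k\log k)}$ exists.

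\textbf{Main obstacle.} The delicate point is turning this existence statement into a \emph{deterministic} construction within the claimed budget $2^{\OO(\sqrt k\log k)}$ (independent of $n$). A naive derandomization by the method of conditional expectations, or a greedy set-cover over candidate colorings, would iterate over all $2^{\OO(k\log k)}$ graphs on the small domain and is therefore too slow. To meet the bound one needs either a pessimistic estimator computable without enumerating graphs, or—more in line with the splitter toolkit—a further recursive hashing step that reduces the coloring problem to independent subproblems on $\OO(\sqrt k)$-sized pieces, on which an explicit coloring can be listed cheaply. Once $\mathcal{G}$ is constructed, the final $n\log n$ factor in the running time is essentially the cost of evaluating each composed coloring $g\circ h$ on all of $[n]$.
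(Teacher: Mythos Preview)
The paper does not prove this proposition: it is quoted as Theorem~5.28 of \cite{cygan2015parameterized} and invoked purely as a black box to derandomize Theorem~\ref{theo:sfast}. There is no in-paper argument to compare your proposal against.

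For what it is worth, your two-stage scaffolding---an $(n,2k,\Theta(k^2))$-splitter to shrink the domain, followed by a universal coloring family $\mathcal{G}$ on $[\Theta(k^2)]$, composed to form $\mathcal{F}$---is exactly the route taken in the cited source, and your existence argument for $\mathcal{G}$ via a union bound over all $\le k$-edge graphs on the small domain is sound. You also correctly isolate the one real gap: building $\mathcal{G}$ \emph{explicitly} within the $2^{\OO(\sqrt{k}\log k)}$ budget. In \cite{cygan2015parameterized} (following \cite{DBLP:conf/icalp/AlonLS09}) this is handled not by derandomizing the probabilistic lemma but by a direct combinatorial construction on the small domain: a fixed base partition into $\OO(\sqrt{k})$ blocks, together with an enumeration over all ways of choosing $\OO(\sqrt{k})$ exceptional elements and recoloring them arbitrarily, which produces $k^{\OO(\sqrt{k})}=2^{\OO(\sqrt{k}\log k)}$ colorings listable in the same time. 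So your proposal has the right architecture; the obstacle you flag is resolved in the reference by explicit enumeration rather than by conditional expectations or an additional layer of recursive hashing.
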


Now if we replace the random choices of a coloring in \cref{theo:sfast} with an iteration over the coloring family given by \cref{prop:uni}, we obtain a deterministic subexponential time algorithm for \sfast. 

\sfastheorem*

\section{Conclusion}\label{sec:conclusion}
In this paper, we developed the first quadratic vertex kernel for the \sfast problem. Our kernel used variants of the most well-known reduction rules for {\sc FAST} and introduced two new reduction rules
to identify irrelevant vertices. We believe that the methodology adopted for our kernelization algorithm will also be useful for other problems. 

A non-trivial linear vertex kernel has been established for {\sc FAST}. It would be interesting to design a linear vertex kernel for \sfast that aligns with the bound of {\sc FAST} or to provide a lower bound based on some established conjectures. The same question is also open for {\sc Subset-FVST}.

\bibliography{main}

\end{document}